\documentclass[journal,12pt,onecolumn,draftclsnofoot]{IEEEtran}
\usepackage{amsmath,amssymb,amsthm,mathrsfs,mathtools,booktabs, array,microtype,graphicx}
\usepackage{xcolor}
\usepackage{tikz,pgfplots}
\usetikzlibrary{arrows.meta,positioning,calc,fit,backgrounds,patterns}
\usepgfplotslibrary{groupplots,fillbetween}
\definecolor{figBlue}{HTML}{216A87}
\definecolor{figOrange}{HTML}{B75B30}
\definecolor{figTeal}{HTML}{2B8C7F}
\definecolor{figGray}{HTML}{6B7280}
\pgfplotsset{compat=1.18}
\pgfplotsset{
  paper axis/.style={
    /tikz/font=\footnotesize,
    tick label style={font=\footnotesize},
    label style={font=\footnotesize},
    title style={font=\footnotesize},
    legend style={font=\footnotesize,draw=none,fill=none},
    axis line style={black!65,line width=.45pt},
    tick style={black!65,line width=.4pt},
    major grid style={black!9,line width=.3pt},
    minor grid style={black!5,line width=.25pt},
    tick align=outside,
    scaled ticks=false,
    clip=true,
    every axis plot/.append style={line width=1pt}
  }
}
\tikzset{
  paper arrow/.style={-{Stealth[length=2mm,width=1.4mm]},line width=.75pt},
  paper block/.style={draw=figBlue,line width=.7pt,rounded corners=1.5pt,fill=figBlue!5,align=center,font=\footnotesize},
  paper note/.style={font=\footnotesize,text=figGray,align=center}
}
\usepackage{placeins,balance,adjustbox}
\usepackage[hidelinks]{hyperref}
\newcommand{\paperfigure}[1]{%
  \begin{adjustbox}{max width=\linewidth}#1\end{adjustbox}}

\newtheorem{theorem}{Theorem}
\newtheorem{lemma}{Lemma}
\newtheorem{proposition}{Proposition}

\newtheorem{remark}{Remark}

\newcommand{\enc}{\mathscr{E}}\newcommand{\dec}{\mathscr{D}}\newcommand{\R}{\mathbb R}\newcommand{\C}{\mathbb C}
\newcommand{\T}{\mathsf T}
\renewcommand{\Re}{\mathop{\mathfrak{Re}}\nolimits}
\renewcommand{\Im}{\mathop{\mathfrak{Im}}\nolimits}

\title{It Takes Two Moments: \\ Rigidity and Limits of Continuous Encoders for \\Over-the-Air Computation}
\author{Saeed Razavikia,~\IEEEmembership{Senior Member,~IEEE} \thanks{Saeed Razavikia is with the Department of Network and Systems Engineering, KTH Royal Institute of Technology, Stockholm, Sweden (e-mail: \href{mailto:sraz@kth.se}{sraz@kth.se}).}}
\begin{document}

\maketitle

\begin{abstract}
We study over-the-air computation of functions, such as empirical variance, formed by combining two separately postprocessed sums of locally transformed observations. We consider continuous encoders subject to pointwise peak-power constraints over a single use of a complex multiple-access channel. With four or more transmitters, we identify conditions under which exact noiseless recovery forces every continuous encoder collection into a common affine structure that preserves both sums, reducing infinite-dimensional encoder design to selecting a finite set of coefficients. To quantify resilience against channel noise, we introduce a robustness score that measures the separation of noiseless received aggregates relative to variations in the desired function value. Within this characterized encoder family, we formulate robustness maximization as an exact convex semi-infinite program, eliminating rank constraints without a relaxation gap, and construct a decoder that attains the minimum uniform noise-amplification constant. For several function classes, including empirical variance, we derive closed-form robustness-optimal encoders and scores. Under mild regularity conditions, we also establish sharp limits on the Gaussian mean squared error at high signal-to-noise ratios and show how using the input distribution reduces average error at the expense of uniform robustness.
\end{abstract}

\begin{IEEEkeywords}
Function computation, continuous encoders, estimation, convex optimization, Gaussian channels.
\end{IEEEkeywords}

\section{Introduction}
\IEEEPARstart{O}{ver}-the-air computation harnesses signal superposition over a multiple-access channel to evaluate functions of distributed observations directly, avoiding the need to decode individual messages~\cite{sahin2023survey}. In large networks, concurrent transmission substantially improves spectral efficiency and reduces latency relative to orthogonal access. Specifically, each transmitter $k$ maps its local observation $s_k$ through an encoder $\mathscr E_k$ to a complex symbol. In a single complex channel use, the receiver observes
$$
y=\sum\nolimits_{k=1}^{K}\mathscr E_k(s_k)+z,
$$
where $z$ denotes additive noise. The receiver uses $y$ to estimate the desired function $f(\boldsymbol s)$. The central design challenge is to choose the encoders $\mathscr E_k$ and decoder so that the received superposition preserves the task-relevant information and permits reliable estimation in noise.

Computing sums and linear combinations naturally exploits the additive channel structure~\cite{nazer2007}. Through local preprocessing and receiver postprocessing, analog over-the-air computation extends this principle to \emph{nomographic} functions of real-valued inputs~\cite{golden2013harnessing}, supporting applications in federated learning, distributed inference, and wireless control~\cite{amiri2020federated,yilmaz2025private,park2021optimized}. However, a single real aggregate with continuous processing cannot represent every continuous multivariate function. For empirical variance with $K\ge4$, Theorem~\ref{thm:rigidity} shows that exact computation requires both real dimensions of the complex aggregate. For finite input alphabets, ChannelComp overcomes this limitation by assigning complex constellation points to local inputs so that received sums corresponding to distinct function values remain well separated~\cite{channelcomp2024}. While this distance-based principle enables exact noiseless recovery of arbitrary functions over finite alphabets, discrete constellation design faces a combinatorial explosion in the number of input configurations as the alphabet or network size grows~\cite{sumcomp2025,pyramidsampling2025}.

%-------------------------
\begin{figure*}[t]
 \centering\paperfigure{%
% BEGIN figures/tikz/system_model
% Native TikZ: one synchronized complex multiple-access channel use.
\begin{tikzpicture}[
  x=1cm,y=.82cm,>=Latex,font=\footnotesize,
  signal/.style={draw=black,line width=.85pt,-{Latex[length=2.1mm]}},
  encoder/.style={draw=black,line width=.8pt,rounded corners=2pt,
    fill=figBlue!5,minimum width=1.5cm,minimum height=.76cm},
  sum/.style={circle,draw=black,line width=.85pt,fill=white,
    minimum size=.66cm,inner sep=0pt},
  every node/.style={text=black}
]
  \path[use as bounding box] (-.05,-.42) rectangle (17.7,5.30);
  \filldraw[fill=figGray!5,draw=figGray!35,rounded corners=4pt,line width=.5pt]
    (5.28,1.78) rectangle (9.86,3.88);
  % \node[text=figGray,align=center,text width=3.15cm] at (7.03,4.47)
  %   {Additive multiple-access\\channel};
  \node[text=figGray] at (7.57,1.48) {one synchronized complex use};
  % \node[text=figGray,font=\scriptsize] at (7.57,2.09) {unit effective channel gains};

  \foreach \name/\yy/\idx in {one/4.48/1,two/3.12/2,last/.72/K}{
    \node[anchor=west] (input\name) at (0,\yy) {$s_{\idx}$};
    \node[encoder] (enc\name) at (2.28,\yy) {$\enc_{\idx}(\cdot)$};
    \draw[signal] (input\name.east) -- (enc\name.west);
    \draw[draw=black,line width=.85pt] (enc\name.east) -- (4.38,\yy);
    \node[anchor=south,text=black] at (3.72,\yy+.05) {$\enc_{\idx}(s_{\idx})$};
  }
  \node at (.22,1.96) {$\vdots$};
  \node[text=black] at (2.28,1.96) {$\vdots$};
  \node[align=center] at (2.28,-.1)
    {$\sup_s|\enc_k(s)|^2\le P$};
  \node[anchor=west] at (0,5.5) {$s_k\in[-1,1]$};

  \node[sum] (mac) at (6.17,2.86) {$\sum$};
  \draw[signal] (4.38,4.48) -- (mac.135);
  \draw[signal] (4.38,3.12) -- (mac.175);
  \draw[signal] (4.38,.72) -- (mac.225);
  \node[sum] (add) at (9.08,2.86) {$+$};
  \draw[signal] (mac.east) -- node[above=3pt] {$r_{\enc}(\boldsymbol s)$} (add.west);
  \node[text=figOrange] (noise) at (9.08,5.02) {$z$};
  \draw[draw=figOrange,line width=.9pt,-{Latex[length=2.1mm]}]
    (noise.south) -- (add.north);

  \node[draw=black,line width=.85pt,rounded corners=3pt,fill=figTeal!5,
    minimum width=3.46cm,minimum height=1.05cm,align=center]
    (decoder) at (12.96,2.86)
    {Decoder $\dec$\\[2pt]
     $\widehat f=\dec(y)$};
  \draw[signal] (add.east) -- node[above=3pt] {$y$} (decoder.west);
  \draw[draw=black,line width=.85pt,-{Latex[length=2.1mm]}]
    (decoder.east) -- (16.05,2.86);
  \node[anchor=west] at (16.10,2.86) {$\widehat f$};
  \node[align=center] at (12.98,1.50)
    {$\displaystyle y=\sum_{k=1}^{K}\enc_k(s_k)+z$};
  % \node[align=center,text=figGray] at (12.98,.42)
  %   {desired function $f_{\phi,\psi}=m_2-\psi(m_1)$\\[3pt]
  %    $m_1=K^{-1}\sum_k s_k,\qquad m_2=K^{-1}\sum_k\phi(s_k)$};
\end{tikzpicture}
% END figures/tikz/system_model
}
 \caption{{\color{black} One-use function computation over a complex multiple-access channel. Each transmitter applies a continuous encoder under the pointwise peak constraint. The receiver estimates $f:\mathbb I^K\to\R$ from the noisy superposition. Channel compensation gives unit effective gains; the payload count excludes calibration and synchronization.}}
 \label{fig:system}
\end{figure*}
%-------------------------

In this paper, we address this challenge for continuous inputs by designing continuous encoders $\mathscr E_k:[-1,1]\to\C$ subject to pointwise peak-power constraints. We focus on functions, such as empirical variance, that depend on two additive statistics of locally transformed inputs and can be evaluated in a single complex channel use. We first establish conditions under which exact noiseless computation forces every continuous encoder collection to share a common affine structure that preserves both statistics, reducing the infinite-dimensional functional design problem to optimizing a finite set of complex coefficients. To select these coefficients for reliable recovery in noise, we extend the distance-based design principle of ChannelComp~\cite{channelcomp2024} to continuous domains. Specifically, we introduce a robustness score measuring noiseless received-signal separation relative to differences in the desired function value, derive an exact convex program to maximize this score, and construct decoders that achieve the resulting uniform bound on noise amplification.

\subsection{Contributions}

We connect the structural rigidity required for identifiable computation to convex robust design and sharp asymptotic estimation limits:

\begin{itemize}
\item \textbf{Encoder characterization:} For $K\ge4$ and the class of receiver postprocessing functions specified in Theorem~\ref{thm:rigidity}, we prove that exact noiseless computation forces every continuous encoder collection to take the form $\enc_k(s)=c_k+u\phi_{1,k}(s)+v\phi_{2,k}(s)$, with common coefficients $u,v\in\C$ linearly independent over $\R$. Remarkably, this characterization assumes only encoder continuity, requires neither continuity nor differentiability of the decoder, and accommodates noninvertible local preprocessing functions $\phi_{1,k},\phi_{2,k}$ with affine subintervals.

\item \textbf{Robust encoder design and decoding:}
In Theorem~\ref{thm:sdp}, we formulate robustness maximization over the characterized encoder coefficients as an exact convex semi-infinite program, eliminating nonconvex rank constraints without a relaxation gap. Under the conditions of Theorem~\ref{thm:rigidity}, this program achieves global optimality over all continuous peak-feasible encoders. In Theorem~\ref{thm:decoder}, we use McShane's Lipschitz extension~\cite{mcshane1934} to construct a decoder attaining the smallest uniform noise-amplification constant. We further derive closed-form robustness-optimal quadratic and quartic encoders and numerically certify a near-optimal exponential design.

\item \textbf{Gaussian estimation limits:}
Within the characterized encoder family, we establish sharp first-order mean squared error (MSE) limits as the Gaussian noise power vanishes. Under mild regularity conditions, Proposition~\ref{prop:general-small-noise} characterizes the minimum leading average-MSE coefficient for any fixed encoder and shows that the decoder of Theorem~\ref{thm:decoder} attains it universally across regular moment distributions. Theorem~\ref{thm:identifiable-minimax} establishes the corresponding minimax limit for worst-case MSE across all input vectors and optimizes this limit over the identifiable encoder family.

\end{itemize}

Whereas our robustness criterion controls estimation error uniformly for every input and noise realization, average Gaussian MSE weights local sensitivity by the input distribution, revealing a fundamental design tradeoff between worst-case resilience and prior-aware performance. Finally, we accompany our theoretical results with a Lean~4 formalization, with the complete artifact and verification instructions available at our \href{https://github.com/SaeedRazavikia/Rigidity-and-Limits-of-Continuous-Encoders}{GitHub repository}.

\subsection{Related Work}

Over-the-air computation naturally supports sums and linear combinations, including via lattice-based compute-and-forward~\cite{nazer2007,nazer2011}. Through local preprocessing and receiver postprocessing, analog schemes extend this additive principle to nonlinear nomographic functions such as norms and geometric means of positive inputs~\cite{golden2013harnessing,goldenbaum2015}. Because a single real aggregate with continuous processing cannot represent every continuous multivariate function, broader function classes typically require either multiple aggregates or approximation. For instance, general continuous functions on compact domains can be decomposed into sums of nomographic terms computed over successive aggregation stages~\cite{goldenbaum2015}, approximated via monotone postprocessing~\cite{limmer2015}, or learned using vector representations across multiple subcarriers~\cite{bouzinis2024}.

When inputs are drawn from finite alphabets, alternative digital and algebraic strategies become available. Arbitrary symmetric functions—such as variance, extrema, and medians—can be computed by recovering the input histogram via lattice-based block coding~\cite{jeon2014} or multidimensional symbol constructions~\cite{sahin2026generic}, while type-threshold functions and majority votes can be evaluated through multi-round interactive schemes~\cite{wang2015} or one-bit digital aggregation~\cite{zhu2021onebit}. More generally, ChannelComp designs complex constellations whose received sums distinguish different target outputs, enabling exact noiseless computation of arbitrary functions over finite alphabets~\cite{channelcomp2024}, whereas SumComp leverages algebraic constructions over standard digital modulations for sum computation~\cite{sumcomp2025,towards2025}.

In contrast, we study continuous, real-valued inputs and target functions of two additive statistics, such as empirical variance, over a single complex channel use subject to pointwise peak-power constraints. We establish when exact noiseless recovery forces every continuous encoder collection to preserve both statistics, reducing the infinite-dimensional design problem to a common finite-dimensional family. Adapting the distance-based design principle of ChannelComp~\cite{channelcomp2024} to continuous domains, we then maximize received-signal separation relative to variations in the desired function value to control estimation error in noise.

\subsection{Paper Organization}

We specify the model and desired functions in Section~\ref{sec:model}. In Section~\ref{sec:design}, we characterize the encoders and develop a robust design framework. Next, we establish decoding guarantees in Section~\ref{sec:decoding}, including Gaussian estimation limits in Section~\ref{sec:pareto}. Then, we present explicit examples and numerical evaluations in Section~\ref{sec:examples}, and conclude in Section~\ref{sec:boundaries}, with technical proofs deferred to the appendices.

\subsection{Notation}\label{sec:notation}

We use bold symbols for real vectors and represent a complex symbol $a$ by its in-phase and quadrature components, $\boldsymbol a=(\Re a,\Im a)^\T$. Let $\mathbb I=[-1,1]$ and $\mathbb I^\circ=(-1,1)$. We write $C(\mathbb I)$ for continuous real-valued functions, specifying other codomains explicitly. For $C^1(\mathbb I)$, the derivative must extend continuously and finitely to both endpoints, where derivatives are one-sided.

We use $|\cdot|$ for complex magnitude or the Euclidean norm of the corresponding real vector, $\|\cdot\|_2$ for the Euclidean vector or spectral matrix norm, and $\|\cdot\|_{\mathrm F}$ for the Frobenius norm. We denote the real symmetric $d\times d$ matrices by $\mathbb S^d$ and their positive semidefinite subset by $\mathbb S^d_+$.

\section{System Model}\label{sec:model}

We consider $K$ transmitters communicating with one receiver. Transmitter (node) $k$ observes a real input $s_k\in\mathbb I$, and the receiver wishes to compute a real-valued function of the input vector $\boldsymbol s=(s_1,\ldots,s_K)^\T\in\mathbb I^K$. Transmitter $k$ uses a continuous encoder $\enc_k:\mathbb I\to\C$, applied only to its own input, and sends one complex symbol $x_k:=\enc_k(s_k)$. We write $\enc=(\enc_1,\ldots,\enc_K)$ for the encoder collection and use $\enc(s)$ for a shared encoder when all transmitters use the same map.  We impose the pointwise peak-power constraint
\begin{equation}
  \sup_{s\in\mathbb I}|\enc_k(s)|^2\le P,\qquad k=1,\ldots,K,\quad P>0.
 \label{eq:peak-model}
\end{equation}
Thus, the power upper bound $P$ holds for every possible local observation $s_k$, independently of any input distribution.

We assume synchronized, phase-aligned transmissions with unit effective channel gains. When channel compensation is used to obtain these gains, it must be compatible with the peak constraint in \eqref{eq:peak-model}. Under this effective-channel model, the complex received sample is given by
\begin{equation}
 y=\sum\nolimits_{k=1}^Kx_k+z=r_\enc(\boldsymbol s)+z,
 \label{eq:mac-model}
\end{equation}
where $r_\enc(\boldsymbol s):=\sum_{k=1}^K\enc_k(s_k)$ denotes the noiseless received aggregate and $z$ denotes additive noise.  In the Gaussian setting, we model $z$ as circularly symmetric complex Gaussian noise, independent of the inputs, with zero mean and variance $\sigma_z^2$, i.e., $z\sim\mathcal{CN}(0,\sigma_z^2)$. Its real and imaginary components are independent, each with variance $\sigma_z^2/2$. We define the per-transmitter peak signal-to-noise ratio (SNR) as $P/\sigma_z^2$.

At the receiver, a decoder $\dec:\C\to\R$ maps the received signal to an estimate of the desired function,
$$\hat f=\dec(y).$$
The decoder $\dec$ uses knowledge of the encoders $\enc_k$ and the desired function $f$ to process the in-phase and quadrature components of $y$.

\subsection{Desired Functions}

We specify the computation task through two continuous local preprocessing functions $\phi_{1,k},\phi_{2,k}:\mathbb I\to\R$ at node $k$ and two continuous receiver postprocessing functions $\psi_1,\psi_2$. The local preprocessing functions $\phi_{j,k}$ define two aggregate statistics by averaging their outputs across nodes, and the receiver postprocessing functions $\psi_1$ and $\psi_2$ specify the desired function of these statistics. We design the encoders $\mathscr E_k$ and the decoder to compute this function over the channel. The domain of $\psi_j$ contains every attainable value of the corresponding aggregate statistic, namely the interval
$$
\mathbb J_j=\left[
\frac1K\sum_{k=1}^K\min_{s\in\mathbb I}\phi_{j,k}(s),\,
\frac1K\sum_{k=1}^K\max_{s\in\mathbb I}\phi_{j,k}(s)
\right],\qquad j=1,2.
$$
Continuity ensures that each local preprocessing function $\phi_{j,k}$ takes every value between its minimum and maximum. We study desired functions of the form
\begin{equation}
 f(\boldsymbol s)=\psi_1(m_1)+\psi_2(m_2),
 \label{eq:general-model}
\end{equation}
where $m_j=\frac1K\sum_{k=1}^K\phi_{j,k}(s_k)$ for $j=1,2$. We refer to these two aggregate statistics as \emph{moments}, although the local preprocessing functions need not be powers of the inputs. We write $\boldsymbol m=(m_1,m_2)^\T$ and use $f(\boldsymbol m)=\psi_1(m_1)+\psi_2(m_2)$ when expressing the desired function in moment coordinates. The attainable moment set is
\begin{equation}
 \mathcal M_K=\bigg\{(m_1,m_2)^\T: m_j=\frac1K\sum_{k=1}^K\phi_{j,k}(s_k),~ j=1,2,~ \boldsymbol s\in\mathbb I^K\bigg\}.
 \label{eq:general-moment-set}
\end{equation}
The set $\mathcal M_K$ contains exactly the moment vectors that some input vector can produce. It need not be convex or contain a two-dimensional region: the two statistics may constrain one another. We impose the robustness constraints only on pairs of moment vectors $\boldsymbol m,\boldsymbol m'\in\mathcal M_K$, each generated by an admissible input vector in $\mathbb I^K$. Note that we consider nonconstant desired functions, since a constant function requires no communication.

\begin{remark}
Scalar nomographic computation represents the desired function as
$$
f(\boldsymbol s)=\Phi\!\left(\sum\nolimits_{k=1}^K h_k(s_k)\right),
$$
where $h_k:\mathbb I\to\R$ are local preprocessing functions and $\Phi$ is a receiver postprocessing function~\cite{goldenbaum2015}. Our complex aggregate $r_\enc$ can retain both statistics $m_1,m_2$ in \eqref{eq:general-model}. For empirical variance with $K\ge4$, Theorem~\ref{thm:rigidity} shows that both real dimensions are necessary: no collection of continuous real-valued encoders permits exact computation, even when the encoders differ across nodes.
\end{remark}

\subsection{Identifiability and the Robustness Criterion}

For a continuous, nonconstant desired function $f:\mathbb I^K\to\R$, an encoder collection $\{\mathscr{E}_k\}_{k=1}^K$ achieves \emph{identifiable computation} if and only if 
\begin{equation}
  r_\enc(\boldsymbol s) = r_\enc(\boldsymbol s') 
  \ \Longrightarrow\ 
  f(\boldsymbol s) = f(\boldsymbol s'), 
  \qquad \forall\,\boldsymbol s, \boldsymbol s' \in \mathbb I^K.
  \label{eq:exactness}
\end{equation}
Thus, inputs with the same noiseless aggregate must have the same
desired function value. This condition is independent of the input distribution. Let $\mathcal R_\enc:=r_\enc(\mathbb I^K)$ denote the set of noiseless received aggregates. Identifiability ensures that the noiseless decoder $\dec_0:\mathcal R_\enc\to\R$ is well defined by $\dec_0(r_\enc(\boldsymbol s))=f(\boldsymbol s)$.

The identifiability condition in \eqref{eq:exactness} ensures that inputs with different desired function values produce distinct noiseless aggregates. However, exact noiseless recovery alone does not guarantee robustness to channel noise, as noted in \cite{channelcomp2024}. Robust design must also relate the separation between $r_\enc(\boldsymbol s)$ and $r_\enc(\boldsymbol s')$ to the difference between the desired function values. We express this design principle by requiring the received separation to be
at least proportional to the difference in desired function values.
Specifically, we seek a finite constant $L$ such that
$$
|f(\boldsymbol s)-f(\boldsymbol s')|
\le L\,|r_\enc(\boldsymbol s)-r_\enc(\boldsymbol s')|
\qquad
\text{for all }\boldsymbol s,\boldsymbol s'\in\mathbb I^K.
$$
A smaller admissible $L$ guarantees greater received separation for a given
difference in desired function values and limits noise amplification during
decoding. We therefore seek encoders that minimize the smallest admissible
$L$ under the peak-power constraint~\eqref{eq:peak-model}. In this regard, we define the \emph{robustness score}
\begin{equation}
 \Lambda(\enc;f)
 =\inf_{\substack{\boldsymbol s,\boldsymbol s'\in\mathbb I^K\\
                  f(\boldsymbol s)\ne f(\boldsymbol s')}}
 \frac{|r_\enc(\boldsymbol s)-r_\enc(\boldsymbol s')|^2}
 {|f(\boldsymbol s)-f(\boldsymbol s')|^2}.
 \label{eq:score}
\end{equation}
The condition $\Lambda(\enc;f)>0$ guarantees identifiability in \eqref{eq:exactness} and is equivalent to the existence of a finite bound $L$. Proposition~\ref{prop:amplification} shows that
$\Lambda(\enc;f)^{-1/2}$ is the smallest uniform noise-amplification
constant achievable by a decoder that is exact on every noiseless received aggregate.

For $P>0$, we define the class of continuous peak-feasible encoder
collections by
$$
\mathcal E_P:=\left\{
(\enc_1,\ldots,\enc_K)\in C(\mathbb I;\C)^K:
\sup_{s\in\mathbb I}|\enc_k(s)|^2\le P,
\ k=1,\ldots,K\right\}.
$$
The optimal robustness score over this class is given by
$$
\Lambda^\star(P;f):=
\sup_{\enc\in\mathcal E_P}\Lambda(\enc;f).
$$
We next determine when identifiable computation forces encoders
in $\mathcal E_P$ to preserve both prescribed moments $m_1,m_2$.

\section{Encoder Design}\label{sec:design}

We first characterize the continuous encoders $\enc_k$ that permit identifiable computation of $f$, then optimize their coefficients under the peak-power constraint~\eqref{eq:peak-model}.

\subsection{Encoder Characterization}\label{sec:rigidity}

We now characterize when identifiable computation forces the encoders $\enc_k$
to preserve both prescribed moments $m_1,m_2$. We require each receiver postprocessing function $\psi_j$
to be either affine throughout its domain or nonaffine on every open subinterval. This excludes a receiver postprocessing function $\psi_j$ that is affine on an interval of positive length but is not affine throughout its domain. Analytic functions and continuously differentiable strictly convex or strictly concave functions satisfy this condition.

\begin{theorem}
\label{thm:rigidity}\label{thm:general-rigidity}
Let $K\ge4$. For every node $k$, let
$\phi_{1,k},\phi_{2,k}\in C(\mathbb I)\cap C^1(\mathbb I^\circ)$ and suppose
that $1,\phi_{1,k},\phi_{2,k}$ are linearly independent. For $j=1,2$,
let $\psi_j\in C(\mathbb J_j)\cap C^1(\mathbb J_j^\circ)$ and
suppose that $\psi_j$ is either affine on $\mathbb J_j$ or
nonaffine on every nonempty open subinterval of
$\mathbb J_j^\circ$. If both receiver postprocessing functions $\psi_1$ and $\psi_2$ are nonconstant
and at least one is nonaffine, a collection of continuous encoders
permits identifiable computation if and only if
\begin{equation}
 \mathscr E_k(s)=c_k+u\phi_{1,k}(s)+v\phi_{2,k}(s),\qquad k=1,\ldots,K,
 \label{eq:affine}
\end{equation}
where $c_1,\ldots,c_K,u,v\in\C$ and $u,v$ are linearly independent over $\R$.
\end{theorem}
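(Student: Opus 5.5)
The plan is to prove the two directions separately. Sufficiency is a short computation; the main work is necessity, which I would reduce to a Pexider-type functional equation on the attainable moment set $\mathcal M_K$.

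\emph{Sufficiency.} If \eqref{eq:affine} holds, then
$r_\enc(\boldsymbol s)=\sum_k c_k+K(u\,m_1+v\,m_2)$.
Write $u,v$ as the real vectors $\boldsymbol u,\boldsymbol v$. Since they are linearly independent over $\R$, the real $2\times2$ matrix $[\boldsymbol u\ \boldsymbol v]$ is invertible. Hence $r_\enc(\boldsymbol s)=r_\enc(\boldsymbol s')$ forces $\boldsymbol m(\boldsymbol s)=\boldsymbol m(\boldsymbol s')$, and therefore $f(\boldsymbol s)=f(\boldsymbol s')$. This is exactly \eqref{eq:exactness}.

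\emph{Necessity, step 1: fibers of $r_\enc$ lie inside fibers of $\boldsymbol m$.} Assume \eqref{eq:exactness}, so that $f=\dec_0\circ r_\enc$ on $\mathbb I^K$. I would first show that $r_\enc(\boldsymbol s)=r_\enc(\boldsymbol s')$ implies $\boldsymbol m(\boldsymbol s)=\boldsymbol m(\boldsymbol s')$.

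The idea is to exploit that a fiber of $r_\enc$ is cut out by only two real equations in $K\ge4$ real unknowns. Suppose a fiber contains two points whose moments differ. By continuity of the encoders, I would pass to nearby configurations in which only one or two coordinates move, and obtain a continuum of moment vectors along which $f$, and hence $\psi_1(m_1)+\psi_2(m_2)$, is constant.

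I then plan to rule this out using the dichotomy hypothesis on $\psi_j$ together with the $C^1$ hypotheses:
\begin{itemize}
\item On such a curve, $\psi_1'(m_1)\,dm_1+\psi_2'(m_2)\,dm_2=0$.
\item Linear independence of $1,\phi_{1,k},\phi_{2,k}$ lets me choose directions in which $(dm_1,dm_2)$ is not proportional to a fixed vector.
\item Nonaffinity on every open subinterval prevents the ratio $\psi_1'/\psi_2'$ from being locally constant.
\end{itemize}
Taken together, these should show that such a curve must collapse to a single moment vector. The assumption that both $\psi_j$ are nonconstant is what makes $f$ genuinely two-dimensional in the moments, so a single real direction of information cannot suffice.

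\emph{Step 2: factor $r_\enc$ through the moments.} From step 1, $r_\enc$ is constant on each $\boldsymbol m$-fiber and separates distinct moment vectors. I would show that $r_\enc$ factors as $r_\enc=G(\boldsymbol m)$, where $G:\mathcal M_K\to\C$ is continuous and injective. The idea is to fix all coordinates but one and vary the free coordinate along level sets of $(\phi_{1,k},\phi_{2,k})$. This also shows that each $\enc_k$ depends on $s$ only through $(\phi_{1,k}(s),\phi_{2,k}(s))$, say $\enc_k=g_k(\phi_{1,k},\phi_{2,k})$.

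\emph{Step 3: solve the functional equation.} Additivity then gives
$$G\Big(\tfrac1K\textstyle\sum_k \boldsymbol a_k\Big)=\textstyle\sum_k g_k(\boldsymbol a_k)$$
for $\boldsymbol a_k$ ranging over the curves $\{(\phi_{1,k}(s),\phi_{2,k}(s))\}$. I would use $K\ge4$ to swap increments between two disjoint pairs of coordinates $(i,j)$ and $(i',j')$. This yields that the increments of $G$ are translation invariant on an open subset of the interior of $\mathcal M_K$. Linear independence of $1,\phi_{1,k},\phi_{2,k}$ guarantees that the sum of two such curves has nonempty interior, so this open subset exists. Continuous solutions of the resulting Cauchy/Pexider equation are affine, so $G(\boldsymbol m)=c+K(u\,m_1+v\,m_2)$ on the interior. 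Continuity extends this to all of $\mathcal M_K$, and distributing the constant gives \eqref{eq:affine}. Finally, injectivity of $G$ on a set with nonempty interior forces $u,v$ to be $\R$-linearly independent.

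\emph{Main obstacle.} I expect step 1 to be the hardest, because identifiability only gives the implication from $r_\enc$ to $f$, not from $r_\enc$ to $\boldsymbol m$. Converting it requires a careful local argument that combines the dimension count $K\ge4$, continuity of the encoders (the decoder may be discontinuous), and the affine/nonaffine dichotomy. The dichotomy is what excludes pieces where $\psi_j$ is affine, on which a single real combination of $m_1,m_2$ would suffice. I would attack the case where one $\psi_j$ is affine separately: there $f$ depends on $m$ through a curve family with varying slopes, and a transversality argument at a point where $\psi_1'$ is nonconstant should close that case.
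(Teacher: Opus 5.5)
Your sufficiency argument is correct, but necessity has a genuine gap at Step~1, which is where the content of the theorem lies. The contradiction you are aiming for is not one. A continuum of moment vectors on which $\psi_1(m_1)+\psi_2(m_2)$ is constant is simply a level curve of $f$ (for variance, $m_2-m_1^2=\mathrm{const}$). The relation $\psi_1'\,dm_1+\psi_2'\,dm_2=0$ just describes such a curve, and nothing in your three bullets forces it to collapse. Nor does continuity of the encoders turn one collision $r_\enc(\boldsymbol s)=r_\enc(\boldsymbol s')$ into a family of collisions.

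The missing idea is to spend one node as a probe. Fix nodes $5,\dots,K$, let $d_j$ be their fixed contribution, and consider three-node partial moment sums $\boldsymbol p$. If the three-node encoded sums coincide at $\boldsymbol p\ne\boldsymbol p'$, the coincidence survives every common input $t$ at node~4. Identifiability then gives $C_t(\boldsymbol p)=C_t(\boldsymbol p')$ for all $t\in\mathbb I$, where $C_t(\boldsymbol p)=\sum_j\psi_j\bigl((p_j+d_j+\phi_{j,4}(t))/K\bigr)$. Two probes $t,u$ with $\det[\nabla C_t\ \nabla C_u]\ne0$ make $\boldsymbol p\mapsto(C_t(\boldsymbol p),C_u(\boldsymbol p))$ locally injective, so the encoded sum locally determines $\boldsymbol p$.

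The hypotheses are what make such points dense. Suppose the determinant vanished on an open rectangle, and write $x_t=\phi_{1,4}(t)/K$, $y_t=\phi_{2,4}(t)/K$. One obtains $\psi_1'(\xi+x_t)=r(t)\psi_1'(\xi+x_u)$ and $\psi_2'(\eta+y_t)=r(t)\psi_2'(\eta+y_u)$ with a common ratio $r(t)>0$. These force $|\psi_j'|$ to be exponential with rates $\lambda_j$, and force $\lambda_1\phi_{1,4}-\lambda_2\phi_{2,4}$ to be constant. Independence of $1,\phi_{1,4},\phi_{2,4}$ gives $\lambda_1=\lambda_2=0$, so both $\psi_j$ are affine on open intervals, contradicting the dichotomy. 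Note that this yields only \emph{local} recovery on an open set of partial moments. Your global fiber inclusion cannot be reached this way, since a collision in which no node shares its input cannot be probed; it follows only a posteriori from the affine form.

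Step~2 has a second gap: \emph{$r_\enc$ is constant on each $\boldsymbol m$-fiber} does not follow from Step~1, which gives only the reverse implication ($r_\enc$ determines $\boldsymbol m$). Moving one coordinate along a level set of $(\phi_{1,k},\phi_{2,k})$ only addresses self-intersections of a single node's curve. The $\boldsymbol m$-fibers are $(K-2)$-dimensional and require coordinated moves of several inputs. A topological argument is needed. Parametrize three inputs as $(s_1,\boldsymbol p)$. For fixed $s_1=\tau$, the map $\boldsymbol p\mapsto\mathscr R(\boldsymbol p,\tau)$ is a continuous injection from an open subset of $\R^2$ into $\R^2$, hence open by invariance of domain. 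So for $s_1$ near $\tau$ one has $\mathscr R(\boldsymbol p,s_1)=\mathscr R(\boldsymbol q,\tau)$ for some $\boldsymbol q$, and local recovery forces $\boldsymbol q=\boldsymbol p$.

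Finally, your Pexider argument in Step~3 with two disjoint pairs is a reasonable alternative to the paper's route, but it presupposes a factorization over a sum containing four free nodes. The probe argument directly delivers only a local factorization over three nodes (for $K=4$, node~4 is used up as the probe). That leaves one curve plus an open set, where straight segments of $(\phi_{1,k},\phi_{2,k})$ make chord increments degenerate. This is why the paper first proves $\enc_k\in C^1(\mathbb I^\circ)$ by averaging the factorization against a smooth weight. It then extracts the common coefficients from vanishing $3\times3$ determinants of derivative columns.
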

\begin{proof}
   See Appendix~\ref{app:rigidity-proof}.
\end{proof}

Under \eqref{eq:affine}, the noiseless received aggregate satisfies
$r_\enc(\boldsymbol s)
=\sum_{k=1}^K c_k+K(um_1+vm_2).$
The real linear independence of $u$ and $v$ means that their in-phase and quadrature coordinate vectors form an invertible $2\times2$ matrix. The receiver can therefore recover $m_1$ and $m_2$ from the noiseless complex aggregate. The necessity statement is stronger than the original computation requirement.  Although the receiver needs only the scalar $\psi_1(m_1)+\psi_2(m_2)$, every continuous encoder collection permitting its exact recovery must produce a noiseless aggregate from which both $m_1$ and $m_2$ can be recovered. We do not assume differentiable encoders, impose regularity on the decoder, or require either local preprocessing function to be invertible.

The linear independence of $1,\phi_{1,k},\phi_{2,k}$ ensures that the two local statistics $\phi_{1,k}(s_k)$ and $\phi_{2,k}(s_k)$ are nonconstant and that neither can be expressed as an affine function of the other for all $s_k\in\mathbb I$. In Appendix~\ref{app:rigidity-proof}, we show how this assumption and exact function recovery force the common encoder coefficients in \eqref{eq:affine}.

\begin{remark}\label{rem:rigidity-smoothness}
Smoothness of $\psi_1,\psi_2$ alone does not guarantee the encoder
characterization in Theorem~\ref{thm:rigidity}. Affine regions can
permit exact computation without preserving both moments $(m_1,m_2)$.
Appendix~\ref{app:rigidity-proof} demonstrates this with
$C^\infty$ receiver postprocessing functions and a peak-feasible encoder outside
the family \eqref{eq:affine}. For receiver postprocessing functions $\psi_1,\psi_2$ with both
affine and nonaffine regions, our optimization results remain valid
within \eqref{eq:affine}, but optimality over all continuous
encoders requires a separate necessity argument.
\end{remark}

\subsection{Optimal Design}\label{sec:synthesis}

We optimize encoders of the form \eqref{eq:affine}. For a chosen
coefficient representation, we define
\begin{subequations}
\begin{align}
c&:=\frac1K\sum_{k=1}^Kc_k,
\quad \mathsf U:=
\begin{pmatrix}
 \Re (u) & \Re (v)\\
 \Im (u) & \Im (v)
\end{pmatrix},\\
G&:=\mathsf U^\T\mathsf U,
\quad H_k:=[\boldsymbol c_k\ \mathsf U]^\T
     [\boldsymbol c_k\ \mathsf U],
\end{align}
\end{subequations}
where each $H_k$ is the \emph{Gram matrix} of the three vectors
$\boldsymbol c_k,\boldsymbol u,\boldsymbol v\in\R^2$, so
$H_k\succeq0$ and $\operatorname{rank}(H_k)\le2$.
The rank constraint reflects the two real dimensions available in one complex channel use. Imposing this constraint makes optimization over the Gram matrices nonconvex.

For coefficient design, we first allow arbitrary complex coefficients in the algebraic form \eqref{eq:affine}, including linearly dependent $u,v$. We impose invertibility separately and define
$$
\begin{aligned}
\mathcal B_P^{\mathrm{all}}
&:=\{\enc\in\mathcal E_P:
\enc\text{ admits }\eqref{eq:affine}\},\\
\mathcal B_P
&:=\{\enc\in\mathcal B_P^{\mathrm{all}}:
\enc\text{ admits }\eqref{eq:affine}\text{ with }\det\mathsf U\ne0\}.
\end{aligned}
$$
The coefficient representation must hold for every node and every $s\in\mathbb I$.
In each coefficient design, $G$ and $H_k$ refer
to the selected representation. Every member of $\mathcal B_P$
preserves both moments. We also retain singular representations in
$\mathcal B_P^{\mathrm{all}}$ to include functions that can be
computed from one real aggregate, such as desired functions
depending on only one prescribed moment.
The common block $G$ determines noiseless received-signal separation, while $H_k$ also includes the offset needed to satisfy the peak-power
constraint~\eqref{eq:peak-model} at node $k$.

\begin{theorem}
\label{thm:sdp}\label{thm:general-secant}
Let $K\ge2$ and $P>0$. Assume that all local preprocessing functions $\phi_{j,k}$ and both receiver postprocessing functions
$\psi_1,\psi_2$ are continuous on their prescribed domains and that
$f$ is nonconstant on $\mathcal M_K$. Then
$$
 \max_{\enc\in{ \mathcal B_P^{\mathrm{all}}}}\Lambda(\enc;f)
 =K^2\lambda_\star,
$$
where $\lambda_\star$ is the attained optimum of the convex program
\begin{subequations}
\label{eq:general-secant-program}
\begin{align}
\nonumber
 \underset{\lambda\ge0,\ H_1,\ldots,H_K\in\mathbb S^3}
 {\operatorname{maximize}}
 &\quad \lambda\\ \label{eq:general-secant-program(a)}
 \text{subject to}\quad
 &H_k\succeq0,\quad (H_k)_{2:3,2:3}=G,~ \forall\,k
       \\ \label{eq:general-secant-program(b)}
 &(1,\phi_{1,k}(s),\phi_{2,k}(s))H_k
       (1,\phi_{1,k}(s),\phi_{2,k}(s))^\T\le P
       &&(s\in\mathbb I,\forall\,k),\\ \label{eq:general-secant-program(c)}
 &(\boldsymbol m-\boldsymbol m')^\T
   G(\boldsymbol m-\boldsymbol m')
 \ge\lambda |f(\boldsymbol m)-f(\boldsymbol m')|^2
       &&(\boldsymbol m,\boldsymbol m'\in\mathcal M_K).
\end{align}
\end{subequations}
Here, $G=(H_1)_{2:3,2:3}$ is the common lower-right block.
Moreover, $\lambda_\star>0$ if and only if $f$ is Lipschitz
on $\mathcal M_K$. 
Under the assumptions of
Theorem~\ref{thm:rigidity}, we also have
$\Lambda^\star(P;f)=K^2\lambda_\star$, so the same program
optimizes robustness over the full class $\mathcal E_P$.
\end{theorem}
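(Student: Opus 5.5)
Fix an encoder $\enc\in\mathcal B_P^{\mathrm{all}}$ with a representation $\enc_k(s)=c_k+u\phi_{1,k}(s)+v\phi_{2,k}(s)$. I first express $\Lambda$ and the peak constraint through the Gram matrices. For this representation $r_\enc(\boldsymbol s)=\sum_kc_k+K\,\mathsf U\boldsymbol m$ in real coordinates, so $|r_\enc(\boldsymbol s)-r_\enc(\boldsymbol s')|^2=K^2(\boldsymbol m-\boldsymbol m')^\T G(\boldsymbol m-\boldsymbol m')$. Since $f(\boldsymbol s)$ depends on $\boldsymbol s$ only through $\boldsymbol m\in\mathcal M_K$, the infimum in \eqref{eq:score} equals $K^2$ times the largest $\lambda$ for which \eqref{eq:general-secant-program(c)} holds. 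Likewise $|\enc_k(s)|^2=|\boldsymbol c_k+\mathsf U(\phi_{1,k}(s),\phi_{2,k}(s))^\T|^2=(1,\phi_{1,k},\phi_{2,k})H_k(1,\phi_{1,k},\phi_{2,k})^\T$, which gives \eqref{eq:general-secant-program(b)}. Hence every encoder in $\mathcal B_P^{\mathrm{all}}$ yields a feasible point of \eqref{eq:general-secant-program} with $\lambda=\Lambda/K^2$, and so $\sup\Lambda\le K^2\lambda_\star$.

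For the reverse inequality I drop the rank constraint without loss. Take a feasible $(\lambda,H_1,\dots,H_K)$. Since $G\succeq0$ is $2\times2$, write $G=\mathsf U^\T\mathsf U$ with $\mathsf U\in\R^{2\times2}$, for instance $\mathsf U=G^{1/2}$. The condition $H_k\succeq0$ with lower-right block $G$ means, by the Schur complement with pseudoinverse, that $H_k=\begin{pmatrix}a_k&b_k^\T\\ b_k&G\end{pmatrix}$ with $b_k\in\operatorname{range}(G)$ and $a_k\ge b_k^\T G^{+}b_k$. Set $b_k=\mathsf U^\T\boldsymbol c_k$ with $\boldsymbol c_k:=(\mathsf U^\T)^{+}b_k$, which is possible because $\operatorname{range}(G)=\operatorname{range}(\mathsf U^\T)$. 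Then $H_k=\tilde H_k+(a_k-|\boldsymbol c_k|^2)e_1e_1^\T$, where $\tilde H_k=[\boldsymbol c_k\ \mathsf U]^\T[\boldsymbol c_k\ \mathsf U]$ and $a_k-|\boldsymbol c_k|^2\ge0$. Replacing $H_k$ by $\tilde H_k$ only decreases the left side of \eqref{eq:general-secant-program(b)} and leaves $G$ unchanged. The resulting encoder therefore lies in $\mathcal B_P^{\mathrm{all}}$ and attains $\Lambda=K^2\lambda$. This is the step that removes the rank constraint without a relaxation gap, and I expect it to be the technical core. One must handle a singular $G$ carefully, and the pseudoinverse argument does so.

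Attainment and convexity come next. Each constraint is linear in $(\lambda,H)$ or is a PSD constraint, so the feasible set is closed and convex. The constraint $H_k\succeq0$ together with \eqref{eq:general-secant-program(b)} bounds $H_k$. Since $1,\phi_{1,k},\phi_{2,k}$ need not be independent here, I bound instead the quantities that matter: by Theorem-free reasoning $\mathsf U$ is bounded because $|\enc_k|\le\sqrt P$ on $\mathbb I$. Alternatively I restrict to the reduced rank-two form, in which the encoder values are bounded, and note that $\lambda$ is bounded because $f$ is nonconstant. There are $\boldsymbol m\ne\boldsymbol m'$ with $f(\boldsymbol m)\ne f(\boldsymbol m')$, so $\lambda\le(\boldsymbol m-\boldsymbol m')^\T G(\boldsymbol m-\boldsymbol m')/|f(\boldsymbol m)-f(\boldsymbol m')|^2$ is bounded once $G$ is. Hence a maximizing sequence has a convergent subsequence whose limit is feasible, and the maximum is attained. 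Honestly bounding $G$ requires that the moment differences span the directions on which $G$ acts, and I would argue this through compactness of the set of achievable encoders in coefficient space modulo the kernel.

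For the Lipschitz characterization, suppose $f$ is $L$-Lipschitz on $\mathcal M_K$. Choose $\mathsf U=\epsilon I$ and $\boldsymbol c_k=-\epsilon\,(\text{midpoint of the ranges of }\phi_{j,k})$ with $\epsilon$ small enough for peak feasibility. Then $G=\epsilon^2I$ and $\lambda=\epsilon^2/L^2>0$. Conversely, $\lambda_\star>0$ gives $|f(\boldsymbol m)-f(\boldsymbol m')|^2\le\|G\|_2|\boldsymbol m-\boldsymbol m'|^2/\lambda_\star$. Finally, under Theorem~\ref{thm:rigidity} every identifiable encoder in $\mathcal E_P$ lies in $\mathcal B_P\subseteq\mathcal B_P^{\mathrm{all}}$, and non-identifiable ones have $\Lambda=0$. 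Therefore $\Lambda^\star(P;f)=K^2\lambda_\star$.
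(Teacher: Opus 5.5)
Your Gram-matrix rewriting is correct and follows the paper. So is the generalized Schur-complement reconstruction: with $\mathsf U=G^{1/2}$, your $\boldsymbol c_k=(\mathsf U^\T)^{+}b_k$ is exactly the paper's $G^{\dagger/2}\boldsymbol\ell_k$. The Lipschitz equivalence and the extension to $\mathcal E_P$ are likewise correct and match the paper.

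The gap is attainment, which the statement asserts twice: the $\max$ over $\mathcal B_P^{\mathrm{all}}$ and the attained optimum $\lambda_\star$. This theorem does not assume that $1,\phi_{1,k},\phi_{2,k}$ are linearly independent, and without that assumption neither bound you invoke holds. The constraints $H_k\succeq0$ together with the peak constraints do not bound $H_k$, and $|\enc_k|\le\sqrt P$ does not bound $\mathsf U$. For instance, take $\phi_{1,k}\equiv1$ and $\phi_{2,k}(s)=s$ at every node. The coefficients $c_k=-u$, $|v|^2\le P$, and arbitrary $u\in\C$ all give the peak-feasible encoder $\enc_k(s)=vs$, with the same score for every $u$. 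A maximizing sequence may therefore have $\|\mathsf U_n\|_2$, $|c_{k,n}|$, and $\|H_{k,n}\|_2$ all diverging. Your claim that a maximizing sequence has a convergent subsequence is thus unsupported. Your closing remark about compactness ``modulo the kernel'' names the needed idea but does not carry it out. Note also that $G$ itself need not be bounded; only its action on moment differences is.

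The missing step is a renormalization that removes directions invisible in the transmitted symbols before passing to a limit. The paper proceeds as follows:
\begin{itemize}
\item Let $\mathcal V$ be the span of the local variations $\boldsymbol a_k(s)-\boldsymbol a_k(t)$, where $\boldsymbol a_k=(\phi_{1,k},\phi_{2,k})^\T$; equivalently, $\mathcal V$ is the span of all moment differences.
\item Replace $\mathsf U$ by $\mathsf U\mathsf Q$, where $\mathsf Q$ is the orthogonal projector onto $\mathcal V$, and move the node-wise constant term $\mathsf U(I_2-\mathsf Q)\boldsymbol a_k(s_0)$ into $\boldsymbol c_k$. Every transmitted symbol, and hence every constraint, is unchanged.
\item For a basis $\boldsymbol d_i$ of $\mathcal V$ made of such variations, $\mathsf U\mathsf Q\boldsymbol d_i$ is a difference of two transmitted symbols. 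So $\|\mathsf U\mathsf Q\|_2$ is bounded in terms of $\sqrt P$ and the fixed basis.
\item The peak constraint at $s_0$ then bounds the new offsets.
\item Moment differences satisfy $\boldsymbol d^\T G\boldsymbol d\le4P$, so $\lambda\le4P/|f(\boldsymbol m)-f(\boldsymbol m')|^2$ for any fixed pair with $f(\boldsymbol m)\ne f(\boldsymbol m')$.
\end{itemize}
Compactness and pointwise passage to the limit in the peak and separation constraints then finish the argument.

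An equally valid way to make your remark precise is to work with encoder collections rather than coefficients. The affine family is a finite-dimensional subspace of $C(\mathbb I;\C)^K$, so its peak-feasible part is compact in the sup norm. Moreover, $\Lambda(\cdot\,;f)$ is an infimum of functions continuous in $\enc$, hence upper semicontinuous, so it attains its maximum on that set. Any representation of a maximizer is then a program point with $\lambda=\lambda_\star$.
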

\begin{proof}
See Appendix~\ref{app:general-robust}.
\end{proof}

Theorem~\ref{thm:sdp} eliminates the nonconvex rank constraints
$\operatorname{rank}(H_k)\le2$ without a relaxation gap. The resulting
program has $3K+4$ real decision variables and does not require
convexity of $\mathcal M_K$, linear independence of the local
functions $\phi_{1,k},\phi_{2,k}$, or differentiability of the receiver postprocessing functions $\psi_1,\psi_2$.
The program is semi-infinite because it has finitely many design variables but imposes power and received-separation inequalities over continuous sets of inputs and pairs of moment vectors. We reduce these constraints to finitely many conditions for the polynomial examples in Section~\ref{sec:quadratic-quartic}; such a reduction is not guaranteed in general.

To reconstruct the optimal encoders $\enc_k^\star$, we partition each optimal Gram matrix as
$H_k^\star=\left(
\begin{smallmatrix}
h_k^\star & (\boldsymbol{\ell}_k^\star)^\mathsf{T} \\
\boldsymbol{\ell}_k^\star & G^\star
\end{smallmatrix}
\right)
$. 
We choose $\mathsf{U}^\star=(G^\star)^{1/2}$ and compute the offsets as
$\boldsymbol{c}_k^\star=(G^\star)^{\dagger/2}\boldsymbol{\ell}_k^\star$,
where $\dagger$ denotes the Moore--Penrose inverse. Here $(G^\star)^{\dagger/2}$ is the positive semidefinite square root of $(G^\star)^\dagger$. This reconstruction retains the same squared received distance for every pair of moment vectors and cannot increase the transmitted power at any local input. Appendix~\ref{app:general-robust} establishes these two properties directly. When
$G^\star\succ0$, we may instead use any factor satisfying
$(\mathsf{U}^\star)^\mathsf{T}\mathsf{U}^\star=G^\star$ and set
$\boldsymbol{c}_k^\star=(\mathsf{U}^\star)^{-\mathsf{T}}
\boldsymbol{\ell}_k^\star$. The columns of $\mathsf{U}^\star$ and
the vectors $\boldsymbol{c}_k^\star$ determine the complex coefficients
$u^\star$, $v^\star$, and $c_k^\star$, respectively.

The final constraints in \eqref{eq:general-secant-program(c)} compare function values and received signals at every pair of attainable moment vectors. For $\lambda>0$, these pairwise, or secant, constraints limit noise amplification for all inputs. A local gradient bound cannot generally replace the pairwise constraints in \eqref{eq:general-secant-program(c)}. To relate the pairwise constraints to local gradient bounds, fix an encoder collection of the form \eqref{eq:affine} with $G\succ0$, and suppose that $f$ is continuously differentiable on a neighborhood of a convex set $\mathcal C\supseteq\mathcal M_K$. With
$\boldsymbol w(\boldsymbol m):=
(\psi_1'(m_1),\psi_2'(m_2))^\T$, we obtain
\begin{equation}
\sup_{\boldsymbol m\in\operatorname{int}\mathcal M_K}
\boldsymbol w(\boldsymbol m)^\mathsf T G^{-1}\boldsymbol w(\boldsymbol m)
\le \frac{K^2}{\Lambda(\enc;f)}\le \sup_{\boldsymbol m\in\mathcal C}
\boldsymbol w(\boldsymbol m)^\mathsf T G^{-1}\boldsymbol w(\boldsymbol m) =:B_{\mathcal C}(G). 
\label{eq:general-gradient-sandwich}
\end{equation}
We interpret the leftmost supremum as zero when
$\operatorname{int}\mathcal M_K=\varnothing$. If $\mathcal M_K$ is
compact, convex, and has nonempty interior, choosing
$\mathcal C=\mathcal M_K$ makes both inequalities equalities.
If the moment set $\mathcal M_K$ has no two-dimensional interior, a gradient bound that includes all directions in $\R^2$ may include perturbations that no input vector can produce and need not be sharp. Appendix~\ref{app:gradient-bounds} proves
\eqref{eq:general-gradient-sandwich} and gives a counterexample showing that using only the gradients at $m_1=\pm1$ can overestimate the robustness score $\Lambda(\enc;f)$.

\begin{remark}[Numerical certification]\label{remark:certification}
In Section~\ref{sec:quadratic-quartic}, we derive analytical
quadratic and quartic designs satisfying all constraints in
\eqref{eq:general-secant-program}. For the function class in Lemma~\ref{prop:score}, including the exponential example, the pairwise received-separation constraints reduce to two matrix inequalities. We verify these inequalities for the exponential design using interval arithmetic. We also use interpolation bounds to enforce the
peak constraints between grid points. These checks certify
feasibility, while primal and dual certificates provide lower
and upper bounds on $\lambda_\star$; see Appendix~\ref{app:example-proofs} (in particular, Appendices~\ref{sec:certified} and~\ref{app:certificate-witness-bounds}).
\end{remark}

\section{Decoding}\label{sec:decoding}

Theorem~\ref{thm:rigidity} characterizes encoders that enable exact recovery in the absence of noise. In the presence of additive noise, however, the received signal $y$ may fall outside the set of noiseless values $\mathcal R_\enc$. We therefore construct a decoder $\dec:\C\to\R$ that is exact without noise and minimizes uniform noise amplification. 

For a fixed encoder collection $\enc$, a decoder $\dec:\C\to\R$ has uniform noise-amplification constant $L\ge0$ if
\begin{equation}
|\dec(r_\enc(\boldsymbol s)+z)-f(\boldsymbol s)|
\le L|z|,
\qquad
\forall\,\boldsymbol s\in\mathbb I^K,\ \forall\,z\in\C.
\label{eq:uniform}
\end{equation}
This deterministic guarantee holds for every noise realization $z$ and requires no assumption on its distribution. Whenever $|z|\le\varepsilon$, the absolute decoding error is at most $L\varepsilon$. Setting $z=0$ also enforces exact recovery under noiseless conditions. We note that the uniform bound in \eqref{eq:uniform} is closely related to deterministic noise-robust recovery guarantees for linear inverse problems~\cite{bourrier2014fundamental}; here, we measure the error in the desired function value.

For the prescribed desired function $f$, we define
$$
L_\enc:=\inf\left\{L\ge0:
\text{there exists }\dec:\C\to\R
\text{ satisfying }\eqref{eq:uniform}\right\},
$$
with $\inf\varnothing:=+\infty$. Thus, $L$ denotes an admissible
uniform bound and $L_\enc$ denotes the best achievable constant
for the fixed encoder collection $\enc$. We suppress its dependence on the
prescribed function $f$. Proposition~\ref{prop:amplification}
relates $L_\enc$ to the robustness score $\Lambda(\enc;f)$ in \eqref{eq:score}.

%----------------------------------
\begin{figure}[t]
 \centering\paperfigure{%
% BEGIN figures/tikz/decoder_geometry
% Native TikZ: exact K=4 quadratic moment region in normalized received coordinates.
% a=sqrt(b(2-b)), b=2/sqrt(3), P-normalized coordinates xi_1=Re(y)/(K sqrt(P)),
% xi_2=Im(y)/(K sqrt(P)); all coordinates use the same scale (Euclidean geometry).
\begin{tikzpicture}[
  x=1.90cm,y=1.90cm,>=Latex,font=\footnotesize,
  axis/.style={draw=figGray!70,line width=.55pt,-{Latex[length=1.6mm]}},
  every node/.style={text=black}
]
  \pgfmathsetmacro{\bb}{2/sqrt(3)}
  \pgfmathsetmacro{\aa}{sqrt(\bb*(2-\bb))}
  \path[use as bounding box] (-1.52,-1.36) rectangle (2.19,1.31);
  % The strip is unbounded vertically; dashed edges continue beyond the view.
  \fill[figBlue!5] (-\aa,-1.14) rectangle (\aa,1.12);
  \draw[figBlue!45,dashed,line width=.7pt] (-\aa,-1.14) -- (-\aa,1.12);
  \draw[figBlue!45,dashed,line width=.7pt] (\aa,-1.14) -- (\aa,1.12);
  % Exact attainable region: lower boundary m_2=m_1^2, four upper-bound arcs.
  \path[fill=figTeal!22,draw=figTeal,line width=.75pt]
    plot[domain=-1:1,samples=81,variable=\m]
      ({\aa*\m},{\bb*\m*\m-1})
    -- plot[domain=1:.5,samples=31,variable=\m]
      ({\aa*\m},{\bb*(3+(4*\m-3)^2)/4-1})
    -- plot[domain=.5:0,samples=31,variable=\m]
      ({\aa*\m},{\bb*(3+(4*\m-1)^2)/4-1})
    -- plot[domain=0:-.5,samples=31,variable=\m]
      ({\aa*\m},{\bb*(3+(4*\m+1)^2)/4-1})
    -- plot[domain=-.5:-1,samples=31,variable=\m]
      ({\aa*\m},{\bb*(3+(4*\m+3)^2)/4-1}) -- cycle;
  \draw[axis] (-1.26,-1.16) -- (1.83,-1.16);
  \draw[axis] (-1.26,-1.16) -- (-1.26,1.13);
  \node[anchor=west] at (-1.25,1.24) {$\xi_2=\Im(y)/(K\sqrt P)$};
  \node[anchor=north east,font=\scriptsize] at (1.85,-1.23)
    {$\xi_1=\Re(y)/(K\sqrt P)$};
  \foreach \zz in {-1,0}{
    \draw[figGray!70,line width=.45pt] (-1.29,\zz) -- (-1.23,\zz);
    \node[anchor=east,font=\scriptsize] at (-1.29,\zz) {$\zz$};
  }
  \foreach \xx/\txt in {-\aa/-a,0/0,\aa/a}{
    \draw[figGray!70,line width=.45pt] (\xx,-1.13) -- (\xx,-1.19);
    \node[anchor=north,font=\scriptsize] at (\xx,-1.18) {$\txt$};
  }
  \node[text=figBlue,align=center,font=\scriptsize] at (-.05,.92)
    {containing strip\\$|\xi_1|\le a$};
  \node[text=figTeal!80!black,font=\scriptsize,align=center] at (-.15,-.34)
    {attainable region\\$\mathcal R_{\enc}/(K\sqrt P)$};
  % A valid K=4 moment pair is (m_1,m_2)=(.65,.61), so its received point is feasible.
  \coordinate (r) at ({.65*\aa},{.61*\bb-1});
  \coordinate (y) at (1.49,.61);
  \coordinate (p) at (\aa,.61);
  \draw[draw=figOrange,line width=.85pt,-{Latex[length=1.8mm]}]
    (r) -- node[pos=.30,right=3pt,text=figOrange,font=\scriptsize]
      {$z/(K\sqrt P)$} (y);
  \draw[draw=figBlue,line width=1pt,-{Latex[length=1.8mm]}] (y) -- (p);
  \fill[figTeal] (r) circle[radius=.025];
  \fill[figOrange] (y) circle[radius=.025];
  \fill[figBlue] (p) circle[radius=.025];
  \draw[figBlue,line width=.6pt] (\aa,.68) -- ({\aa+.07},.68)
    -- ({\aa+.07},.61);
  \node[anchor=north west,font=\scriptsize] at ($(r)+(.08,-.08)$)
    {$r_{\enc}(\boldsymbol s)/(K\sqrt P)$};
  \node[anchor=south west,text=figOrange,font=\scriptsize] at ($(y)+(.015,.045)$)
    {$\boldsymbol\xi$};
  \node[anchor=south east,text=figBlue,font=\scriptsize] at ($(p)+(-.025,.055)$)
    {$\Pi_{\mathcal S_r}(\boldsymbol\xi)$};
\end{tikzpicture}
% END figures/tikz/decoder_geometry
}
 \caption{Projection decoding for the optimal variance encoder with $K=4$, $b=2/\sqrt3$, and $a=\sqrt{b(2-b)}$. We plot the normalized received vector $\boldsymbol\xi=(\Re y,\Im y)^\top/(K\sqrt P)$ using equal axis scales. The decoder projects $\boldsymbol\xi$ onto the strip $\mathcal S_r=\{(\xi_1,\xi_2):|\xi_1|\le a\}$, recovers $\widehat m_1,\widehat m_2$, and evaluates $\widehat m_2-\widehat m_1^2$. The strip contains all normalized noiseless received points, but the projected point need not be attainable.}
 \label{fig:decoder-geometry}
\end{figure}
%----------------------------------

\begin{proposition}\label{prop:amplification}
Let $K\ge1$, let $\enc$ be a collection of continuous encoders, and let $f:\mathbb I^K\to\R$ be continuous and nonconstant. Then
$${ L_\enc}=\Lambda(\enc;f)^{-1/2},$$
with $0^{-1/2}=+\infty$. A continuous decoder attains the infimum whenever it is finite.
\end{proposition}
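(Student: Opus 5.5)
We prove the two inequalities $L_\enc\ge\Lambda(\enc;f)^{-1/2}$ and $L_\enc\le\Lambda(\enc;f)^{-1/2}$ separately. For the second, we build an explicit continuous decoder by Lipschitz extension.

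\emph{Lower bound.} Let $\dec$ satisfy \eqref{eq:uniform} with constant $L$, and fix $\boldsymbol s,\boldsymbol s'\in\mathbb I^K$ with $f(\boldsymbol s)\ne f(\boldsymbol s')$. Set $r=r_\enc(\boldsymbol s)$ and $r'=r_\enc(\boldsymbol s')$, and consider the point $y=r+t(r'-r)$ on the segment joining them, with $t\in[0,1]$. Applying \eqref{eq:uniform} at $\boldsymbol s$ with $z=t(r'-r)$, and at $\boldsymbol s'$ with $z=(t-1)(r'-r)$, gives
$$|f(\boldsymbol s)-f(\boldsymbol s')|\le |\dec(y)-f(\boldsymbol s)|+|\dec(y)-f(\boldsymbol s')|\le L\bigl(t+(1-t)\bigr)|r-r'|=L|r-r'|.$$
If $r=r'$, this forces $f(\boldsymbol s)=f(\boldsymbol s')$, which is a contradiction. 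Hence no finite $L$ exists when identifiability fails. In that case $\Lambda=0$ and $L_\enc=+\infty$, consistent with the convention $0^{-1/2}=+\infty$. Otherwise, dividing and taking the infimum over such pairs gives $L^{-2}\le\Lambda(\enc;f)$, that is, $L\ge\Lambda^{-1/2}$. If $\Lambda=0$ but the pairs remain separated, the same inequality already shows that every admissible $L$ would be infinite, so $L_\enc=+\infty$.

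\emph{Upper bound and attainment.} Assume $\Lambda:=\Lambda(\enc;f)>0$ and set $L=\Lambda^{-1/2}$. By \eqref{eq:score}, $|f(\boldsymbol s)-f(\boldsymbol s')|\le L|r_\enc(\boldsymbol s)-r_\enc(\boldsymbol s')|$ for all pairs; pairs with equal $f$ values satisfy this trivially. Therefore the noiseless decoder $\dec_0:\mathcal R_\enc\to\R$, $\dec_0(r_\enc(\boldsymbol s))=f(\boldsymbol s)$, is well defined and $L$-Lipschitz on $\mathcal R_\enc\subset\C\cong\R^2$. We apply McShane's extension
$$\dec(y)=\inf_{\boldsymbol s\in\mathbb I^K}\bigl\{f(\boldsymbol s)+L|y-r_\enc(\boldsymbol s)|\bigr\}.$$
This infimum is finite because $f$ is bounded on the compact set $\mathbb I^K$. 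As an infimum of $L$-Lipschitz functions of $y$, $\dec$ is $L$-Lipschitz, hence continuous. The Lipschitz property of $\dec_0$ shows that $\dec=\dec_0$ on $\mathcal R_\enc$. Consequently,
$$|\dec(r_\enc(\boldsymbol s)+z)-f(\boldsymbol s)|=|\dec(r_\enc(\boldsymbol s)+z)-\dec(r_\enc(\boldsymbol s))|\le L|z|,$$
which is \eqref{eq:uniform} with constant $L$. Thus $L_\enc\le\Lambda^{-1/2}$, and combined with the lower bound the infimum is attained by this continuous decoder. When $\Lambda$ is finite and positive the argument is complete.

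The remaining edge case is $\Lambda=+\infty$. This cannot occur: $f$ is nonconstant, so some pair has distinct $f$ values, and $r_\enc$ is bounded, so the ratio in \eqref{eq:score} is finite for that pair.

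The main obstacle is the lower bound, specifically choosing the noise so that both constraints are evaluated at a common received point; the segment construction above handles this. The extension step is routine. Continuity of the encoders and compactness are used only to ensure that $f$ is bounded and that $\dec$ is finite.
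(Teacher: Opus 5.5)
Your proof is correct and follows the same approach as the paper: a noise realization that carries one noiseless aggregate onto another gives $L\ge\Lambda(\enc;f)^{-1/2}$, and McShane's extension of the $L$-Lipschitz noiseless decoder yields a continuous decoder that attains this bound. Your intermediate point on the segment is valid but not needed. Taking $t=1$, i.e., $z=0$ at $\boldsymbol s'$ for exactness and $z=r'-r$ at $\boldsymbol s$, is exactly the paper's step.
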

\begin{proof}
    See Appendix~\ref{app:amplification-proof}.
\end{proof}
Maximizing $\Lambda(\enc;f)$ therefore minimizes uniform noise amplification. Theorem~\ref{thm:decoder} constructs a decoder attaining $L_\enc$ for any attainable moment set $\mathcal M_K$.

\begin{theorem}
\label{thm:decoder}\label{thm:general-decoder}

Under the assumptions of Theorem~\ref{thm:sdp}, fix an encoder collection of the form \eqref{eq:affine} with $\Lambda(\enc;f)>0$.
Using $c$ and $\mathsf U$ from Section~\ref{sec:synthesis}, the decoder
\begin{equation}
\dec_\enc(y):=
\min_{\boldsymbol m\in\mathcal M_K}
\left\{f(\boldsymbol m)
+L_\enc\|\boldsymbol y-K\boldsymbol c-K\mathsf U\boldsymbol m\|_2
\right\}
\label{eq:general-extension-decoder}
\end{equation}
satisfies
$$
|\dec_\enc(r_\enc(\boldsymbol s)+z)-f(\boldsymbol s)|
\le L_\enc|z|,
\qquad
\forall\,\boldsymbol s\in\mathbb I^K,\ \forall\,z\in\C.
$$
It therefore attains the optimal uniform noise-amplification
constant $L_\enc=\Lambda(\enc;f)^{-1/2}$.

\end{theorem}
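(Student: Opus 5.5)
This is a McShane extension argument in moment coordinates. Write $A:\R^2\to\R^2$ for the affine map $A(\boldsymbol m)=K\boldsymbol c+K\mathsf U\boldsymbol m$. Under \eqref{eq:affine} we have $r_\enc(\boldsymbol s)=A(\boldsymbol m(\boldsymbol s))$, so $\mathcal R_\enc=A(\mathcal M_K)$. Define the noiseless decoder $g:\mathcal R_\enc\to\R$ by $g(A\boldsymbol m)=f(\boldsymbol m)$.

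\textbf{Step 1 (noiseless decoder is Lipschitz).} Since $\Lambda(\enc;f)>0$, identifiability \eqref{eq:exactness} holds, so $g$ is well defined. The definition \eqref{eq:score} of $\Lambda$ gives $|f(\boldsymbol m)-f(\boldsymbol m')|\le \Lambda^{-1/2}|A\boldsymbol m-A\boldsymbol m'|$ for all attainable pairs. Pairs with equal function values satisfy this trivially. Hence $g$ is $L_\enc$-Lipschitz on $\mathcal R_\enc$, using $L_\enc=\Lambda^{-1/2}$ from Proposition~\ref{prop:amplification}.

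\textbf{Step 2 (the minimum is attained).} The map $\boldsymbol m\mapsto f(\boldsymbol m)$ is continuous on $\mathcal M_K$, because $\psi_j$ is continuous on $\mathbb J_j$. The set $\mathcal M_K$ is compact as the continuous image of $\mathbb I^K$. The objective in \eqref{eq:general-extension-decoder} is therefore continuous in $\boldsymbol m$ on a compact set, so the minimum exists. Note that $\dec_\enc(\boldsymbol y)=\min_{\boldsymbol r\in\mathcal R_\enc}\{g(\boldsymbol r)+L_\enc|\boldsymbol y-\boldsymbol r|\}$, which is McShane's extension of $g$.

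\textbf{Step 3 (extension properties).} Two standard facts follow.

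First, $\dec_\enc$ agrees with $g$ on $\mathcal R_\enc$. Taking $\boldsymbol m$ equal to the true moment gives $\dec_\enc\le g$. For the reverse bound, Step 1 gives $g(\boldsymbol r')+L_\enc|\boldsymbol r-\boldsymbol r'|\ge g(\boldsymbol r)$ for every $\boldsymbol r'$.

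Second, $\dec_\enc$ is $L_\enc$-Lipschitz on $\C\cong\R^2$, as an infimum of $L_\enc$-Lipschitz functions of $\boldsymbol y$ that is finite.

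Combining the two, for any $\boldsymbol s$ and $z$,
\[
|\dec_\enc(r_\enc(\boldsymbol s)+z)-f(\boldsymbol s)|=|\dec_\enc(r_\enc(\boldsymbol s)+z)-\dec_\enc(r_\enc(\boldsymbol s))|\le L_\enc|z|.
\]
Optimality is then immediate from Proposition~\ref{prop:amplification}: no decoder achieves a constant below $L_\enc$.

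\textbf{Main obstacle.} The main point of care is the singular case $\det\mathsf U=0$. The class $\mathcal B_P^{\mathrm{all}}$ allows it, and then $A$ is not injective. The argument is unaffected, because the positive score already forces $f(\boldsymbol m)=f(\boldsymbol m')$ whenever $A\boldsymbol m=A\boldsymbol m'$. Step 1 uses only received-signal distances, never $\mathsf U^{-1}$.

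A second point is the convention $L_\enc\cdot 0$ and the finiteness of $L_\enc$. This is guaranteed by $\Lambda>0$. The bound $\Lambda<\infty$ holds because $f$ is nonconstant on $\mathcal M_K$ and $|A\boldsymbol m-A\boldsymbol m'|$ is bounded.
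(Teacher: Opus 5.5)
Your proposal is correct and follows essentially the same route as the paper: it identifies $\dec_\enc$ as McShane's extension of the noiseless decoder on $\mathcal R_\enc=\{K\boldsymbol c+K\mathsf U\boldsymbol m:\boldsymbol m\in\mathcal M_K\}$, uses compactness of $\mathcal M_K$ to attain the minimum even for singular $\mathsf U$, and invokes Proposition~\ref{prop:amplification} for exact noiseless recovery and optimality of $L_\enc=\Lambda(\enc;f)^{-1/2}$. The only difference is that you spell out the Lipschitz, agreement-on-$\mathcal R_\enc$, and extension steps, which the paper delegates to the proof of that proposition.
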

\begin{proof}
    The proof is given in Appendix~\ref{app:general-decoder-proof}.
\end{proof}

The bound in Theorem~\ref{thm:decoder} ensures exact recovery when $z=0$. For any noise
distribution satisfying $\mathbb E|z|^2=\sigma_z^2$, it also gives
$$
\sup_{\boldsymbol s\in\mathbb I^K}
\mathbb E_z\!\left[
|{ \dec_\enc}(r_\enc(\boldsymbol s)+z)-f(\boldsymbol s)|^2
\right]
\le \frac{\sigma_z^2}{\Lambda(\enc;f)}.
$$
Clipping the decoder output $\dec_\enc(y)$ to a closed interval containing
$f(\mathbb I^K)$ preserves exact recovery and both error bounds.
The minimization in \eqref{eq:general-extension-decoder} uses
$\mathcal M_K$ directly, including when it is nonconvex or has
empty interior.

When the minimization in \eqref{eq:general-extension-decoder} is difficult, we may use a closed convex set
$\mathcal C\supseteq\mathcal M_K$. Suppose that $\mathsf U$ is
invertible, $f$ is continuously differentiable on a neighborhood
of $\mathcal C$, and the gradient bound $B_{\mathcal C}(G)$ in \eqref{eq:general-gradient-sandwich} is finite.
We use the weighted norm
$\|\boldsymbol x\|_G:=(\boldsymbol x^\T G\boldsymbol x)^{1/2}$.
We first form
$\tilde{\boldsymbol m}=K^{-1}\mathsf U^{-1}
(\boldsymbol y-K\boldsymbol c)$, then project it onto $\mathcal C$:
$$
\hat{\boldsymbol m}=\Pi_{\mathcal C}^G
(\tilde{\boldsymbol m}),\qquad
\Pi_{\mathcal C}^G(\boldsymbol x):=
\operatorname*{arg\,min}_{\boldsymbol m\in\mathcal C}
(\boldsymbol m-\boldsymbol x)^\T G(\boldsymbol m-\boldsymbol x).
$$
We decode by evaluating
$\dec_{\mathcal C}(y)=f(\hat{\boldsymbol m})$.
Because $\mathcal M_K\subseteq\mathcal C$, projection leaves every noiseless moment vector unchanged. It also cannot increase the distance from the noisy estimate to the true moment vector in the $G$ norm; this norm measures the corresponding received-signal separation up to the factor $K$. Since
$\sqrt{B_{\mathcal C}(G)}$ bounds the sensitivity of $f$ in this
metric, the decoding error is at most
$\sqrt{B_{\mathcal C}(G)}\,|z|/K$. The decoder $\dec_{\mathcal C}$ attains the optimal
uniform noise-amplification constant when
$B_{\mathcal C}(G)=K^2/\Lambda(\enc;f)$.

Theorem~\ref{thm:decoder} minimizes the constant $L$ in
\eqref{eq:uniform} for a fixed encoder collection $\enc$, using a single decoder for every input vector and every noise value. With a known noise bound $|z|\le r_n$, Proposition~\ref{prop:radius-minimax} gives the minimum worst-case absolute error at that radius. 

\begin{proposition}\label{prop:radius-minimax}
For a fixed continuous encoder collection $\enc$, a continuous desired function $f$,
and a known noise bound $|z|\le r_n$ with $r_n\ge0$, the minimum worst-case
absolute error is $\omega_\enc(2r_n)/2$, where
$$
\omega_\enc(d):=
\max_{\substack{
\boldsymbol s,\boldsymbol s'\in\mathbb I^K\\
|r_\enc(\boldsymbol s)-r_\enc(\boldsymbol s')|\le d}}
|f(\boldsymbol s)-f(\boldsymbol s')|,
\qquad d\ge0.
$$
An optimal decoder outputs the midpoint between the smallest and largest
values of $f$ consistent with the observation and noise bound.
\end{proposition}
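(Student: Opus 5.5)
We prove Proposition~\ref{prop:radius-minimax} by a standard two-sided (achievability/converse) argument from worst-case estimation. For an observation $y\in\C$, define the consistency set
$$S(y):=\{\boldsymbol s\in\mathbb I^K:\ |y-r_\enc(\boldsymbol s)|\le r_n\},$$
and, when $S(y)\neq\varnothing$, define
$$\underline f(y):=\min_{S(y)} f,\qquad \overline f(y):=\max_{S(y)} f.$$
These extrema are attained because $S(y)$ is closed in the compact set $\mathbb I^K$ (the map $r_\enc$ is continuous) and $f$ is continuous. Outside the set of consistent observations the decoder may output anything, say $0$, since such $y$ never occur.

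We first check that $\omega_\enc$ is a maximum rather than a supremum. The set
$$\{(\boldsymbol s,\boldsymbol s'):\ |r_\enc(\boldsymbol s)-r_\enc(\boldsymbol s')|\le d\}$$
is compact and contains the diagonal, so it is nonempty. Hence $\omega_\enc(d)$ is attained and finite.

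\textbf{Achievability.} Let $\dec(y)=(\underline f(y)+\overline f(y))/2$. For a true input $\boldsymbol s$ and noise $|z|\le r_n$, set $y=r_\enc(\boldsymbol s)+z$; then $\boldsymbol s\in S(y)$. Therefore
$$|\dec(y)-f(\boldsymbol s)|\le \tfrac12\big(\overline f(y)-\underline f(y)\big).$$
Let $\boldsymbol s_-$ and $\boldsymbol s_+$ be minimizers and maximizers in $S(y)$. By the triangle inequality,
$$|r_\enc(\boldsymbol s_+)-r_\enc(\boldsymbol s_-)|\le 2r_n,$$
so $\overline f(y)-\underline f(y)\le \omega_\enc(2r_n)$. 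This gives the upper bound $\omega_\enc(2r_n)/2$.

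\textbf{Converse.} Pick a pair $(\boldsymbol s,\boldsymbol s')$ attaining $\omega_\enc(2r_n)$. Take the midpoint
$$y^\ast=\tfrac12\big(r_\enc(\boldsymbol s)+r_\enc(\boldsymbol s')\big).$$
Then $y^\ast=r_\enc(\boldsymbol s)+z=r_\enc(\boldsymbol s')+z'$ with
$$|z|=|z'|=\tfrac12|r_\enc(\boldsymbol s)-r_\enc(\boldsymbol s')|\le r_n.$$
Any decoder outputs a single value $\dec(y^\ast)$, so
$$\max\{|\dec(y^\ast)-f(\boldsymbol s)|,\ |\dec(y^\ast)-f(\boldsymbol s')|\}\ \ge\ \tfrac12|f(\boldsymbol s)-f(\boldsymbol s')|=\omega_\enc(2r_n)/2.$$
Combined with achievability, this shows that $\omega_\enc(2r_n)/2$ is the minimum worst-case error and that the midpoint decoder attains it. The case $r_n=0$ is included, giving zero error exactly when the encoder collection is identifiable.

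The only step needing care is the converse. It requires a single observation that is simultaneously consistent with two inputs whose received points are up to $2r_n$ apart, and the midpoint construction supplies it; this is where the factor $2$ in $\omega_\enc(2r_n)$ originates. The remaining steps are routine compactness and triangle-inequality arguments.
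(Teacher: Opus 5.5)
Your proposal is correct and follows essentially the same argument as the paper. The converse places one observation at the midpoint of two received aggregates at distance at most $2r_n$, and achievability uses the midpoint of the consistent range of $f$ together with the triangle inequality to show that range has width at most $\omega_\enc(2r_n)$. The only addition in the paper is a check that the midpoint decoder is Borel measurable, via closed level sets obtained as continuous images of compact input--noise sets; this is not needed for the deterministic worst-case criterion, but you would need it if decoders were restricted to measurable maps.
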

\begin{proof}
    The proof is given in Appendix~\ref{app:radius-minimax-proof}.
\end{proof}

\subsection{Gaussian Estimation Limits}\label{sec:pareto}

Theorem~\ref{thm:decoder} bounds the decoding error for every noise realization. For $z\sim\mathcal{CN}(0,\sigma_z^2)$, we use classical asymptotic estimation bounds~\cite{gill1995,tsang2020,gassiat2024} to characterize the minimum MSE as $\sigma_z^2\downarrow0$, with $P$, $K$, the desired function $f$, and any prescribed moment distribution fixed. Throughout this subsection, we assume $\psi_j\in C^1(\mathbb J_j)$ for $j=1,2$.

% Theorem~\ref{thm:decoder} bounds the decoding error for every noise realization. For $z\sim\mathcal{CN}(0,\sigma_z^2)$, we characterize the minimum MSE as
% $\sigma_z^2\downarrow0$, with $P$, $K$, the desired function $f$, and any prescribed moment distribution fixed. Throughout this subsection, we assume $\psi_j\in C^1(\mathbb J_j)$ for $j=1,2$.

For a fixed encoder collection $\enc\in\mathcal B_P$ and a Borel-measurable
decoder $\dec:\C\to\R$, we define the worst-case Gaussian MSE~\cite{jiao2015minimax} as
\begin{align*}
    R_{\max}(\enc,\dec;\sigma_z^2)
:=
\sup_{\boldsymbol m\in\mathcal M_K}
\mathbb E_z\!\left[
\left|\dec\!\left(Kc+K(um_1+vm_2)+z\right)
-f(\boldsymbol m)\right|^2
\right].
\end{align*}
The expectation is over channel noise, while the supremum covers all attainable moment vectors. We optimize over all Borel-measurable decoders, so the expected error is well defined. In particular, we do not require the decoder to be continuous, unbiased, or exact without noise. The decoder may also depend on the noise power $\sigma_z^2$. The leading MSE coefficient is the limiting ratio of MSE to noise power as $\sigma_z^2\downarrow0$. In Theorem~\ref{thm:identifiable-minimax}, we determine this coefficient both for each fixed encoder collection and after optimization over $\mathcal B_P$.

The geometric assumption $\mathcal M_K=\overline{\operatorname{int}\mathcal M_K}$ means that every attainable moment vector can be approached by attainable moment vectors that admit small perturbations in both coordinates. It allows us to extend estimation lower bounds obtained at interior moment vectors to the entire moment set.

\begin{theorem}\label{thm:identifiable-minimax}
Assume $\mathcal M_K=\overline{\operatorname{int}\mathcal M_K}$.
For every fixed encoder collection $\enc\in\mathcal B_P$,
\begin{equation}
 \lim_{\sigma_z^2\downarrow0}
 \frac{\inf_{\dec} R_{\max}(\enc,\dec;\sigma_z^2)}{\sigma_z^2}
 =\frac{1}{2K^2}\max_{\boldsymbol m\in\mathcal M_K}
 \boldsymbol w(\boldsymbol m)^\T G^{-1}\boldsymbol w(\boldsymbol m).
 \label{eq:fixed-minimax-limit}
\end{equation}
If, in addition, $1,\phi_{1,k},\phi_{2,k}$ are linearly independent
for every node $k$ and the vectors
$\{\boldsymbol w(\boldsymbol m):\boldsymbol m\in
\operatorname{int}\mathcal M_K\}$ span $\R^2$, then
\begin{equation}
 \lim_{\sigma_z^2\downarrow0}
 \frac{\inf_{\enc\in\mathcal B_P}\inf_{\dec}
 R_{\max}(\enc,\dec;\sigma_z^2)}{\sigma_z^2}=
 \min_{\enc\in\mathcal B_P}\frac{1}{2K^2}
 \max_{\boldsymbol m\in\mathcal M_K}
 \boldsymbol w(\boldsymbol m)^\T G^{-1}\boldsymbol w(\boldsymbol m).
 \label{eq:optimized-minimax-limit}
\end{equation}
The minimum is attained. Both asymptotic coefficients are achievable using encoders and decoders that are independent of the noise power.
\end{theorem}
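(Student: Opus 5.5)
Fix $\enc\in\mathcal B_P$. Since $\mathsf U$ is invertible, the observation $y$ is equivalent to the real Gaussian model $\tilde{\boldsymbol m}:=K^{-1}\mathsf U^{-1}(\boldsymbol y-K\boldsymbol c)=\boldsymbol m+\boldsymbol n$ with $\boldsymbol n\sim\mathcal N(0,\tfrac{\sigma_z^2}{2K^2}G^{-1})$. The task is then the classical minimax estimation of the smooth functional $f(\boldsymbol m)=\psi_1(m_1)+\psi_2(m_2)$ from a Gaussian location model with vanishing covariance. Write $B^\star:=\max_{\boldsymbol m\in\mathcal M_K}\boldsymbol w^\T G^{-1}\boldsymbol w$. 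The maximum exists because $\mathcal M_K$ is compact and $\boldsymbol w$ is continuous, using $\psi_j\in C^1(\mathbb J_j)$. I will prove matching upper and lower bounds of $B^\star/(2K^2)$ for \eqref{eq:fixed-minimax-limit}.

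\textbf{Achievability.} Extend $\psi_j$ to $C^1$ functions on $\R$ that are affine outside $\mathbb J_j$, and use the projection decoder $\dec(y)=f(\Pi^G_{\mathcal C}(\tilde{\boldsymbol m}))$ with $\mathcal C=\mathbb J_1\times\mathbb J_2\supseteq\mathcal M_K$; this decoder is independent of $\sigma_z^2$. By the mean value theorem, the error is $\boldsymbol w(\boldsymbol\xi)^\T(\hat{\boldsymbol m}-\boldsymbol m)$ for an intermediate point $\boldsymbol\xi$. Two facts control it. First, $|\hat{\boldsymbol m}-\boldsymbol m|_G\le|\boldsymbol n|_G$ uniformly, because projection is nonexpansive. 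Second, on the event $|\boldsymbol n|\le\sigma_z^{1/2}$, whose complement has probability $o(\sigma_z^k)$ for every $k$, the uniform continuity of $\boldsymbol w$ gives $\boldsymbol w(\boldsymbol\xi)=\boldsymbol w(\boldsymbol m)+o(1)$ uniformly in $\boldsymbol m$. The leading term is then at most $\mathbb E[(\boldsymbol w(\boldsymbol m)^\T\boldsymbol n)^2]=\tfrac{\sigma_z^2}{2K^2}\boldsymbol w^\T G^{-1}\boldsymbol w\le\tfrac{\sigma_z^2}{2K^2}B^\star$. The cross terms are $o(\sigma_z^2)$ by Cauchy--Schwarz, and the contribution of the bad event is negligible because $f$ is bounded on $\mathcal C$.

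There is a subtlety here. The projection is not the identity near the boundary of $\mathcal C$, so the bound $|\hat{\boldsymbol m}-\boldsymbol m|_G\le|\boldsymbol n|_G$ does not directly give the one-dimensional variance $\boldsymbol w^\T G^{-1}\boldsymbol w$. I would handle this by writing $\boldsymbol w(\boldsymbol\xi)^\T(\hat{\boldsymbol m}-\boldsymbol m)\le|\boldsymbol w(\boldsymbol\xi)|_{G^{-1}}|\boldsymbol n|_G$. That bound has second moment $B^\star\cdot 2\sigma_z^2/(2K^2)$, which is off by the dimension factor 2. So I would instead use the unprojected plug-in decoder $f(\tilde{\boldsymbol m})$ with the extended $f$, which is globally Lipschitz. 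This avoids the boundary issue and gives exactly $\boldsymbol w(\boldsymbol m)^\T\boldsymbol n+o(|\boldsymbol n|)$.

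\textbf{Converse.} Fix an interior point $\boldsymbol m_0$. Consider the local one-parameter submodel $\boldsymbol m_t=\boldsymbol m_0+t\,G^{-1}\boldsymbol w(\boldsymbol m_0)$ for $|t|\le h\sigma_z$, which lies in $\mathcal M_K$ for small $\sigma_z$. Apply a van Trees (Bayesian Cramér--Rao) bound with a smooth prior on $t$ of width $h\sigma_z$, following~\cite{gill1995}. The Fisher information for $t$ is $\tfrac{2K^2}{\sigma_z^2}\boldsymbol w^\T G^{-1}\boldsymbol w$. The derivative of the functional along the path is $\boldsymbol w^\T G^{-1}\boldsymbol w+o(1)$, and the prior information is $O(1/(h\sigma_z)^2)$. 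This yields
\[
\liminf_{\sigma_z\downarrow0}\inf_\dec R_{\max}/\sigma_z^2\ \ge\ \frac{\boldsymbol w(\boldsymbol m_0)^\T G^{-1}\boldsymbol w(\boldsymbol m_0)}{2K^2}\cdot\frac{1}{1+O(1/h^2)}.
\]
Letting $h\to\infty$ gives $\boldsymbol w(\boldsymbol m_0)^\T G^{-1}\boldsymbol w(\boldsymbol m_0)/(2K^2)$. Taking the supremum over interior points and using $\mathcal M_K=\overline{\operatorname{int}\mathcal M_K}$ together with the continuity of $\boldsymbol w$ reaches $B^\star$.

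\textbf{Optimized limit.} The upper bound in \eqref{eq:optimized-minimax-limit} follows by applying the fixed-encoder achievability to a minimizer. For the lower bound, $\inf_\enc\inf_\dec\ge\inf_\enc$ of the van Trees bound, but the encoder depends on $\sigma_z$, so I need uniformity over $\mathcal B_P$. The peak constraint bounds $G$, since $|u\phi_1+v\phi_2|$ varies by at most $2\sqrt P$ and the independence of $1,\phi_{1,k},\phi_{2,k}$ gives $\|G\|\le C P$. The van Trees bound holds for each encoder with constants uniform in $G$. Then $J(G):=\max_{\boldsymbol m}\boldsymbol w^\T G^{-1}\boldsymbol w$ is lower semicontinuous on the compact closure of the feasible $G$, and it equals $+\infty$ when $G$ is singular because the vectors $\boldsymbol w$ span $\R^2$. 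Hence the minimum is attained at some $G\succ0$ and the infimum limit matches.

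The main obstacle is making the converse uniform over encoders, which requires a single interior $\boldsymbol m_0$ per $G$ and uniform path length. I would handle it by a finite net of interior points, chosen by compactness and continuity of $\boldsymbol w$.
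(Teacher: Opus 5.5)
\textbf{Fixed-encoder limit.} Your argument for \eqref{eq:fixed-minimax-limit} is sound. The achievability part is the paper's Step~1: affine $C^1$ extension of $\psi_j$, the unprojected plug-in decoder, and a uniform first-order expansion. Your converse is a legitimate alternative route. You use a one-dimensional van Trees bound along $\boldsymbol m_0+tG^{-1}\boldsymbol w(\boldsymbol m_0)$ with a prior of width $h\sigma_z$, then let $h\to\infty$. The paper instead uses a fixed-radius uniform prior on a ball, the weighted multivariate score with test fields $\boldsymbol\eta$, and the variational identity \eqref{genstat:variational}, and then shrinks the radius. Your route is more elementary. The paper's has the advantage of being proved for arbitrary sequences $G_n\to G_0$, and it is reused for Proposition~\ref{prop:general-small-noise}.

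\textbf{Gap in the optimized lower bound.} The gap is in the lower bound of \eqref{eq:optimized-minimax-limit}. Your claim that the van Trees bound holds ``with constants uniform in $G$'' is false for the submodel you chose. The direction $G^{-1}\boldsymbol w(\boldsymbol m_0)$ can have length of order $|\boldsymbol w(\boldsymbol m_0)|/\lambda_{\min}(G)$, which is unbounded over the compact set of peak-feasible Gram matrices. Now take a $\sigma_z$-dependent encoder with $\lambda_{\min}(G)$ small compared with $h\sigma_z$. Then the path $\boldsymbol m_0+tG^{-1}\boldsymbol w(\boldsymbol m_0)$, $|t|\le h\sigma_z$, leaves $\mathcal M_K$. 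Moreover, the derivative error $(\boldsymbol w(\boldsymbol m_t)-\boldsymbol w(\boldsymbol m_0))^\T G^{-1}\boldsymbol w(\boldsymbol m_0)$ is no longer uniformly $o(1)$. A finite net of interior points only controls the choice of $\boldsymbol m_0$, not this blow-up. Likewise, $J(G)=+\infty$ for singular $G$ is a statement about the limit functional. What you actually need is that the \emph{finite-noise} risk of nearly singular encoders is large.

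\textbf{How the paper closes it.} The paper sidesteps uniformity. Take near-optimal pairs $(\enc_n,\dec_n)$ along $\sigma_{z,n}\downarrow0$; their normalized risks are bounded by the achievable coefficient plus $o(1)$. Compactness of the full coefficient set $(\boldsymbol c_k,\mathsf U)$ then yields a subsequence with $G_n\to G_0$; you need this compactness anyway for attainment within $\mathcal B_P$, since a bound on $\|G\|$ alone does not give it. Then split into two cases.
\begin{itemize}
\item If $G_0$ is singular, pick a unit $\boldsymbol v\in\ker G_0$ and an interior $\boldsymbol m_0$ with $\boldsymbol v^\T\boldsymbol w(\boldsymbol m_0)\ne0$, which exists by the spanning assumption. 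Your own one-dimensional bound along the \emph{fixed} direction $\boldsymbol v$ gives normalized risk at least about $(\boldsymbol v^\T\boldsymbol w(\boldsymbol m_0))^2/(2K^2\boldsymbol v^\T G_n\boldsymbol v+c/h^2)$, where $c$ is the Fisher information of the unit-width prior shape. This tends to $h^2(\boldsymbol v^\T\boldsymbol w(\boldsymbol m_0))^2/c$, which is arbitrarily large and contradicts the bounded risks.
\item If $G_0\succ0$, the directions stay bounded, so your fixed-$G$ bound passes to the limit. It gives at least $J(G_0)/(2K^2)$, with $G_0$ realized by a limit encoder in $\mathcal B_P$.
\end{itemize}
Alternatively, restrict to normalized directions $|\boldsymbol d|\le1$. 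The bound is then uniform and continuous in $G$, and a Dini-type argument in $h$ and in the net completes the proof.
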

\begin{proof}
See Appendix~\ref{app:identifiable-minimax-proof} for the decoder
construction and proof.
\end{proof}

Equation~\eqref{eq:optimized-minimax-limit} justifies optimizing the fixed-encoder limit in \eqref{eq:fixed-minimax-limit} over $\mathcal B_P$. Under the additional assumptions, a fixed encoder--decoder pair attains the optimum even when competing designs vary with noise power. The spanning condition requires two interior points of $\mathcal M_K$ at which the gradients $\boldsymbol w(\boldsymbol m)$ are nonzero and not parallel. In simple terms, the relative sensitivities of the desired function to $m_1$ and $m_2$ must differ between these points. Allowing encoder collections in $\mathcal B_P$ and their decoders to depend on noise power cannot improve this coefficient. When Theorem~\ref{thm:rigidity} applies, $\mathcal B_P$ contains all continuous peak-feasible encoder collections that permit exact noiseless computation, so this optimization also covers that entire identifiable class.

The Gaussian coefficient in \eqref{eq:fixed-minimax-limit} depends
on the local sensitivity $\boldsymbol w$, whereas $\Lambda(\enc;f)$
uses all attainable moment vectors. These criteria can therefore
favor different designs. Appendix~\ref{genstat:counterexamples}
shows that the Gaussian coefficient need not equal
$1/[2\Lambda(\enc;f)]$.

\begin{remark}
\label{rem:moment-set-condition}
The condition
$\mathcal M_K=\overline{\operatorname{int}\mathcal M_K}$
in Theorem~\ref{thm:identifiable-minimax} does not follow
from linear independence of the local preprocessing
functions. To see this, let every node use
$\phi_1(s)=s$ and $\phi_2(s)=(\max\{s,0\})^2$.
These functions are $C^1$, and $1,\phi_1,\phi_2$ are
linearly independent. If any input is nonnegative, then
$m_1\ge-(K-1)/K$. Consequently, every attainable moment vector
with $m_1<-(K-1)/K$ has $m_2=0$.
In this region, the moment set is confined to a line.
The attainable points $(t,0)$, obtained by setting all
inputs equal to $t<-(K-1)/K$, therefore lie outside
$\overline{\operatorname{int}\mathcal M_K}$.
\end{remark}

For a given moment distribution $\pi$ on $\mathcal M_K$,
we also define the average Gaussian MSE~\cite{jiao2015minimax} as
\begin{align*}
    R_\pi(\enc,\dec;\sigma_z^2)
 :=
 \mathbb E_{\pi,z}\!\left[
 \left|\dec\!\left(Kc+K(um_1+vm_2)+z\right)
 -f(\boldsymbol m)\right|^2
 \right].
\end{align*}
We summarize the average local sensitivity by $ Q_\pi:=
 \mathbb E_\pi[
 \boldsymbol w(\boldsymbol m)\boldsymbol w(\boldsymbol m)^\T].$ For $G\succ0$, we define the corresponding average-MSE coefficient
\begin{equation}
\begin{aligned}
 \mathcal J_\pi(G)
:=
 \frac{1}{2K^2}\mathbb E_\pi\!\left[
 \boldsymbol w(\boldsymbol m)^\T G^{-1}
 \boldsymbol w(\boldsymbol m)\right]=\frac{\operatorname{tr}(Q_\pi G^{-1})}{2K^2}.
\end{aligned}
 \label{eq:average-mse-coefficient}
\end{equation}
The coefficient $\mathcal J_\pi(G)$ is the prior average of the
local Cram\'er--Rao expression for estimating $f(\boldsymbol m)$,
normalized by the noise power~\cite{tsang2020}. Optimizing $\mathcal J_\pi(G)$ has the weighted inverse-trace form of A-optimal experimental design~\cite{sagnol2010,papp2011}. We call the distribution $\pi$ a \emph{regular moment distribution} if it has a strictly positive, continuously differentiable density on an open set $\Omega\subset\operatorname{int}\mathcal M_K$ and $\pi(\Omega)=1$. Thus, all probability is assigned to interior moment vectors, although samples may lie arbitrarily close to the boundary. Proposition~\ref{prop:general-small-noise} identifies $\mathcal J_\pi(G)$ as the minimum leading average-MSE coefficient for a fixed encoder collection $\enc$.

\begin{proposition}
\label{prop:general-small-noise}\label{genstat:fixed}
Fix $\enc\in\mathcal B_P$ and suppose that $\pi$ is a regular moment distribution. Then
\begin{equation}
 \lim_{\sigma_z^2\downarrow0}
 \frac{\inf_{\dec}R_\pi(\enc,\dec;\sigma_z^2)}{\sigma_z^2}
 =\mathcal J_\pi(G).
 \label{eq:fixed-bayes-limit}
\end{equation}
The decoder ${ \dec_\enc}$ of Theorem~\ref{thm:decoder} attains this coefficient.
\end{proposition}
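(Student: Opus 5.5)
Since $\enc\in\mathcal B_P$ has $\det\mathsf U\ne0$, the observation $y=Kc+K\mathsf U\boldsymbol m+z$ is equivalent to the linear Gaussian model $\tilde{\boldsymbol m}:=K^{-1}\mathsf U^{-1}(\boldsymbol y-K\boldsymbol c)=\boldsymbol m+\boldsymbol e$, with $\boldsymbol e\sim\mathcal N(\boldsymbol 0,\tfrac{\sigma_z^2}{2K^2}G^{-1})$. The plan is to prove matching upper and lower bounds on $\inf_\dec R_\pi$. In both directions, the coefficient $\mathcal J_\pi(G)$ should appear as the $\pi$-average of the delta-method variance $\boldsymbol w^\T(\tfrac{1}{2K^2}G^{-1})\boldsymbol w$.

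\textbf{Achievability with $\dec_\enc$.} Fix $\boldsymbol m\in\Omega$. Let $\delta=\operatorname{dist}(\boldsymbol m,\partial\mathcal M_K)>0$.
\begin{enumerate}
\item On the event $\|\boldsymbol e\|\le\delta/3$, I claim $\dec_\enc(y)$ agrees with the McShane extension restricted to a neighborhood of $\boldsymbol m$ inside $\mathcal M_K$. For small noise, the minimizer in \eqref{eq:general-extension-decoder} should be $\tilde{\boldsymbol m}$ itself, so that $\dec_\enc(y)=f(\tilde{\boldsymbol m})$. I will check this using the local sandwich \eqref{eq:general-gradient-sandwich} on a small ball.
\item If that identification fails, I fall back on the Lipschitz bound $|\dec_\enc(y)-f(\boldsymbol m)|\le L_\enc|z|$ together with a first-order expansion. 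In either case, $f(\tilde{\boldsymbol m})-f(\boldsymbol m)=\boldsymbol w(\boldsymbol m)^\T\boldsymbol e+o(\|\boldsymbol e\|)$ by $C^1$ regularity of $\psi_j$.
\item This gives $\mathbb E_z[\cdot]/\sigma_z^2\to\tfrac{1}{2K^2}\boldsymbol w^\T G^{-1}\boldsymbol w$ pointwise in $\boldsymbol m$.
\item Theorem~\ref{thm:decoder} gives the uniform domination $\mathbb E_z|\dec_\enc-f|^2\le\sigma_z^2/\Lambda(\enc;f)$. Dominated convergence over $\pi$ then yields $\limsup\le\mathcal J_\pi(G)$.
\end{enumerate}
If $\Lambda(\enc;f)=0$, I would instead use the local decoder $f(\Pi_{\mathcal C}(\tilde{\boldsymbol m}))$ for achievability. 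The statement concerns $\dec_\enc$ with its Lipschitz guarantee, so I expect the intended setting has $\Lambda>0$.

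\textbf{Converse.} I would apply the van Trees (Bayesian Cram\'er--Rao) inequality to the functional $f(\boldsymbol m)$ in the Gaussian location model with Fisher information $I=\tfrac{2K^2}{\sigma_z^2}G$. Because $\pi$ may not vanish at $\partial\Omega$, I will replace $\pi$ by a smoothed prior $\pi_\epsilon$ supported in a compact subset of $\Omega$, with finite Fisher information and $\pi_\epsilon\le(1+\epsilon)\pi$; the Bayes risk under $\pi$ is at least $(1+\epsilon)^{-1}$ times that under $\pi_\epsilon$. The multivariate van Trees bound for a scalar functional gives
\[
\mathbb E|\hat f-f|^2\ge\frac{(\mathbb E_{\pi_\epsilon}[\boldsymbol w^\T C\boldsymbol w])^2}{\mathbb E_{\pi_\epsilon}[\boldsymbol w^\T C\boldsymbol w]\cdot\mathcal I+\ldots}.
\]
This alone only gives a Cauchy--Schwarz-weakened averaged bound, not the average of pointwise bounds. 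The standard remedy (Gill--Levit, as in the cited \cite{gill1995}) is to localize. Partition $\Omega$ into small cells on which $\boldsymbol w$ is nearly constant, apply van Trees in each cell with a shrinking-but-fixed cell prior and direction vector $C(\boldsymbol m_i)=G^{-1}\boldsymbol w(\boldsymbol m_i)$, and sum. As $\sigma_z\to0$ the prior-information term becomes negligible. Letting the mesh and $\epsilon$ go to zero recovers $\liminf\ge\mathcal J_\pi(G)$, using continuity of $\boldsymbol w$.

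\textbf{Main obstacle.} I expect the converse localization to be the hardest step: choosing cell priors with bounded Fisher information relative to $\sigma_z^{-2}$ and controlling the loss from replacing $\pi$ by a product of cell priors. On the achievability side, the delicate point is the precise identification of $\dec_\enc$ near interior points when $\mathcal M_K$ is nonconvex.
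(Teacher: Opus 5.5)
Your achievability half is the paper's argument, and it is easier than you expect. For $\boldsymbol m\in\Omega$ and $\|\boldsymbol e\|<\delta$, the point $\tilde{\boldsymbol m}$ lies in $\mathcal M_K$. Then $y$ is itself a noiseless aggregate, and $\dec_\enc(y)=f(\tilde{\boldsymbol m})$ by exactness of $\dec_\enc$ on $\mathcal R_\enc$. You need no local sandwich, no fallback, and no convexity of $\mathcal M_K$. The case $\Lambda(\enc;f)=0$ also cannot occur: the standing assumption $\psi_j\in C^1(\mathbb J_j)$ makes $f$ Lipschitz on $\mathbb J_1\times\mathbb J_2$, so $\Lambda(\enc;f)\ge K^2\lambda_{\min}(G)/M^2>0$ whenever $G\succ0$.

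Your converse is valid but takes a genuinely different route from the paper. Your premise that van Trees by itself gives only a Cauchy--Schwarz-weakened average holds only when the direction vector is constant. The paper instead uses the Gill--Levit form with an $\boldsymbol m$-dependent field $\boldsymbol\eta\in C_c^1(\Omega;\R^2)$.
\begin{itemize}
\item The weighted score $S_{\boldsymbol\eta}$ and Cauchy--Schwarz give $R_\pi/\sigma_z^2\ge A(\boldsymbol\eta)^2/[2K^2B_G(\boldsymbol\eta)+\sigma_z^2B_p(\boldsymbol\eta)]$ for every Borel decoder, after clipping it to the range of $f$.
\item Taking $\boldsymbol\eta$ to be cut-off mollifications of $G^{-1}\boldsymbol w$ gives $A^2/B_G\to\mathbb E_\pi[\boldsymbol w^\T G^{-1}\boldsymbol w]$. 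This is exactly the $\pi$-average of the pointwise bounds.
\item The compact support of $\boldsymbol\eta$ removes boundary terms even though $p$ need not vanish on $\partial\Omega$.
\end{itemize}
So the paper needs neither your replacement $\pi_\epsilon\le(1+\epsilon)\pi$ nor a partition into cells.

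Your localization does go through. It uses disjoint cubes in a compact subset of $\Omega$, cut-off conditional priors with fixed finite Fisher information, and constant directions $G^{-1}\boldsymbol w(\boldsymbol m_i)$. It also relies on $\inf_{\dec}\sum_i\ge\sum_i\inf_{\dec}$ and uniform continuity of $\boldsymbol w$, with limits taken in the order $\sigma_z\downarrow0$, then mesh $\to0$, then $\epsilon\to0$. Its merit is that each cell uses only the textbook constant-direction bound.

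The paper's single inequality is non-asymptotic and needs only bounded continuous $\boldsymbol w$. It also carries over directly to sequences $G_n\to G_0$, including the singular-limit blow-up that Theorem~\ref{thm:identifiable-minimax} relies on.
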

\begin{proof}
See Appendix~\ref{genstat:information}.
\end{proof}

Thus, for a fixed encoder collection $\enc$ and any regular moment distribution $\pi$, $\dec_\enc$
attains both the optimal uniform noise-amplification constant $L_\enc$ and the
minimum leading average-MSE coefficient $\mathcal J_\pi(G)$, without depending on $\pi$.

\section{Explicit Designs and Performance Evaluation}\label{sec:examples}

We derive robustness-optimal quadratic and quartic encoders and
certify a numerical exponential design. Then, we use the polynomial
examples to compare robustness and Gaussian MSE under different
encoder choices. In this section, we consider desired functions of the form
\begin{equation}
 f_{\phi,\psi}(\boldsymbol s)
 =\frac1K\sum\nolimits_{k=1}^K\phi(s_k)
 -\psi\!\left(\frac1K\sum\nolimits_{k=1}^K s_k\right).
 \label{eq:target}
\end{equation}
Here $m_1=K^{-1}\sum_k s_k$ and
$m_2=K^{-1}\sum_k\phi(s_k)$, so
$f_{\phi,\psi}(\boldsymbol s)=m_2-\psi(m_1)$.  Within the encoder family \eqref{eq:affine}, we may restrict
attention to shared encoders because all nodes use the same
local preprocessing functions. Replacing the offsets $c_k$
by their average $c=K^{-1}\sum_k c_k$ preserves every
received aggregate and maintains peak feasibility, since
$$
|c+us+v\phi(s)|^2
=\left|\frac1K\sum_k\enc_k(s)\right|^2
\le\frac1K\sum_k|\enc_k(s)|^2
\le P.
$$
This replacement therefore preserves the robustness score
and the error of any fixed decoder. To optimize the shared encoder coefficients, we first express
the robustness score in terms of the Gram matrix $G$.
The following lemma shows that, under the stated convexity
or concavity assumptions, this score depends on $\psi'$
only through its endpoint values.

\begin{lemma}\label{prop:score}
Let $K\ge4$. Assume that
$\phi\in C(\mathbb I)\cap C^1(\mathbb I^\circ)$ and
$\psi\in C^1(\mathbb I)$ are each strictly convex or strictly
concave. For the desired function \eqref{eq:target}, every
encoder collection $\enc_k(s)=c_k+us+v\phi(s)$ with $G\succ0$
satisfies
\begin{equation}
 \Lambda(\enc;f_{\phi,\psi})
 =\frac{K^2}{\max_{j\in\{-,+\}}\boldsymbol w_j^\T G^{-1}\boldsymbol w_j},
 \label{eq:exactscore}
\end{equation}
where $\boldsymbol w_\pm=(-\psi'(\pm1),1)^\T$.
The infimum defining $\Lambda(\enc;f_{\phi,\psi})$ in \eqref{eq:score} is approached by attainable pairs whose
noiseless aggregate distance tends to zero.
\end{lemma}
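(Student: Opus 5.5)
The plan is to write everything in moment coordinates and prove two matching inequalities for $K^2/\Lambda(\enc;f_{\phi,\psi})$. Under $\enc_k(s)=c_k+us+v\phi(s)$ the noiseless aggregate is $r_\enc(\boldsymbol s)=\sum_k c_k+K\mathsf U\boldsymbol m$. Hence $|r_\enc(\boldsymbol s)-r_\enc(\boldsymbol s')|^2=K^2\,\Delta\boldsymbol m^\T G\,\Delta\boldsymbol m$ with $\Delta\boldsymbol m=\boldsymbol m-\boldsymbol m'$, and
$$
\frac{K^2}{\Lambda(\enc;f_{\phi,\psi})}=\sup_{\boldsymbol m,\boldsymbol m'\in\mathcal M_K,\ f(\boldsymbol m)\neq f(\boldsymbol m')}\frac{|f(\boldsymbol m)-f(\boldsymbol m')|^2}{\Delta\boldsymbol m^\T G\,\Delta\boldsymbol m}.
$$
The gradient of $f(\boldsymbol m)=m_2-\psi(m_1)$ is $\boldsymbol w(\boldsymbol m)=(-\psi'(m_1),1)^\T$.

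\emph{Upper bound on $K^2/\Lambda$.} Fix an attainable pair. By the mean value theorem applied to $\psi$ on $[m_1',m_1]\subseteq\mathbb I$, we get $f(\boldsymbol m)-f(\boldsymbol m')=\boldsymbol w(\xi)^\T\Delta\boldsymbol m$ for some $\xi\in\mathbb I$. Since $\psi$ is strictly convex or strictly concave and $C^1$, $\psi'$ is monotone on $\mathbb I$. So $-\psi'(\xi)$ lies between $-\psi'(-1)$ and $-\psi'(1)$. As both vectors have second coordinate $1$, $\boldsymbol w(\xi)$ is a convex combination of $\boldsymbol w_-$ and $\boldsymbol w_+$.

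Cauchy--Schwarz in the $G$-inner product gives $|\boldsymbol w^\T\Delta\boldsymbol m|^2\le(\boldsymbol w^\T G^{-1}\boldsymbol w)(\Delta\boldsymbol m^\T G\Delta\boldsymbol m)$. The map $\boldsymbol w\mapsto\boldsymbol w^\T G^{-1}\boldsymbol w$ is convex, so it is at most $\max_{\pm}\boldsymbol w_\pm^\T G^{-1}\boldsymbol w_\pm$ on that segment. This yields $K^2/\Lambda\le\max_\pm\boldsymbol w_\pm^\T G^{-1}\boldsymbol w_\pm$.

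\emph{Lower bound on $K^2/\Lambda$.} I would invoke the left inequality in \eqref{eq:general-gradient-sandwich}, or reprove it locally to get the final claim. Take $\boldsymbol m\in\operatorname{int}\mathcal M_K$ and $\boldsymbol m'=\boldsymbol m+tG^{-1}\boldsymbol w(\boldsymbol m)$ with $t\downarrow0$. Both points are attainable for small $t$, and differentiability of $f$ gives a ratio tending to $\boldsymbol w(\boldsymbol m)^\T G^{-1}\boldsymbol w(\boldsymbol m)$, while the aggregate distance $K|t|\,\|G^{-1/2}\boldsymbol w\|$ tends to zero.

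It remains to produce interior points $\boldsymbol m$ with $m_1\to\pm1$. Then continuity of $\psi'$ on the closed interval $\mathbb I$ gives $\boldsymbol w(\boldsymbol m)\to\boldsymbol w_\pm$, and a diagonal argument shows that the supremum equals the maximum and is approached by pairs with vanishing separation.

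\emph{Main obstacle: interior points near $m_1=\pm1$.} I expect this step to need the most care. Near $m_1=1$, I would set $s_3=\dots=s_K=1-\varepsilon$ and vary $(s_1,s_2)$ near $(1-\varepsilon,1-2\varepsilon)$, which are distinct interior points. The map $(s_1,s_2)\mapsto(m_1,m_2)$ is $C^1$ there with Jacobian determinant $K^{-2}(\phi'(s_2)-\phi'(s_1))$. This is nonzero because strict convexity or concavity of a $C^1$ function makes $\phi'$ strictly monotone on $\mathbb I^\circ$. The inverse function theorem then gives an open neighbourhood in $\mathcal M_K$ around a moment vector with $m_1=1-O(\varepsilon)$. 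The case $m_1\to-1$ is symmetric. Only $K\ge2$ is used here; the standing hypothesis $K\ge4$ is harmless.

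Combining the two bounds yields \eqref{eq:exactscore}, and the lower-bound construction shows that the infimum in \eqref{eq:score} is approached by attainable pairs whose aggregate distance tends to zero.
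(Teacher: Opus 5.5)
Your proposal is correct and follows the paper's proof. For the upper bound, you write the function difference as $\boldsymbol w(\xi)^\T\Delta\boldsymbol m$, using the mean value theorem where the paper integrates along the segment in $\mathbb I\times\R$; you then place $\boldsymbol w(\xi)$ on the segment between $\boldsymbol w_-$ and $\boldsymbol w_+$ and apply Cauchy--Schwarz together with convexity of $\boldsymbol w\mapsto\boldsymbol w^\T G^{-1}\boldsymbol w$, exactly as the paper does. For sharpness, both arguments take two distinct inputs with $\phi'(s_1)\ne\phi'(s_2)$ to realize the direction $G^{-1}\boldsymbol w(\boldsymbol m)$ at moment vectors whose $m_1$ tends to the maximizing endpoint, and then conclude with a diagonal argument based on continuity of $\psi'$; your inverse-function-theorem interiority step is simply a repackaging of the paper's first-order perturbation $\boldsymbol m(h)=\boldsymbol m+h\boldsymbol d+o(h)$.
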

\begin{proof}
    See Appendix~\ref{app:prof:score}. 
\end{proof}

\subsection{Optimal Quadratic and Quartic Encoders}\label{sec:quadratic-quartic}

In Proposition~\ref{thm:parabola}, we determine the shared encoder
coefficients that maximize $\Lambda(\enc;f_{\phi,\psi})$ for
$\phi(s)=s^2$ and $\phi(s)=s^4$.

\begin{proposition}
\label{thm:parabola}
Let $K\ge4$, $P>0$, and let $\psi\in C^1(\mathbb I)$ be even
and strictly convex or strictly concave. Write $\chi_p(s)=s^p$
and $M=|\psi'(1)|$. For \eqref{eq:target} with
$\phi=\chi_2$ and $\phi=\chi_4$, respectively, the shared encoders
\begin{equation}
 \enc_{\chi_2}^\star(s)=\sqrt P\,[a s+i(bs^2-1)], \quad 
\Lambda^\star(P;f_{\chi_2,\psi})=K^2P b^2(b-1),\quad s\in\mathbb I,
 \label{eq:parabola}
\end{equation}
and
\begin{equation}
 \enc_{\chi_4}^\star(s)=\sqrt P\,[a_\rho s+i(b_\rho s^4+c_\rho)],\quad
\Lambda^\star(P;f_{\chi_4,\psi})=\frac{K^2P}{M^2/a_\rho^2+1/b_\rho^2},\quad s\in\mathbb I,
 \label{eq:quarticscore}
\end{equation}
maximize robustness over all continuous encoder collections
satisfying \eqref{eq:peak-model}. The quadratic coefficients are $b=\frac{2\sqrt{M^2+8}+2M}{\sqrt{M^2+8}+3M}$ and $a=\sqrt{b(2-b)}$. For the quartic encoder, $\rho\in(0,1/3)$ is the unique solution of
\begin{equation}
 8\rho^2(1+\rho)(1+3\rho)=M^2(1-3\rho),
 \label{eq:quarticroot}
\end{equation}
and $d_\rho=1+2\rho+9\rho^2$, $a_\rho=\sqrt{\frac{8\rho}{d_\rho}}$, $b_\rho=\frac{2}{(1+\rho)\sqrt{d_\rho}}$, and $c_\rho=-\frac{1+2\rho+3\rho^2}{(1+\rho)\sqrt{d_\rho}}$.
\end{proposition}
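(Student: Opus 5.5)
By Theorem~\ref{thm:rigidity} (applicable since $K\ge4$, $1,s,s^p$ are linearly independent, $\psi_2(m_2)=m_2$ is affine and nonconstant, and $\psi$ is strictly convex or concave, so nonaffine on every subinterval), every identifiable continuous collection has the form \eqref{eq:affine}. Theorem~\ref{thm:sdp} then gives $\Lambda^\star(P;f)=K^2\lambda_\star$. By the averaging argument preceding Lemma~\ref{prop:score}, we may restrict to shared encoders $\enc(s)=c+us+v\phi(s)$.

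Lemma~\ref{prop:score} replaces the semi-infinite separation constraint by $\Lambda=K^2/\max_\pm \boldsymbol w_\pm^\T G^{-1}\boldsymbol w_\pm$. Since $\psi$ is even, $\psi'$ is odd, so $\boldsymbol w_\pm=(\mp\psi'(1),1)^\T$ and the two weights have first entries $\pm M$. The design problem therefore becomes
$$
\max\ \min_\pm \frac{1}{\boldsymbol w_\pm^\T G^{-1}\boldsymbol w_\pm}\quad\text{subject to}\quad \sup_{s\in\mathbb I}|c+us+v\phi(s)|^2\le P.
$$
The objective is a pointwise minimum of concave functions of $G$ (each $1/\boldsymbol w^\T G^{-1}\boldsymbol w$ is concave on $\mathbb S^2_{++}$). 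In the $H$-parametrization of Theorem~\ref{thm:sdp}, the constraints are convex.

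\textbf{Step 1: symmetrize.} Since $\phi$ is even, the reflection $s\mapsto -s$ maps a feasible encoder $(c,u,v)$ to the feasible encoder $(c,-u,v)$. This flips the sign of $G_{12}$ and swaps $\boldsymbol w_+$ with $\boldsymbol w_-$. Averaging the two Gram matrices $H$ keeps feasibility and, by concavity and this symmetry, does not decrease the objective. Hence we may assume $G$ is diagonal and the cross term between $\boldsymbol c$ and $\boldsymbol u$ vanishes. After a common phase rotation the encoder reads
$$
\enc(s)=\sqrt P\,[\alpha s+i(\beta\phi(s)+\gamma)].
$$
The objective becomes $KP^{-1}$-normalized as $P/(M^2/\alpha^2+1/\beta^2)$, and the peak constraint becomes $g(t):=\alpha^2t+(\beta t^{p/2}+\gamma)^2\le1$ for $t=s^2\in[0,1]$.

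\textbf{Step 2: quadratic case.} Here $g$ is a convex quadratic in $t$, so the constraint reduces to the two endpoint conditions $g(0)=\gamma^2\le1$ and $g(1)=\alpha^2+(\beta+\gamma)^2\le1$. At the optimum, take $\gamma=-1$ (with $\beta>0$ after a sign flip) and make $g(1)=1$ active, which forces $\alpha^2=\beta(2-\beta)$. The problem is now one-dimensional: maximize $1/(M^2/(\beta(2-\beta))+1/\beta^2)$ over $\beta\in(1,2)$. The stationarity condition is a quadratic equation in $\beta$, whose admissible root should be the stated $b$. Substituting back should simplify the value to $b^2(b-1)$.

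\textbf{Step 3: quartic case.} Now $g(t)=\alpha^2t+(\beta t^2+\gamma)^2$ is not convex in $t$, so an interior point can be active. I expect the optimum to have the constraint tight at $t=0$, at $t=1$, and tangentially at an interior point $t_0$ with $g'(t_0)=0$. I would parametrize this contact structure by $\rho$, solve the resulting equations for $a_\rho,b_\rho,c_\rho$, and check that $g\le1$ on all of $[0,1]$. Maximizing the objective over $\rho$ should then yield \eqref{eq:quarticroot}.

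Global optimality in both cases follows because the reduced problem is convex. I would certify it with KKT/dual multipliers: a measure supported on the active points of the peak constraint, together with weights on $\boldsymbol w_\pm$. Uniqueness of $\rho\in(0,1/3)$ follows from monotonicity: the left side of \eqref{eq:quarticroot} increases from $0$ and the right side decreases to $0$ on $(0,1/3)$.

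\textbf{Main obstacle.} The quartic step is the hard part: establishing the correct active set and verifying the polynomial peak inequality globally on $[0,1]$, rather than only at the contact points. I would first try writing $1-g(t)$ as $(t-t_0)^2$ times a polynomial that is nonnegative on $[0,1]$.
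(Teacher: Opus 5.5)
\textbf{The gap: the quartic active set.} Your quartic step fails as planned. The peak constraint is \emph{not} active at $t=0$, and the contact pattern you anticipate cannot occur. The polynomial $1-g(t)=1-\gamma^2-\alpha^2t-2\beta\gamma t^2-\beta^2t^4$ has no $t^3$ term. Roots at $0$ and $1$ plus a double root $t_0$ would force $1-g=\beta^2t(1-t)(t-t_0)^2$, whose $t^3$ coefficient $\beta^2(1+2t_0)$ vanishes only for $t_0=-1/2\notin(0,1)$. The count also fails: four conditions on $(\alpha,\beta,\gamma,t_0)$ leave no free $\rho$ to optimize over.

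The optimal design instead touches the peak tangentially at $t=\rho$ and simply at $t=1$. The missing cubic term then pins the fourth root at $-(1+2\rho)$, giving $1-g(t)=b_\rho^2(1-t)(t-\rho)^2(t+1+2\rho)$. For each $\rho$ this identity determines exactly $(a_\rho^2,b_\rho,c_\rho)$. It also gives $1-g(0)=b_\rho^2\rho^2(1+2\rho)>0$, consistent with the contact markers at $s=\pm\sqrt\rho,\pm1$ in Figure~\ref{fig:curves}.

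The KKT certificate is therefore supported only at $t=\rho$ and $t=1$, with multipliers $\eta=2/(b_\rho^4(1+\rho)^3(1-\rho))$ and $\zeta=\eta(1+\rho)/(1-3\rho)$. Stationarity in $b$ and $c_0$ then holds identically in $\rho$, and stationarity in $A=a_\rho^2$ is exactly \eqref{eq:quarticroot}. Positivity of $\zeta$ is where $\rho<1/3$ enters. With this active set, your factorization idea and the convex Lagrangian bound go through as in the paper.

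\textbf{The rest of your plan.} This part is sound and partly differs from the paper. Your reflection-and-average symmetrization is valid:
\begin{itemize}
\item the reflected Gram matrix is feasible because $\phi$ is even;
\item the objective $\min_\pm 1/\boldsymbol w_\pm^\T G^{-1}\boldsymbol w_\pm$ is concave and invariant under the reflection;
\item the reconstruction in Theorem~\ref{thm:sdp} turns the averaged, possibly rank-three $H$ into an orthogonal encoder $\sqrt A\,s+i(\sqrt B\,\phi(s)+\gamma)$.
\end{itemize}
This treats both degrees uniformly.

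The paper avoids symmetrization. For $p=2$ it bounds $(AB-C^2)/(A+M^2B+2M|C|)\le AB/(A+M^2B)$ and closes with the identity $b(b-1)(A+M^2B)S-bAB=(b-1)[bA-(2-b)B]^2$. This is combined with $S=bA+(b-1)B\le b^2$, which comes from the averaged peak constraints at $s=0,\pm1$. For $p=4$ the paper projects $c+vs^4$ onto the direction of $v$.

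Your one-dimensional quadratic reduction also works but needs one more sentence, because $\gamma=-1$ is optimal only for $\beta\ge1$:
\begin{itemize}
\item for $\beta\le1$, the best choice $\gamma=-\beta$ gives $\alpha^2=1$ and a value increasing in $\beta$, so this branch is dominated by $\beta=1$;
\item on $[1,2)$, the function $M^2/(\beta(2-\beta))+1/\beta^2$ is convex, and its stationarity condition is $M^2\beta(\beta-1)=(2-\beta)^2$.
\end{itemize}
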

\begin{proof}
See Appendix~\ref{app:quartic-proof}.
\end{proof}

For variance $V(\boldsymbol s):=m_2-m_1^2$,
$\phi(s)=\psi(s)=s^2$ gives $M=2$,
$b=2/\sqrt3$, and
$\Lambda^\star(P;V)=4K^2P/(9+6\sqrt3)$.
For the fourth-order Jensen gap, $\phi(s)=\psi(s)=s^4$ gives
$M=4$ and an optimal normalized robustness score of approximately
$0.058533507$.
Figure~\ref{fig:curves} compares the optimal encoders with the
peak-feasible baselines $\sqrt P[s+i(s^2-1)]$ and
$(2\sqrt P/\sqrt5)[s+i(s^4-1/2)]$, respectively.\footnote{For
$|s|\le1$, the normalized powers satisfy
$s^2+(s^2-1)^2=1-s^2(1-s^2)\le1$ and
$\frac45[s^2+(s^4-\frac12)^2]
\le\frac45(1+\frac14)=1$.}
Equation~\eqref{eq:exactscore} gives their robustness scores
as $K^2P/5$ and $4K^2P/85$, respectively. Table~\ref{tab:values} reports the normalized robustness score
$\bar\lambda:=\Lambda(\enc;f)/(K^2P)$, whose optimum is
$\bar\lambda^\star:=\Lambda^\star(P;f)/(K^2P)$, and the corresponding
normalized noise-amplification constant
$K\sqrt P\,L_\enc=\bar\lambda^{-1/2}$.

\begin{figure*}[t]
\centering\paperfigure{%
\begin{tikzpicture}
\pgfplotstableread[row sep=\\]{
p M rho rootrho a b c score \\
2 2 0 0 0.98796140786225539 1.1547005383792517 -1 0.2062673845056687 \\
4 4 0.29720785528286564 0.54516773132941909 0.99753915620560607 0.99741350060215672 -0.92730193353503576 0.058533506997747176 \\
2 1 0 0 0.94280904158206336 1.3333333333333333 -1 0.59259259259259256 \\
}\analyticcoefficients
\def\selectanalytic#1{%
  \pgfplotstablegetelem{#1}{a}\of\analyticcoefficients\edef\curveA{\pgfplotsretval}%
  \pgfplotstablegetelem{#1}{b}\of\analyticcoefficients\edef\curveB{\pgfplotsretval}%
  \pgfplotstablegetelem{#1}{c}\of\analyticcoefficients\edef\curveC{\pgfplotsretval}%
  \pgfplotstablegetelem{#1}{rho}\of\analyticcoefficients\edef\curveRho{\pgfplotsretval}%
  \pgfplotstablegetelem{#1}{rootrho}\of\analyticcoefficients\edef\curveRoot{\pgfplotsretval}%
}
\begin{groupplot}[
  group style={group size=2 by 1,horizontal sep=0.45in},
  width=1.70in,height=1.70in,scale only axis,axis equal image,
  xmin=-1.15,xmax=1.15,ymin=-1.15,ymax=1.15,
  xtick={-1,0,1},ytick={-1,0,1},
  axis lines=left,axis line style={figGray!70},tick style={figGray!70},
  /tikz/font=\fontsize{8}{9.5}\selectfont,
  tick label style={font=\fontsize{8}{9.5}\selectfont},
  title style={font=\fontsize{8}{10}\selectfont,yshift=1pt},
  xlabel={$\Re\enc(s)/\sqrt P$},
  xlabel style={yshift=1pt},
  clip=false,
  legend style={draw=none,font=\fontsize{8}{9.5}\selectfont,
    /tikz/every even column/.append style={column sep=7pt}},
  legend columns=4,
]
\nextgroupplot[legend to name=analyticCurveLegend,title={(a) Variance},ylabel={$\Im\enc(s)/\sqrt P$}]
\selectanalytic{0}
\addplot[figGray!65,densely dotted,line width=0.8pt,domain=0:360,samples=121]
  ({cos(x)},{sin(x)});
\addlegendentry{Peak boundary}
\addplot[figOrange,dashed,line width=1.0pt,domain=-1:1,samples=101]
  ({x},{x^2-1});
\addlegendentry{Baseline}
\addplot[figBlue,line width=1.5pt,domain=-1:1,samples=141]
  ({\curveA*x},{\curveB*x^2+\curveC});
\addlegendentry{Optimal encoder}
\addplot[only marks,mark=*,mark size=2.0pt,figBlue,
  mark options={fill=figBlue,draw=white,line width=0.4pt}]
  coordinates {(-\curveA,\curveB-1) (0,-1) (\curveA,\curveB-1)};
\addlegendentry{Peak contacts}
\node[font=\fontsize{8}{9.5}\selectfont] at (axis cs:0,0.75)
  {$\bar\lambda^\star=0.206267$};
\node[anchor=south west,xshift=3pt,yshift=2pt,font=\fontsize{7}{8}\selectfont]
  at (axis cs:-\curveA,\curveB-1) {$s=-1$};
\node[anchor=south east,xshift=-3pt,yshift=2pt,font=\fontsize{7}{8}\selectfont]
  at (axis cs:\curveA,\curveB-1) {$s=1$};
\node[anchor=south,yshift=4pt,font=\fontsize{7}{8}\selectfont]
  at (axis cs:0,-1) {$s=0$};

\nextgroupplot[title={(b) Fourth-order Jensen gap},yticklabels={}]
\selectanalytic{1}
\addplot[figGray!65,densely dotted,line width=0.8pt,domain=0:360,samples=121]
  ({cos(x)},{sin(x)});
\addplot[figOrange,dashed,line width=1.0pt,domain=-1:1,samples=141]
  ({2*x/sqrt(5)},{(2*x^4-1)/sqrt(5)});
\addplot[figBlue,line width=1.5pt,domain=-1:1,samples=181]
  ({\curveA*x},{\curveB*x^4+\curveC});
\addplot[only marks,mark=*,mark size=2.0pt,figBlue,
  mark options={fill=figBlue,draw=white,line width=0.4pt},
  samples at={-1,-\curveRoot,\curveRoot,1}]
  ({\curveA*x},{\curveB*x^4+\curveC});
\node[font=\fontsize{8}{9.5}\selectfont] at (axis cs:0,0.75)
  {$\bar\lambda^\star=0.058534$};
\node[anchor=south west,xshift=8pt,yshift=2pt,font=\fontsize{7}{8}\selectfont]
  at (axis cs:-\curveA,\curveB+\curveC) {$s=-1$};
\node[anchor=south east,xshift=-8pt,yshift=2pt,font=\fontsize{7}{8}\selectfont]
  at (axis cs:\curveA,\curveB+\curveC) {$s=1$};
\node[anchor=south west,xshift=2pt,yshift=3pt,font=\fontsize{7}{8}\selectfont]
  at (axis cs:{-\curveA*sqrt(\curveRho)},{\curveB*\curveRho^2+\curveC})
  {$-\sqrt\rho$};
\node[anchor=south east,xshift=-2pt,yshift=3pt,font=\fontsize{7}{8}\selectfont]
  at (axis cs:{\curveA*sqrt(\curveRho)},{\curveB*\curveRho^2+\curveC})
  {$\sqrt\rho$};

\end{groupplot}
\node[anchor=north,yshift=-0.43in] at ($(group c1r1.south)!0.5!(group c2r1.south)$)
  {\pgfplotslegendfromname{analyticCurveLegend}};
\end{tikzpicture}
% END figures/tikz/optimal_curves
}
\caption{{\color{black} Optimal and baseline encoders for variance and the fourth-order Jensen gap. Markers show the inputs at which the optimal encoders reach peak power. All curves satisfy \eqref{eq:peak-model}; scores are normalized by $K^2P$.}}
\label{fig:curves}\end{figure*}

\begin{table}[t]\caption{Normalized robustness scores.}\label{tab:values}
\centering\small\begin{tabular}{@{}lrrr@{}}\toprule
Desired function & Baseline $\bar\lambda$ & Optimal $\bar\lambda^\star$ & Optimal $(\bar\lambda^\star)^{-1/2}$\\\midrule
$m_2-m_1^2$ & $0.200000$ & $0.206267$ & $2.201835$\\
$m_2-m_1^4$ & $0.047059$ & $0.058534$ & $4.133308$\\
\bottomrule
\end{tabular}\end{table}

For the polynomial encoder
$\enc(s)=\sqrt P[as+i(bs^p+c_0)]$, with $p\in\{2,4\}$,
$a,b>0$, and $c_0\in\R$, projection onto
$\mathcal C=\mathbb I\times\R$ gives
\begin{align}
\hat m_1=\operatorname{clip}_{[-1,1]}
\left(\frac{\Re(y)}{Ka\sqrt P}\right),~~
\hat m_2=\frac{\Im(y)/(K\sqrt P)-c_0}{b},\quad 
\dec_{\mathrm{strip}}(y)=\hat m_2-\psi(\hat m_1). \label{eq:polynomial-decoder}
\end{align}
Here $\operatorname{clip}_{[a,b]}(x):=\min\{b,\max\{a,x\}\}$.
For variance and the fourth-order Jensen gap, we set $\psi(x)=x^p$
and clip the output to $[0,1]$:
$\hat f=\operatorname{clip}_{[0,1]}
(\hat m_2-\hat m_1^p)$.
Figure~\ref{fig:decoder-geometry} illustrates the projection for
variance at $K=4$. Appendix~\ref{app:strip-decoder-proof} derives
the rule and verifies its uniform error bound.

\subsection{Certified Exponential Design}

For $K\ge4$, we next take $\phi(s)=\psi(s)=e^s$, giving the desired function
$$
 f(\boldsymbol s)
 =\frac1K\sum\nolimits_{k=1}^K e^{s_k}
  -e^{K^{-1}\sum_{k=1}^K s_k}.
$$
The exponential functions lack the symmetry used to derive the preceding polynomial designs, so we compute the encoder coefficients numerically. A finite design grid alone cannot guarantee peak-power feasibility between grid points or certify how close the resulting robustness score is to optimal. We therefore use the procedure in Appendix~\ref{sec:certified} to verify the peak-power constraint for every $s\in\mathbb I$ and obtain rigorous lower and upper bounds on $\Lambda^\star(P;f)$.

We solve the normalized problem with $P=1$ on a 401-point
design grid, then round and repair the candidate matrices
to obtain rational primal and dual certificates. We verify
the primal matrix and robustness constraints, together with
the dual conditions \eqref{eq:dualwitness}, using rational
interval arithmetic.\footnote{For rational $s$ with $|s|\le1$, we
enclose $e^s$ using
$|e^s-\sum_{j=0}^{40}s^j/j!|\le3/41!$.
We use certified lower bounds on leading principal minors
for the positive-definiteness checks.} To enforce the peak-power constraint between grid points,
we evaluate the primal power function $q_H$ on a uniform
1001-point verification grid with spacing $h=1/500$.
We obtain $b_h\le1.000000093894$ and
$|q_H''(s)|\le117$ on $\mathbb I$.
The interpolation bound in
Appendix~\ref{app:certificate-witness-bounds} therefore gives
$$
\max_{s\in\mathbb I}q_H(s)
\le b_h+\frac{117h^2}{8}
\le1.000058593894.
$$
We scale the primal matrix and objective by
$\rho=1/1.000058593894$, obtaining the achievable lower bound
$\lambda_{\mathrm{lo}}=\rho\lambda$.
The verified dual certificate gives
$\lambda_{\mathrm{up}}=\sum_i\alpha_i$.
After restoring the power scaling, these certificates yield
\begin{equation}
0.419645089928
\le \frac{\Lambda^\star(P;f)}{K^2P}
\le 0.419670224438.
\label{eq:expcertificate}
\end{equation}

We reconstruct the encoder from the scaled primal Gram
matrix and multiply its coefficients by $\sqrt P$.
Its coordinates are approximately
\begin{align*}
\frac{\Re\enc(s)}{\sqrt P}
&\approx1.8321854+2.1939971s-1.1944545e^s,\\
\frac{\Im\enc(s)}{\sqrt P}
&\approx-0.7900289+0.5211418e^s.
\end{align*}
The certified bounds apply to the encoder reconstructed from the verified rational matrix; the displayed coefficients are rounded approximations. Figure~\ref{fig:certified} illustrates the resulting encoder trajectory alongside the decay of the certified relative optimality gap as the verification grid is refined.

\begin{figure*}[t]
 \centering\paperfigure{%
% BEGIN figures/tikz/certified_synthesis
% Native exponential curve and five exact-certificate summary values.
\begin{tikzpicture}
\pgfplotstableread[row sep=\\]{
c1 c2 l11 l12 l22 \\
1.8321854032246792 -0.79002894450391214 2.1939971000776324 -1.194454497024295 0.52114176864689676 \\
}\expcoefficients
\pgfplotstablegetelem{0}{c1}\of\expcoefficients\edef\expCOne{\pgfplotsretval}
\pgfplotstablegetelem{0}{c2}\of\expcoefficients\edef\expCTwo{\pgfplotsretval}
\pgfplotstablegetelem{0}{l11}\of\expcoefficients\edef\expLOne{\pgfplotsretval}
\pgfplotstablegetelem{0}{l12}\of\expcoefficients\edef\expLTwo{\pgfplotsretval}
\pgfplotstablegetelem{0}{l22}\of\expcoefficients\edef\expLThree{\pgfplotsretval}
\begin{groupplot}[
  group style={group size=2 by 1,horizontal sep=0.57in},
  width=2.65in,height=1.70in,scale only axis,
  axis lines=left,axis line style={figGray!70},tick style={figGray!70},
  /tikz/font=\fontsize{8}{9.5}\selectfont,
  tick label style={font=\fontsize{8}{9.5}\selectfont},
  title style={font=\fontsize{8}{10}\selectfont,yshift=2pt},
  legend style={draw=none,font=\fontsize{8}{9.5}\selectfont,
    cells={anchor=west}},
]
\nextgroupplot[
  title={(a) Certified exponential encoder},axis equal image,
  xmin=-1.15,xmax=1.15,ymin=-1.15,ymax=1.15,
  xtick={-1,0,1},ytick={-1,0,1},
  xlabel={$\Re(\enc(s))/\sqrt P$},
  ylabel={$\Im(\enc(s))/\sqrt P$},
  legend to name=certifiedCurveLegend,legend columns=3,
  legend style={/tikz/every even column/.append style={column sep=6pt}},
  clip=false,
]
\addplot[figGray!65,densely dotted,line width=0.8pt,domain=0:360,samples=141]
  ({cos(x)},{sin(x)});
\addlegendentry{Peak boundary}
\addplot[figBlue,line width=1.5pt,domain=-1:1,samples=181]
  ({\expCOne+\expLOne*x+\expLTwo*exp(x)},{\expCTwo+\expLThree*exp(x)});
\addlegendentry{Feasible encoder}
\addplot[figBlue,only marks,mark=o,mark size=2.2pt,
  mark options={fill=white,line width=0.9pt},samples at={-1,0.795,1}]
  ({\expCOne+\expLOne*x+\expLTwo*exp(x)},{\expCTwo+\expLThree*exp(x)});
\addlegendentry{Candidate contacts}
\node[font=\fontsize{7}{8}\selectfont,anchor=south east,xshift=24pt,yshift=6pt]
  at (axis cs:-0.8012269497,-0.5983116019) {$s=-1$};
\node[font=\fontsize{7}{8}\selectfont,anchor=north east,xshift=-4pt,yshift=-1pt]
  at (axis cs:0.9313640908,0.3640087531) {$s=0.795$};
\node[font=\fontsize{7}{8}\selectfont,anchor=south east,xshift=-4pt,yshift=-2pt]
  at (axis cs:0.7793185491,0.6265812553) {$s=1$};

\nextgroupplot[
  title={(b) Verification-grid refinement},
  xmode=log,ymode=log,xmin=43,xmax=1300,ymin=0.004,ymax=4,
  xtick={51,101,201,401,1001},xticklabels={51,101,201,401,1001},
  ytick={0.01,0.1,1},yticklabels={$0.01$,$0.1$,$1$},
  xlabel={Number of verification grid points},ylabel={Certified relative gap (\%)},
  grid=major,major grid style={figGray!18,line width=0.3pt},
  minor tick num=0,
]
\addplot[figBlue,line width=1.3pt,mark=*,mark size=2.1pt,
  mark options={fill=figBlue,draw=white,line width=0.4pt}]
  table[x=n,y=gap_pct,row sep=\\]{
n gap_pct lower upper \\
51 2.2866320556722495 0.41007391055747533 0.41967022443757662 \\
101 0.58173624879899311 0.41722885061660714 0.41967022443757662 \\
201 0.14617566484116248 0.41905676869686459 0.41967022443757662 \\
401 0.036688542782721593 0.41951625354773747 0.41967022443757662 \\
1001 0.005989109550760389 0.41964508992808314 0.41967022443757662 \\
};
\node[anchor=north east,font=\fontsize{8}{9.5}\selectfont,
  fill=white,inner sep=2pt,text=figBlue]
  at (rel axis cs:0.97,0.95) {$100(U-L)/U$};
\node[anchor=south east,font=\fontsize{8}{9.5}\selectfont,
  fill=white,inner sep=2pt,text=figBlue]
  at (axis cs:1015,0.0103) {$0.005989\%$};
\end{groupplot}
\node[anchor=north,yshift=-0.44in] at ($(group c1r1.south west)!.5!(group c2r1.south east)$)
  {\pgfplotslegendfromname{certifiedCurveLegend}};
\end{tikzpicture}
% END figures/tikz/certified_synthesis
}
 \caption{{\color{black} Certified exponential encoder. Left: the reconstructed encoder, unit peak boundary, and candidate contact inputs. Right: the certified relative gap under grid refinement, with the candidate and dual certificate fixed. The repaired encoder need not touch the peak boundary.}}
 \label{fig:certified}
\end{figure*}

\subsection{MSE under a Robustness Requirement}\label{sec:pareto-evidence}

For variance $V$, we optimize $b$ within the following encoder family, shared by all transmitters and satisfying the peak-power constraint for every $1\le b<2$:
$$
 \enc_b(s)=\sqrt P\left[\sqrt{b(2-b)}\,s+i(bs^2-1)\right],
 \qquad 1\le b<2,
$$
using the polynomial decoder $\dec_{\mathrm{strip}}$ from \eqref{eq:polynomial-decoder} with outputs clipped to $[0,1]$. Under this parameterization, the Gram matrix is $G=P\operatorname{diag}(b(2-b),b^2)$. The normalized robustness score follows from
Lemma~\ref{prop:score}, while the high-SNR MSE coefficient
follows from \eqref{eq:small-noise-mse} with $p=2$ and
$a^2=b(2-b)$:
$$
\bar\lambda(b):=\frac{{\Lambda(\enc_b;V)}}{K^2P}
 =\left[\frac{4}{b(2-b)}+\frac1{b^2}\right]^{-1},\quad
 \mathcal J_\pi(G)=\frac1{2K^2P}
 \left[\frac{4\mathbb E_\pi[m_1^2]}{b(2-b)}+\frac1{b^2}\right].
$$
We select $b$ to minimize $\mathcal J_\pi(G)$ subject to
$\bar\lambda(b)\ge\eta\bar\lambda(2/\sqrt3)$, where
$\eta\in[0,1]$ is the required fraction of the maximum robustness score. The encoder with $b=2/\sqrt3$ maximizes robustness. For $\eta\in(0,1]$, retaining a fraction $\eta$ of the maximum robustness score limits the uniform noise-amplification constant $L_\enc$ to at most $\eta^{-1/2}$ times its minimum value.

For the numerical evaluations, we set $P=1$ and $K\in\{4,16\}$.
We draw independent inputs either from the uniform distribution on $[-1,1]$ or
from the endpoint-Beta distribution, defined by
$(s+1)/2\sim\operatorname{Beta}(0.3,0.3)$. These input distributions induce the moment distribution $\pi$, with $\mathbb E_\pi[m_1^2]=1/(3K)$ and $5/(8K)$, respectively. Figure~\ref{fig:pareto} compares the relative reductions in the high-SNR MSE coefficient $\mathcal J_\pi(G)$ for different
encoder coefficients and illustrates the peak constraint~\eqref{eq:peak-model} through an enclosing ellipse~\cite{boyd2004}. For a shared encoder with invertible
$\mathsf U$, let $\boldsymbol a=-\mathsf U^{-1}\boldsymbol c$.
Its peak constraint holds precisely when
$$
\bigl((\phi_1(s),\phi_2(s))^\T-\boldsymbol a\bigr)^\T
G\bigl((\phi_1(s),\phi_2(s))^\T-\boldsymbol a\bigr)
\le P,\qquad s\in\mathbb I.
$$
This ellipse is centered at $\boldsymbol a$, and the quadratic
form equals $|\enc(s)|^2$. Conversely, any such enclosing ellipse
with $G\succ0$ gives a peak-feasible encoder by choosing
$\mathsf U^\T\mathsf U=G$ and
$\boldsymbol c=-\mathsf U\boldsymbol a$. Thus, relaxing the robustness requirement lets us tune $b$ to the input distribution and reduce the high-SNR average-MSE coefficient while retaining at least an $\eta$ fraction of the optimal robustness score.

\input{figures/robustness_mse.tex}

For uniform inputs, minimizing $\mathcal J_\pi(G)$ with $\eta=0$
retains $83.91\%$ and $60.91\%$ of the maximum robustness score
at $K=4$ and $K=16$, respectively. Requiring $90\%$ retention
selects $a=0.9038538$, $b=1.4278414$, and $c_0=-1$ in both cases.
Relative to the robustness-optimal encoder, this reduces the
high-SNR MSE coefficient by $17.68\%$ and $29.07\%$, respectively,
while increasing the optimal uniform noise-amplification constant $L_\enc$
by $5.41\%$. For endpoint-Beta inputs with $K=4$, the choice with $\eta=0$ retains $91.67\%$ of the maximum score, so the $90\%$
requirement is inactive.

We also evaluate eight retention levels at
$P/\sigma_z^2=20$ dB using $200{,}000$ independent input and noise
draws shared across encoders. Table~\ref{tab:pareto} compares the
predicted and measured MSE reductions for $90\%$ retention.
We compute nominal pointwise $95\%$ confidence intervals by the
paired delta method, accounting for the shared draws. 

%By retaining an $\eta$ fraction of optimal robustness, with
%$\eta>0$, we limit uniform noise amplification to at most
%$\eta^{-1/2}$ times its minimum value. 

\begin{table}[t]
\caption{MSE reduction (\%) relative to the robustness-optimal variance encoder, with at least $90\%$ of the maximum robustness score retained. U: uniform; B: endpoint-Beta.}
\label{tab:pareto}
\centering\small
\begin{tabular}{@{}ccrr@{}}\toprule
Inputs & $K$ & High-SNR limit & $20$ dB, paired $95\%$ interval\\\midrule
U & 4 & 17.68 & $17.99\ [17.84,18.14]$\\
U & 16 & 29.07 & $29.10\ [29.02,29.19]$\\
B & 4 & 9.79 & $10.85\ [10.70,11.00]$\\
B & 16 & 25.09 & $25.12\ [25.00,25.24]$\\\bottomrule
\end{tabular}
\end{table}

\subsection{Finite-SNR Performance}\label{sec:finite-noise}

We compare the average MSE $R_\pi(\enc,\dec;\sigma_z^2)$ across SNR values for the robustness-optimal encoders, the baselines in Figure~\ref{fig:curves}, and prior-aware encoders. We design the prior-aware encoders using the input distribution to minimize the high-SNR average-MSE coefficient $\mathcal J_\pi(G)$.

For $p\in\{2,4\}$, write $f_p:=f_{\chi_p,\chi_p}=m_2-m_1^p$, where $m_2=K^{-1}\sum_k s_k^p$ and $f_2=V$. We use the degree-$p$ polynomial encoder family introduced in Section~\ref{sec:quadratic-quartic} and decode using \eqref{eq:polynomial-decoder}, followed by output clipping to $[0,1]$. For independent inputs with a common continuous density on $(-1,1)$, we obtain
\begin{equation}
 \lim_{\sigma_z^2\downarrow0}
 \frac{R_\pi(\enc,\dec;\sigma_z^2)}{\sigma_z^2}
 =\frac{1}{2K^2P}\left(\frac{1}{b^2}
 +\frac{p^2\mathbb E_\pi[m_1^{2p-2}]}{a^2}\right).
 \label{eq:small-noise-mse}
\end{equation}
We obtain the limit in \eqref{eq:small-noise-mse} from the local expansion of the polynomial decoder and its uniform error bound.\footnote{Under the stated
input assumptions, $|m_1|<1$ and $0<f_p<1$ almost surely, so
the strip projection and output clipping are locally inactive.
The uniform error bound in
Appendix~\ref{app:strip-decoder-proof} permits dominated
convergence, following Step~1 of
Appendix~\ref{genstat:information}. This argument does not
require a density for the induced moment distribution $\pi$.}
The two terms correspond to errors in estimating $m_2$ and $m_1$, respectively. The input distribution determines their relative weights. 

For the orthogonal polynomial encoders, whose coefficient
vectors satisfy $\boldsymbol u^\T\boldsymbol v=0$, set
$A=Pa^2$, $B=Pb^2$, and
$M_{\rm eff}=p\sqrt{\mathbb E_\pi[m_1^{2p-2}]}>0$.
Equation~\eqref{eq:small-noise-mse} then gives
$2K^2\mathcal J_\pi(G)=\frac{M_{\rm eff}^2}{A}+\frac1B.$
This objective has the same form as the robustness
optimization in Proposition~\ref{thm:parabola}, with
$M$ replaced by $M_{\rm eff}$ and unchanged peak-power
constraints. We therefore obtain the prior-aware
coefficients from the same quadratic and quartic formulas.
We compute these moments analytically and use the decoding rule~\eqref{eq:polynomial-decoder} with output clipping for all designs. These encoder coefficients minimize the high-SNR coefficient within the family, but need not minimize finite-SNR MSE.

We retain $P=1$, $K\in\{4,16\}$, and the uniform and endpoint-Beta input distributions. For each input distribution and each value of $K$, we generate $200{,}000$ independent input vectors and standardized Gaussian noise samples. We share these draws across encoders and desired functions, scaling the noise to test $P/\sigma_z^2$ from $-10$ to $30$ dB in $5$-dB increments. Paired standard errors account for the shared draws; the reported confidence intervals are pointwise.

\input{figures/finite_noise.tex}

\begin{table}[t]
\caption{MSE at $P/\sigma_z^2=20$ dB, in units of $10^{-4}$. U: uniform; B: endpoint-Beta. Robust: robustness-optimal encoder; prior-aware: encoder selected to minimize the high-SNR MSE coefficient within the orthogonal family.}
\label{tab:finite-noise}
\centering\small
\setlength{\tabcolsep}{5pt}
\begin{tabular}{@{}ccrrrrrr@{}}\toprule
 & & \multicolumn{3}{c}{$p=2$} & \multicolumn{3}{c}{$p=4$}\\
\cmidrule(lr){3-5}\cmidrule(l){6-8}
Inputs & $K$ & Baseline & Robust & Prior-aware & Baseline & Robust & Prior-aware\\\midrule
U & 4  & 4.055 & 3.318 & 2.684 & 4.137 & 3.341 & 1.652\\
B & 4  & 4.787 & 4.075 & 3.634 & 5.627 & 4.548 & 3.338\\
U & 16 & 0.2115 & 0.1630 & 0.09922 & 0.2450 & 0.1970 & 0.05521\\
B & 16 & 0.2252 & 0.1772 & 0.1232 & 0.2471 & 0.1987 & 0.06590\\\bottomrule
\end{tabular}
\end{table}

Figure~\ref{fig:finite-noise} and Table~\ref{tab:finite-noise}
show MSE reductions of $6.12\%$--$22.92\%$ for the robustness-optimal
encoders relative to the baselines across the $72$ tested
combinations of desired function, input distribution, $K$, and SNR. Prior-aware tuning gives a further $5.69\%$--$72.01\%$
reduction, but can substantially reduce robustness. For the
fourth-order Jensen gap with uniform inputs and $K=16$, the
normalized robustness score $\Lambda(\enc;f_4)/(K^2P)$ decreases from $0.05853$ to $0.007273$.

At low SNR, a decoder based only on the input distribution can
outperform the implemented decoder. For variance with uniform
inputs and $K=4$, the constant prior-mean estimator has exact MSE $11/480\approx0.02292$, independently of SNR. At $-10$ dB, the robustness-optimal encoder with the clipped projection decoder yields an empirical MSE of $0.1032$. Thus, the gains over the
baselines do not establish finite-SNR Bayesian optimality.

Clipping also affects convergence to the high-SNR limit. At
$30$ dB, the ratios $R_\pi/\sigma_z^2$ for endpoint-Beta inputs
with $K=4$ remain $3.49\%$--$7.19\%$ below their limiting values.
These differences exceed the Monte Carlo uncertainty, showing
that the asymptotic coefficient alone does not capture the
observed performance at this SNR. Nevertheless, Figure~\ref{fig:finite-noise} demonstrates that prior-aware encoder tuning based on the high-SNR coefficient continues to yield substantial MSE reductions across moderate and high SNRs, even where the asymptotic ratio has not yet converged.

%Moreover,  the experiments show that high-SNR encoder tuning can improve MSE even when the asymptotic approximation is inaccurate. 

\section{Conclusions}\label{sec:boundaries}

We have developed a geometric and optimization framework for over-the-air computation of functions of two additive statistics using continuous encoders $\enc_k$ and decoders $\dec$ under pointwise peak-power constraints. Our characterization establishes when exact noiseless computation forces every continuous encoder collection to preserve both statistics $m_1,m_2$, reducing infinite-dimensional encoder design to optimization over a common finite-dimensional representation. Within this family, we formulated robustness maximization as an exact convex semi-infinite program—eliminating rank constraints without a relaxation gap—and constructed a Lipschitz-extension decoder that achieves the minimum uniform noise-amplification constant whenever $\Lambda(\enc;f)>0$. For continuously differentiable postprocessing functions $\psi_1,\psi_2$ and any fixed encoder collection with an invertible coefficient matrix $\mathsf U$, this same decoder attains the minimum leading average Gaussian MSE coefficient for every regular moment distribution. For worst-case Gaussian MSE, we further proved that a single encoder--decoder pair independent of the noise power achieves the optimal asymptotic minimax coefficient over $\mathcal B_P$ under the conditions of Theorem~\ref{thm:identifiable-minimax}.

Our closed-form polynomial designs and numerically certified exponential encoder illustrate both analytical and computational synthesis within this framework. Moreover, our numerical evaluations demonstrate how prior-aware tuning can substantially reduce average MSE $R_\pi$ at the expense of uniform robustness $\Lambda(\enc;f)$, highlighting that the optimal encoder geometry depends fundamentally on the chosen error criterion. 

Several natural directions remain for future investigation. On the structural side, it is of interest to determine whether the affine encoder characterization remains necessary for $K=3$, to generalize the framework to functions of more than two aggregate statistics across multiple channel uses or subcarriers, and to relax continuity to piecewise-continuous encoders that bridge analog mappings with quantized constellations. On the estimation side, our low-SNR comparisons motivate jointly optimizing encoders and prior-aware decoders for finite-SNR average MSE subject to an explicit uniform robustness guarantee. Finally, connecting the deterministic robustness score to finite-SNR information-theoretic distortion bounds could further clarify how worst-case noise resilience constrains average performance near fundamental limits.

\section{Acknowledgment}

The author also acknowledges using ChatGPT to assist with manuscript writing, code development, and proof assistance. ChatGPT also assisted with the Lean formalization and verifying  the proofs of all theorems. 

\appendices

\section{Encoding Proofs}

\subsection{Proof of Theorem~\ref{thm:rigidity}}
\label{app:rigidity-proof}

We prove necessity in three steps. We first use two choices
of a fourth-node input to establish local recovery of both
partial moment sums. We then show that the aggregate depends
only on these sums and establish encoder differentiability.
Finally, we derive common coefficients across all nodes.
Sufficiency follows by inverting the resulting affine map.

Assume that both $\psi_1$ and $\psi_2$ are nonconstant and at least one is
nonaffine. By the assumption on affine subintervals, neither $\psi_j'$ vanishes identically on an open interval.
Write $[a_{j,k},b_{j,k}]=\phi_{j,k}(\mathbb I)$; independence of
$1,\phi_{1,k},\phi_{2,k}$ gives $a_{j,k}<b_{j,k}$.
Choose any three nodes and a fourth, relabeled $1,2,3,4$.
Fix the remaining inputs and put
$d_j=\sum_{k=5}^K\phi_{j,k}(s_k^*)$, with an empty sum equal to zero.
For the partial sums $p_j=\sum_{k=1}^3\phi_{j,k}(s_k)$ and
input $t$ at node $4$, the desired function value is
 \begin{align*}
 C_t(\boldsymbol p)=\sum\nolimits_{j=1}^2
 \psi_j\!\left(\frac{p_j+d_j+\phi_{j,4}(t)}K\right),
 \qquad t\in\mathbb I,~ 
 \boldsymbol p\in
 \prod_{j=1}^2\left(\sum_{k=1}^3a_{j,k},\sum_{k=1}^3b_{j,k}\right).
 \end{align*}
The rectangle may include unattainable partial moment vectors $\boldsymbol p$. Nevertheless, every argument of $\psi_j$ lies strictly inside $\mathbb J_j$,
so $C_t$ is $C^1$ throughout the rectangle.

\emph{Step 1: Two desired function values locally determine both partial sums.}
Let $\mathcal O$ be the set of points where
$\det[\nabla C_t\ \nabla C_u]\ne0$ for some $t,u\in\mathbb I$.
Continuity makes $\mathcal O$ open. We show that it is dense by
contradiction. Otherwise, an open rectangle contains no point of
$\mathcal O$, so every gradient determinant vanishes there.
After translating and scaling the coordinates, this gives
%------------
\begin{align*}
     A(\xi+x_t)B(\eta+y_u)=A(\xi+x_u)B(\eta+y_t)
 \quad(\xi\in U,\ \eta\in V,\ t,u\in\mathbb I),
\end{align*}
%------------
where $A=\psi_1'$, $B=\psi_2'$, $x_t=\phi_{1,4}(t)/K$,
$y_t=\phi_{2,4}(t)/K$, and $U,V$ are nonempty open intervals.

Fix $u$. Choose $\xi_0\in U$, $\eta_0\in V$ with
$A(\xi_0+x_u)B(\eta_0+y_u)\ne0$; neither derivative vanishes
on an open interval. Evaluating the determinant identity at these
points defines the common ratio
$$
r(t):=\frac{A(\xi_0+x_t)}{A(\xi_0+x_u)}
=\frac{B(\eta_0+y_t)}{B(\eta_0+y_u)}.
$$
Setting only $\eta=\eta_0$ or $\xi=\xi_0$ then gives
$$
 A(\xi+x_t)=r(t)A(\xi+x_u),\qquad
 B(\eta+y_t)=r(t)B(\eta+y_u).
$$
If $r(t)=0$, the first identity makes $A$ vanish on an open
interval. Thus $r(t)\ne0$; continuity and $r(u)=1$ imply $r(t)>0$.
Likewise, $A(\xi+x_u)=0$ would force
$A(\xi+x_t)=0$ for every $t$, so $A$ would vanish on the
nondegenerate interval $\xi+\phi_{1,4}(\mathbb I)/K$.
Hence $A$ is nonzero at every argument above, as is $B$.

Put $X=\phi_{1,4}(\mathbb I)/K$. Since $r(t)$ depends only on $x_t$
and $A\ne0$, taking logarithms gives
$\log|A(\xi+x)|=h(\xi)+k(x)$ for $(\xi,x)\in U\times X$,
with $h,k$ continuous.

For $x\in X^\circ$ and sufficiently small $\delta$, the arguments
$(\xi+\delta)+x$ and $\xi+(x+\delta)$ coincide. Therefore
$$
h(\xi+\delta)-h(\xi)=k(x+\delta)-k(x)=:D(\delta).
$$
The left side is independent of $x$, and the right side is independent
of $\xi$. These common increments are continuous and satisfy
$D(\delta+\varepsilon)=D(\delta)+D(\varepsilon)$ for sufficiently
small increments. Rational subdivision and continuity give
$D(\delta)=\lambda_1\delta$. Subdividing longer intervals then gives
$h(\xi)=\lambda_1\xi+h_0$ and $k(x)=\lambda_1x+k_0$;
continuity covers the endpoints of $X$. Hence $|A|$ is exponential
on $U+X$. Applying the same argument to $B$ yields
$$
r(t)=e^{\lambda_1(x_t-x_u)}=e^{\lambda_2(y_t-y_u)},
\qquad
\lambda_1\phi_{1,4}(t)-\lambda_2\phi_{2,4}(t)
=\lambda_1\phi_{1,4}(u)-\lambda_2\phi_{2,4}(u).
$$
Independence of $1,\phi_{1,4},\phi_{2,4}$ forces
$\lambda_1=\lambda_2=0$. Thus $A=\psi_1'$ and $B=\psi_2'$
are constant on open intervals, making both $\psi_1$ and $\psi_2$
affine there. This contradicts the assumption that at least one
is nonaffine on every open subinterval.

Thus $\mathcal O$ is dense. At each of its points, the inverse-function
theorem gives a neighborhood where
$\boldsymbol p\mapsto(C_t(\boldsymbol p),C_u(\boldsymbol p))$ is one-to-one.
The choices $t,u$ compare alternative input vectors in the proof;
they do not require additional channel uses.

\emph{Step 2: The aggregate depends only on the partial sums, and the encoders are differentiable.}

We represent each complex encoder value $\mathscr E_k(s)$ by its
in-phase and quadrature components in $\R^2$. For each node $k$, the family $\{ (\phi_{1,k}'(s),\phi_{2,k}'(s))^\T:
s\in\mathbb I^\circ \}$ spans $\R^2$.
Otherwise, some $(\alpha,\beta)\ne(0,0)$ would satisfy
$\alpha\phi_{1,k}'+\beta\phi_{2,k}'=0$, giving
$\alpha\phi_{1,k}+\beta\phi_{2,k}=\text{constant}$ by integration.
This contradicts independence of $1,\phi_{1,k},\phi_{2,k}$.
We can therefore choose $a,b\in\mathbb I^\circ$ at nodes $2,3$ with
$$
 \det\begin{pmatrix}\phi_{1,2}'(a)&\phi_{1,3}'(b)\\
                         \phi_{2,2}'(a)&\phi_{2,3}'(b)\end{pmatrix}\ne0.
$$
Fix any $s_1=s_0\in\mathbb I^\circ$.
The map $(s_2,s_3)\mapsto(p_1,p_2)$ has the displayed nonsingular
Jacobian at $(a,b)$, so it has a local $C^1$ inverse.
Its image contains an open neighborhood of partial moment vectors
inside the rectangle of Step~1: the nonzero derivative of each
coordinate excludes its global extrema.

By density of $\mathcal O$, we can adjust $a,b$ so that the partial moment vector lies in $\mathcal O$, while preserving the nonsingular Jacobian. Choose an open neighborhood $B\subset\mathcal O$ where the corresponding $C_t,C_u$ jointly determine $\boldsymbol p$.

Suppose two nearby input triples have partial moment vectors
$\boldsymbol p,\boldsymbol p'\in B$ and equal encoded sums.
Complete both triples with the same remaining inputs, using first
$t$ and then $u$ at node $4$. Identifiability in \eqref{eq:exactness}
and the local one-to-one property from Step~1 give
$$
\sum_{i=1}^3\mathscr E_i(s_i)=\sum_{i=1}^3\mathscr E_i(s_i')
\ \Longrightarrow\ 
\begin{pmatrix}C_t(\boldsymbol p)\\C_u(\boldsymbol p)\end{pmatrix}
=\begin{pmatrix}C_t(\boldsymbol p')\\C_u(\boldsymbol p')\end{pmatrix}
\ \Longrightarrow\ \boldsymbol p=\boldsymbol p'.
$$
Thus, the encoded sum locally determines both $p_1$ and $p_2$, although
$\dec_0$ need only recover the scalar function $f$.

For the reverse dependence, use the local coordinates
$(s_1,s_2,s_3)\mapsto(s_1,p_1,p_2)$.
The same nonsingular Jacobian gives a $C^1$ inverse near
$(s_0,a,b)$. Restrict $(s_1,\boldsymbol p)$ to a product of an
interval and an open subset of $B$. Varying $s_1$ at fixed
$\boldsymbol p$ then adjusts $s_2,s_3$ to preserve both partial sums.

Write the continuous encoded sum in these coordinates as
$\mathscr R(\boldsymbol p,s_1)$. The collision argument defines an auxiliary moment-recovery map $\kappa$ with
$\kappa(\mathscr R(\boldsymbol p,s_1))=\boldsymbol p$; it is distinct from the function decoder $\dec_0$. For each fixed $\tau$, the recovery identity makes
$\mathscr R_\tau:=\mathscr R(\cdot,\tau)$ injective.
By invariance of domain~\cite[Theorem~2B.3]{hatcher},
its image is open. For fixed $\boldsymbol p$ and
$s_1$ sufficiently close to $\tau$, continuity therefore gives
$\mathscr R(\boldsymbol p,s_1)=\mathscr R(\boldsymbol q,\tau)$
for some $\boldsymbol q$ in the chosen moment neighborhood.
Applying $\kappa$ yields $\boldsymbol q=\boldsymbol p$.
Thus, $\mathscr R(\boldsymbol p,\cdot)$ is locally constant
and hence constant on the input interval.
This argument requires no regularity of $\kappa$. 

Thus, a continuous map $\mathscr H$ satisfies,
on a product of sufficiently small input intervals,
\begin{equation}
 \sum\nolimits_{i=1}^3\mathscr E_i(s_i)
 =\mathscr H\!\left(\sum\nolimits_{i=1}^3\phi_{1,i}(s_i),
                         \sum\nolimits_{i=1}^3\phi_{2,i}(s_i)\right).
 \label{eq:localfactor}
\end{equation}
Equation~\eqref{eq:localfactor} is a restricted additive functional equation, related to~\cite[Theorem~1.6]{charalambides2013}. Since our preprocessing curves may contain straight-line portions, we establish the required encoder structure directly.
We establish differentiability of $\mathscr E_k$ before taking any encoder
derivatives. The pair map
$(s_2,s_3)\mapsto\boldsymbol\omega=
(\phi_{1,2}(s_2)+\phi_{1,3}(s_3),
 \phi_{2,2}(s_2)+\phi_{2,3}(s_3))^\T$
has a local $C^1$ inverse. Choose a nonnegative smooth weight $\varrho$ with integral one, supported strictly inside its image. Averaging \eqref{eq:localfactor} over $\boldsymbol\omega$ gives
$$
 \mathscr E_1(s_1)=\int
 \mathscr H\bigl((\phi_{1,1}(s_1),\phi_{2,1}(s_1))^\T+\boldsymbol\omega\bigr)
 \varrho(\boldsymbol\omega)\,d\boldsymbol\omega-\boldsymbol C_0,
$$
where $\boldsymbol C_0$ averages the encoded symbols at nodes $2,3$ and
does not depend on $s_1$. Shrink the input interval and the support of
$\varrho$ so that all arguments of $\mathscr H$ stay in a compact subset
of its domain. Multiply $\mathscr H$ by a smooth cutoff supported inside its domain
and equal to one near this subset. Extend the product by zero to $\R^2$.

Denote this extension by $\widetilde{\mathscr H}$. After changing
variables to $\boldsymbol v=\boldsymbol x+\boldsymbol\omega$, the integral becomes
$$
F(\boldsymbol x)=\int\widetilde{\mathscr H}(\boldsymbol v)
\varrho(\boldsymbol v-\boldsymbol x)\,d\boldsymbol v,
\qquad
\partial_{x_j}F(\boldsymbol x)=-\int
\widetilde{\mathscr H}(\boldsymbol v)
(\partial_j\varrho)(\boldsymbol v-\boldsymbol x)\,d\boldsymbol v.
$$
Derivatives act on the smooth weight, so $F$ is smooth.
Composing $F$ with the $C^1$ preprocessing curve makes
$\mathscr E_1$ continuously differentiable near $s_0$.
Arbitrary $s_0$ and relabeling give $\mathscr E_k\in C^1(\mathbb I^\circ)$
for every node. This averaging does not change the transmission scheme.

\emph{Step 3: All encoders use the same two moment coefficients.} Treat the signal components separately, writing $e_k$ for either real
component of $\mathscr E_k$. Consider three nodes whose partial moment
vector lies in $\mathcal O$, with two independent moment-derivative columns.
By \eqref{eq:localfactor}, their encoded sum stays constant under local
input changes that preserve both partial moment sums.

Let $\mathsf J$ be the $2\times3$ Jacobian of the partial moment
map and let $\boldsymbol q^\T=(e_1'(s),e_2'(a),e_3'(b))$.
Since $\mathsf J$ has rank two, the implicit-function theorem
realizes every $\boldsymbol h$ with $\mathsf J\boldsymbol h=0$
as the tangent to a curve with constant partial moments.
Equation~\eqref{eq:localfactor} makes the encoded sum constant
on this curve, so $\mathsf J\boldsymbol h=0$ implies
$\boldsymbol q^\T\boldsymbol h=0$.
Thus, $\boldsymbol q^\T$ lies in the row span of $\mathsf J$,
and
$$
 \det\begin{pmatrix}
 \phi_{1,1}'(s)&\phi_{1,2}'(a)&\phi_{1,3}'(b)\\
 \phi_{2,1}'(s)&\phi_{2,2}'(a)&\phi_{2,3}'(b)\\
 e_1'(s)&e_2'(a)&e_3'(b)
 \end{pmatrix}=0.
$$
The determinant also vanishes outside $\mathcal O$.
For fixed $s$ and independent reference columns at $(a,b)$,
local invertibility and density give pairs converging to $(a,b)$
whose partial moment vectors lie in $\mathcal O$.
Their determinants are zero; continuity of the derivatives from
Step~2 implies that the limiting determinant is also zero. The argument applies to any three
nodes, using a fourth as in Step~1.

Fix inputs $s_1^*,s_2^*$ with independent moment-derivative
columns. They determine unique $\alpha,\beta$ through
$e_i'(s_i^*)=\alpha\phi_{1,i}'(s_i^*)+\beta\phi_{2,i}'(s_i^*)$,
$i=1,2$. For any other node $k$, subtract $\alpha$ times the first
row and $\beta$ times the second from the last row of the
three-node determinant. Its first two entries vanish and their
complementary $2\times2$ minor is nonzero, giving
$$
 e_k'(s)=\alpha\phi_{1,k}'(s)+\beta\phi_{2,k}'(s)
 \qquad(s\in\mathbb I^\circ).
$$
Choose independent reference columns at nodes $3,4$.
Their encoder derivatives already satisfy this identity, so they
determine the same $\alpha,\beta$. Using them in the determinant
extends the identity to all inputs at nodes $1,2$.
Integration gives
$e_k(s)=\gamma_k+\alpha\phi_{1,k}(s)+\beta\phi_{2,k}(s)$,
with only $\gamma_k$ depending on the node. Applying this result
to both signal components and using endpoint continuity gives
\eqref{eq:affine}.

Finally, $u,v$ must be linearly independent over $\R$. Otherwise,
some $\boldsymbol h\ne0$ satisfies $uh_1+vh_2=0$. Choose
$\boldsymbol p$ in an attainable open neighborhood from Step~2.
For small $\varepsilon\ne0$, both $\boldsymbol p$ and
$\boldsymbol p+\varepsilon\boldsymbol h$ lie there, but
$u(p_1+\varepsilon h_1)+v(p_2+\varepsilon h_2)=up_1+vp_2$.
These equal encoded sums contradict local recovery of $\boldsymbol p$. Conversely, set $\mathsf U=[\boldsymbol u\ \boldsymbol v]$.
Real linear independence makes $\mathsf U$ invertible, so the
receiver recovers both moments from
$$
\boldsymbol m=\frac1K\mathsf U^{-1}
\left(\boldsymbol r_\enc(\boldsymbol s)-\sum_{k=1}^K\boldsymbol c_k\right).
$$
It then evaluates $\psi_1(m_1)+\psi_2(m_2)$, proving sufficiency.

\subsubsection{Excluded Cases} 
The excluded cases need only one real aggregate. If $\psi_1$ is
constant, real encoders proportional to $\phi_{2,k}$ suffice;
the other case is analogous. If $\psi_j(v)=\alpha_jv+\beta_j$
for both $j$, real encoders proportional to
$\alpha_1\phi_{1,k}+\alpha_2\phi_{2,k}$ suffice.
Boundedness of the preprocessing functions permits a common
nonzero scaling satisfying \eqref{eq:peak-model} for any $P>0$.
A constant $f$ needs only zero encoders.
Independence of $1,\phi_{1,k},\phi_{2,k}$ excludes representation
\eqref{eq:affine} of these real encoders with $u,v$ independent over $\R$, so the universal characterization fails.

\subsubsection{Counterexample with Smooth Postprocessing Functions}

For the counterexample in Remark~\ref{rem:rigidity-smoothness},
let $K=4$, $\phi_{1,k}(s)=s$, $\phi_{2,k}(s)=s^2$, and
$$
 \rho(t)=
 \begin{cases}e^{-1/t},&t>0,\\0,&t\le0,\end{cases}
 \qquad
 \psi_1(u)=\rho(u-3/4),\qquad \psi_2(v)=v.
$$
Both receiver postprocessing functions $\psi_1$ and $\psi_2$ are smooth, but $\psi_1$ is constant
for $u\le3/4$. The decoder therefore needs the value of $m_1$ only
when $m_1>3/4$.
Set $q(s)=s-\rho(-s)$, $\alpha=\sqrt{P/5}$, and use the shared encoder
$\enc(s)=\alpha[s^2+iq(s)]$.
Since $q(s)\le s$, with equality for $s\ge0$, and $|q(s)|\le2$,
we have $|\enc(s)|^2=\alpha^2(s^4+q(s)^2)\le P$.
Write ${q_{\mathrm{av}}}=4^{-1}\sum_{k=1}^4q(s_k)$.

If any $s_k\le0$, then $m_1\le(0+1+1+1)/4=3/4$.
Thus, $m_1>3/4$ forces every input to be positive and gives
$q_{\mathrm{av}}=m_1$. Otherwise, $q_{\mathrm{av}}\le m_1\le3/4$,
so $\rho(q_{\mathrm{av}}-3/4)=\rho(m_1-3/4)=0$.
 Thus, for a noiseless received sample $y\in\mathcal R_\enc$, the decoder
$$
 \dec_0(y)=\frac{\Re y}{4\alpha}
 +\rho\!\left(\frac{\Im y}{4\alpha}-\frac34\right)
$$
recovers $f(\boldsymbol s)=m_2+\rho(m_1-3/4)$ exactly. The received components preserve $m_2$ and $q_{\mathrm{av}}$, which suffice for function computation even when $q_{\mathrm{av}}\ne m_1$. If $q$ belonged to the span of $1,s,s^2$, its agreement with $s$
on $(0,1)$ would force agreement everywhere, contradicting
$q(s)<s$ for $s<0$. Hence this encoder lies outside
\eqref{eq:affine}, although all prescribed functions are smooth.

\subsubsection{Scope of the Encoder Characterization}

When $1,\phi_{1,k},\phi_{2,k}$ are linearly independent
at every node and identifiable computation requires
\eqref{eq:affine} with $u,v$ linearly independent over $\R$,
continuous real-valued encoders $\mathscr E_k$ cannot permit exact computation of $f$.

Indeed, a real-valued encoder in \eqref{eq:affine} would satisfy
$\Im c_k+(\Im u)\phi_{1,k}(s)+(\Im v)\phi_{2,k}(s)=0$
for every $s$. Independence of the local triples forces
$\Im u=\Im v=0$, contradicting real linear independence of $u,v$.
 Dimension requirements for shared additive representations of symmetric functions are studied in~\cite{zaheer2017,wagstaff19,wagstaff22}. The fourth-node comparison requires $K\ge4$. For $K=3$, the affine form \eqref{eq:affine} remains sufficient, but we do not establish its necessity.

Identifiability alone does not ensure positive robustness.
For $K\ge4$, consider $\phi_{1,k}(s)=s$,
$\phi_{2,k}(s)=s^2$, and
$f(m_1,m_2)=\sqrt{1-m_1}+m_2$.
Along the attainable curve $(m_1,m_2)=(1-h,(1-h)^2)$,
the function-value change is of order $\sqrt h$, whereas
the moment displacement is of order $h$.
Thus, $f$ is not Lipschitz on $\mathcal M_K$.
Theorems~\ref{thm:rigidity} and~\ref{thm:sdp} therefore give
$\Lambda^\star(P;f)=0$, although encoders of the form
\eqref{eq:affine} with linearly independent $u,v$ permit
exact noiseless computation.

% \section{Proof of Optimal Design}\label{app:design-proof}

\subsection{Proof of Theorem~\ref{thm:sdp}}\label{app:general-robust}

\emph{Step 1: Every encoder in $\mathcal B_P^{\mathrm{all}}$ yields a feasible matrix solution.}
Fix $\enc\in\mathcal B_P^{\mathrm{all}}$ and inputs
$\boldsymbol s,\boldsymbol s'\in\mathbb I^K$ inducing moment vectors
$\boldsymbol m,\boldsymbol m'$, and write
$\boldsymbol d=\boldsymbol m-\boldsymbol m'$. 
In the noiseless aggregate difference, the node offsets cancel, giving
$$
|r_\enc(\boldsymbol s)-r_\enc(\boldsymbol s')|^2
=\|K\mathsf U\boldsymbol d\|_2^2
=K^2\boldsymbol d^\T G\boldsymbol d.
$$
Consequently, the pairwise separation constraint \eqref{eq:general-secant-program(c)} holds if and only if $\Lambda(\enc;f)\ge K^2\lambda$.
Since the real coefficient matrix $[\boldsymbol c_k\ \mathsf U]\in\R^{2\times 3}$ has two rows corresponding to the in-phase and quadrature components, its Gram matrix
$H_k=[\boldsymbol c_k\ \mathsf U]^\T[\boldsymbol c_k\ \mathsf U]$
satisfies $H_k\succeq0$ and $\operatorname{rank}(H_k)\le2$ and shares the common lower-right block $G=\mathsf U^\T\mathsf U$ in \eqref{eq:general-secant-program(a)}.
With $\boldsymbol q_k(s)=(1,\phi_{1,k}(s),\phi_{2,k}(s))^\T$, the peak-power constraint \eqref{eq:peak-model} becomes
$\boldsymbol q_k(s)^\T H_k\boldsymbol q_k(s)=|\enc_k(s)|^2\le P$,
which is \eqref{eq:general-secant-program(b)}.
Thus, every encoder $\enc\in\mathcal B_P^{\mathrm{all}}$ induces a feasible solution to \eqref{eq:general-secant-program} with $\lambda=\Lambda(\enc;f)/K^2$, establishing $\sup_{\enc\in\mathcal B_P^{\mathrm{all}}}\Lambda(\enc;f)\le K^2\lambda_\star$.

\emph{Step 2: Rank-two reconstruction and zero relaxation gap.}
Conversely, let $(H_1,\ldots,H_K,\lambda)$ be any feasible solution to \eqref{eq:general-secant-program}, where $H_k$ may have rank three. We construct a valid encoder in $\mathcal B_P^{\mathrm{all}}$ that preserves $G$ and $\lambda$ without increasing transmitted power.
Partition each matrix as $H_k=\left(\begin{smallmatrix}h_k&\boldsymbol\ell_k^\T\\
\boldsymbol\ell_k&G\end{smallmatrix}\right)\succeq0$.
We first establish the generalized Schur complement conditions $GG^\dagger\boldsymbol\ell_k=\boldsymbol\ell_k$ and $h_k\ge\boldsymbol\ell_k^\T G^\dagger\boldsymbol\ell_k$, where $G^\dagger$ inverts the positive eigenvalues of $G$ and leaves its zero eigenvalues unchanged:
\begin{itemize}
\item For any $\boldsymbol z\in\ker G$, evaluating the quadratic form of $H_k\succeq0$ at $(t,\boldsymbol z^\T)^\T$ gives $0\le h_kt^2+2t\boldsymbol\ell_k^\T\boldsymbol z$ for all $t\in\R$. Letting $t\to0$ with either sign forces $\boldsymbol\ell_k^\T\boldsymbol z=0$, so $\boldsymbol\ell_k\perp\ker G$, which implies $\boldsymbol\ell_k\in\operatorname{range}G$ and $GG^\dagger\boldsymbol\ell_k=\boldsymbol\ell_k$.
\item Evaluating the quadratic form of $H_k\succeq0$ at $(1,-(G^\dagger\boldsymbol\ell_k)^\T)^\T$ and using $GG^\dagger\boldsymbol\ell_k=\boldsymbol\ell_k$ gives $0\le h_k-2\boldsymbol\ell_k^\T G^\dagger\boldsymbol\ell_k+\boldsymbol\ell_k^\T G^\dagger GG^\dagger\boldsymbol\ell_k=h_k-\boldsymbol\ell_k^\T G^\dagger\boldsymbol\ell_k$.
\end{itemize}
Now define the encoder parameters $\mathsf U=G^{1/2}$ and $\boldsymbol c_k=G^{\dagger/2}\boldsymbol\ell_k$ (which remain well defined even when $G$ is singular). By construction, $\mathsf U^\T\mathsf U=G$, $\mathsf U^\T\boldsymbol c_k=GG^\dagger\boldsymbol\ell_k=\boldsymbol\ell_k$, and $\|\boldsymbol c_k\|_2^2=\boldsymbol\ell_k^\T G^\dagger\boldsymbol\ell_k\le h_k$.
The reconstructed rank-at-most-two Gram matrix
$$
\widetilde H_k:=[\boldsymbol c_k\ \mathsf U]^\T[\boldsymbol c_k\ \mathsf U]
=\begin{pmatrix}\|\boldsymbol c_k\|_2^2&\boldsymbol\ell_k^\T\\\boldsymbol\ell_k&G\end{pmatrix}
$$
differs from $H_k$ only in its $(1,1)$ entry. Since the first entry of $\boldsymbol q_k(s)$ is $1$, the transmitted power at node $k$ satisfies
$$
|\enc_k(s)|^2=\boldsymbol q_k(s)^\T\widetilde H_k\boldsymbol q_k(s)
=\boldsymbol q_k(s)^\T H_k\boldsymbol q_k(s)
-\underbrace{\bigl(h_k-\|\boldsymbol c_k\|_2^2\bigr)}_{\ge0}\le P.
$$
Because the lower-right block $G$ is unchanged, the squared pairwise separation distances $K^2\boldsymbol d^\T G\boldsymbol d$ and objective value $\lambda$ are preserved. Thus, omitting the rank constraint introduces no relaxation gap, and $\sup_{\enc\in\mathcal B_P^{\mathrm{all}}}\Lambda(\enc;f)=K^2\lambda_\star$.

\emph{Step 3: Uniform boundedness and attainment of $\lambda_\star$.}
To prove that $\lambda_\star$ is attained by an encoder in $\mathcal B_P^{\mathrm{all}}$, we take a sequence of feasible encoders with scores $K^2\lambda_n\to K^2\lambda_\star$ (the zero encoder is feasible at $\lambda=0$) and show that their coefficients can be chosen within a compact set.
Write $\boldsymbol a_k(s)=(\phi_{1,k}(s),\phi_{2,k}(s))^\T$, fix a reference input $s_0\in\mathbb I$, and let $\mathsf Q$ be the orthogonal projector onto the subspace of local preprocessing variations:
$$
\mathcal V=\operatorname{span}\{
\boldsymbol a_k(s)-\boldsymbol a_k(t):
s,t\in\mathbb I,\ k=1,\ldots,K\}\subseteq\R^2.
$$
Because $\boldsymbol a_k(s)-\boldsymbol a_k(s_0)\in\mathcal V$, the orthogonal component $(I_2-\mathsf Q)\boldsymbol a_k(s)=(I_2-\mathsf Q)\boldsymbol a_k(s_0)$ is independent of $s$. We can therefore absorb this static component into the node offset by setting $\mathsf U'=\mathsf U\mathsf Q$ and $\boldsymbol c_k'=\boldsymbol c_k+\mathsf U(I_2-\mathsf Q)\boldsymbol a_k(s_0)$. Indeed, for every $s\in\mathbb I$,
$$
\boldsymbol c_k'+\mathsf U'\boldsymbol a_k(s)
=\boldsymbol c_k+\mathsf U\boldsymbol a_k(s)
-\mathsf U(I_2-\mathsf Q)[\boldsymbol a_k(s)-\boldsymbol a_k(s_0)]
=\boldsymbol c_k+\mathsf U\boldsymbol a_k(s).
$$
Since every transmitted symbol is unchanged, peak feasibility \eqref{eq:general-secant-program(b)} and all separation constraints \eqref{eq:general-secant-program(c)} continue to hold at the same $\lambda$ with the new common Gram block $\mathsf U'^\T\mathsf U'=\mathsf QG\mathsf Q$.

Projecting onto $\mathcal V$ ensures that the remaining coefficients are uniformly bounded:
\begin{itemize}
\item \emph{Bounding $\mathsf U'$:} Let $d=\dim\mathcal V\le2$. If $d=0$, then $\mathsf U'=0$. If $d>0$, choose a fixed basis $D=[\boldsymbol d_1\ \cdots\ \boldsymbol d_d]$ of $\mathcal V$, where each $\boldsymbol d_i=\boldsymbol a_{k_i}(s_i)-\boldsymbol a_{k_i}(t_i)$ is a difference of local preprocessing vectors at some node $k_i$. Then, $\mathsf U'\boldsymbol d_i=\mathsf U\boldsymbol d_i=\boldsymbol\enc_{k_i}(s_i)-\boldsymbol\enc_{k_i}(t_i)$ is a difference of two transmitted symbols, so the peak constraint \eqref{eq:peak-model} implies $\|\mathsf U'\boldsymbol d_i\|_2\le2\sqrt P$. Using the left inverse $D^\dagger=(D^\T D)^{-1}D^\T$, which satisfies $DD^\dagger=\mathsf Q$, we obtain the uniform bound
$$
\|\mathsf U'\|_2=\|\mathsf U'DD^\dagger\|_2
\le\|\mathsf U'D\|_{\mathrm F}\|D^\dagger\|_2
\le2\sqrt{dP}\,\|D^\dagger\|_2.
$$
\item \emph{Bounding $\boldsymbol c_k'$ and $\lambda$:} Evaluating the peak constraint at $s_0$ gives $\|\boldsymbol c_k'\|_2\le\sqrt P+\|\mathsf U'\|_2\|\boldsymbol a_k(s_0)\|_2$. Moreover, since $f$ is nonconstant on $\mathcal M_K$, we can fix attainable $\boldsymbol m,\boldsymbol m'$ with $f(\boldsymbol m)\ne f(\boldsymbol m')$. The distance between their noiseless aggregates is at most $2K\sqrt P$, so \eqref{eq:general-secant-program(c)} yields $\lambda|f(\boldsymbol m)-f(\boldsymbol m')|^2\le\boldsymbol d^\T G\boldsymbol d\le4P$, bounding $\lambda$.
\end{itemize}
Applying this reconstruction and projection to a maximizing sequence with $\lambda_n\to\lambda_\star$ produces a uniformly bounded sequence $(\boldsymbol c_{1,n}',\ldots,\boldsymbol c_{K,n}',\mathsf U_n')$. By compactness, a subsequence converges to a limit $(\boldsymbol c_{1,\star},\ldots,\boldsymbol c_{K,\star},\mathsf U_\star)$. Passing to the limit in the pointwise inequalities for each $s\in\mathbb I$ and $\boldsymbol m,\boldsymbol m'\in\mathcal M_K$ gives
$$
\|\boldsymbol c_{k,\star}+\mathsf U_\star\boldsymbol a_k(s)\|_2^2\le P,\qquad
\|\mathsf U_\star(\boldsymbol m-\boldsymbol m')\|_2^2
\ge\lambda_\star|f(\boldsymbol m)-f(\boldsymbol m')|^2.
$$
Thus, the limiting encoder belongs to $\mathcal B_P^{\mathrm{all}}$ and attains the optimal score $K^2\lambda_\star$.

\emph{Step 4: Lipschitz equivalence and global optimality over $\mathcal E_P$.}
Suppose $f$ is Lipschitz on $\mathcal M_K$ with constant $M>0$. Because $\phi_{1,k},\phi_{2,k}$ are continuous on the compact interval $\mathbb I$, we can choose $\mathsf U=\alpha I_2$ and $\boldsymbol c_k=0$ with $\alpha>0$ small enough that $\alpha^2\max_{k,s}\|\boldsymbol a_k(s)\|_2^2\le P$. Then $|f(\boldsymbol m)-f(\boldsymbol m')|\le M\|\boldsymbol d\|_2$ implies $\Lambda(\enc;f)\ge K^2\alpha^2/M^2>0$, so $\lambda_\star>0$.
Conversely, if $\lambda_\star>0$, then \eqref{eq:general-secant-program(c)} at an optimal $G$ gives $\lambda_\star|f(\boldsymbol m)-f(\boldsymbol m')|^2\le\boldsymbol d^\T G\boldsymbol d\le\|G\|_2\|\boldsymbol d\|_2^2$, showing that $f$ is Lipschitz on $\mathcal M_K$ with constant $\sqrt{\|G\|_2/\lambda_\star}$.

Finally, because $\Lambda(\enc;f)>0$ implies identifiability in \eqref{eq:exactness}, under the assumptions of Theorem~\ref{thm:rigidity} every continuous peak-feasible encoder with a positive robustness score must take the form \eqref{eq:affine} and hence belong to $\mathcal B_P^{\mathrm{all}}$. Every encoder in $\mathcal E_P\setminus\mathcal B_P^{\mathrm{all}}$ has zero score, proving $\Lambda^\star(P;f)=K^2\lambda_\star$.

\subsection{Proof of the Gradient Bounds}\label{app:gradient-bounds}
\begin{proof}

Since $G\succ0$, the score in \eqref{eq:score} gives
$$
\frac{K^2}{\Lambda(\enc;f)}=
\sup_{\substack{\boldsymbol m,\boldsymbol m'\in\mathcal M_K\\
                 \boldsymbol m\ne\boldsymbol m'}}
\frac{|f(\boldsymbol m)-f(\boldsymbol m')|^2}
{(\boldsymbol m-\boldsymbol m')^\T G(\boldsymbol m-\boldsymbol m')}.
$$
For the upper bound in \eqref{eq:general-gradient-sandwich}, set
$\boldsymbol d=\boldsymbol m-\boldsymbol m'$. The segment
$\boldsymbol m'+t\boldsymbol d$, $0\le t\le1$, lies in $\mathcal C$.
The fundamental theorem of calculus and Cauchy--Schwarz give
\begin{align*}
|f(\boldsymbol m)-f(\boldsymbol m')|
&=\left|\int_0^1\boldsymbol w(\boldsymbol m'+t\boldsymbol d)^\T
\boldsymbol d\,dt\right|\\
&\le\int_0^1\|G^{-1/2}\boldsymbol w(\boldsymbol m'+t\boldsymbol d)\|_2
\|G^{1/2}\boldsymbol d\|_2\,dt\le\sqrt{B_{\mathcal C}(G)}\,\|\boldsymbol d\|_G.
\end{align*}
Squaring and taking the preceding supremum proves the upper bound.

For the lower bound, fix $\boldsymbol m\in\operatorname{int}\mathcal M_K$
with $\boldsymbol w(\boldsymbol m)\ne0$, and put
$\boldsymbol d=G^{-1}\boldsymbol w(\boldsymbol m)$.
The perturbed moment vector $\boldsymbol m+\varepsilon\boldsymbol d$ is
attainable for sufficiently small $\varepsilon$. Hence
$$
\frac{K^2}{\Lambda(\enc;f)}
\ge\lim_{\varepsilon\to0}
\frac{|f(\boldsymbol m+\varepsilon\boldsymbol d)-f(\boldsymbol m)|^2}
{\varepsilon^2\boldsymbol d^\T G\boldsymbol d}
=\frac{(\boldsymbol w(\boldsymbol m)^\T\boldsymbol d)^2}
{\boldsymbol d^\T G\boldsymbol d}
=\boldsymbol w(\boldsymbol m)^\T G^{-1}\boldsymbol w(\boldsymbol m).
$$
A zero gradient contributes zero. Taking the supremum over interior
moment vectors proves the lower bound. If $\mathcal M_K$ is compact, convex,
and full-dimensional, then
$\mathcal M_K=\overline{\operatorname{int}\mathcal M_K}$.
Continuity of $\boldsymbol w$ makes the interior supremum equal to
$B_{\mathcal M_K}(G)$, so choosing $\mathcal C=\mathcal M_K$
makes both bounds equal.

\end{proof}

\paragraph{Failure of the two-endpoint reduction}

The endpoint reduction in Lemma~\ref{prop:score} does not
extend to arbitrary receiver postprocessing functions,
as the following example shows. Take $K=4$, $P=2$, $\phi_1(s)=s$, $\phi_2(s)=s^2$, and
$\enc(s)=s+is^2$, so $|\enc(s)|^2\le2$ and $G=I_2$.
Set $\psi_1(m_1)=m_1-m_1^3/3$ and $\psi_2(m_2)=m_2$.
Then $\boldsymbol w=(1-m_1^2,1)^\T$, giving
$\|\boldsymbol w\|_2^2\le2$ on $|m_1|\le1$, with equality at
the interior attainable moment vector $(0,1/2)^\T$. Applying \eqref{eq:general-gradient-sandwich} with
$\mathcal C=[-1,1]\times\R$ gives
$K^2/\Lambda(\enc;f)=2$, hence $\Lambda(\enc;f)=8$. The endpoint gradients $(0,1)^\T$ would instead predict $16$.

\section{Decoding Proofs}\label{app:decoding-proofs}
\subsection{Proof of Proposition~\ref{prop:amplification}}\label{app:amplification-proof}

Setting $z=0$ in \eqref{eq:uniform} forces exact noiseless recovery, $\dec(r_\enc(\boldsymbol s))=f(\boldsymbol s)$.
For any $\boldsymbol s,\boldsymbol s'\in\mathbb I^K$, choosing the noise realization $z=r_\enc(\boldsymbol s')-r_\enc(\boldsymbol s)$ shifts $r_\enc(\boldsymbol s)$ to $r_\enc(\boldsymbol s')$, so \eqref{eq:uniform} yields
$$
|f(\boldsymbol s')-f(\boldsymbol s)|
=|\dec(r_\enc(\boldsymbol s)+z)-f(\boldsymbol s)|
\le L|z|=L|r_\enc(\boldsymbol s')-r_\enc(\boldsymbol s)|.
$$
Dividing by $|f(\boldsymbol s')-f(\boldsymbol s)|>0$ and taking the infimum in \eqref{eq:score} gives $L\ge\Lambda(\enc;f)^{-1/2}$ (with $L_\enc=+\infty$ when $\Lambda(\enc;f)=0$). When $\Lambda(\enc;f)>0$, set $L=\Lambda(\enc;f)^{-1/2}$. By \eqref{eq:score}, the noiseless decoder $\dec_0(r_\enc(\boldsymbol s)):=f(\boldsymbol s)$ is well defined and $L$-Lipschitz on $\mathcal R_\enc$, obeying $|\dec_0(r)-\dec_0(r')|\le L|r-r'|$.
McShane's theorem~\cite{mcshane1934} extends $\dec_0$ to a continuous, $L$-Lipschitz map on all of $\C$ via
$$
\dec(y):=\inf_{r'\in\mathcal R_\enc}
\{\dec_0(r')+L|y-r'|\}.
$$
Because $\dec$ agrees with $\dec_0$ on $\mathcal R_\enc$, we have $\dec(r_\enc(\boldsymbol s))=f(\boldsymbol s)$, and its $L$-Lipschitz property gives
$|\dec(r_\enc(\boldsymbol s)+z)-f(\boldsymbol s)|=|\dec(r_\enc(\boldsymbol s)+z)-\dec(r_\enc(\boldsymbol s))|\le L|z|$, attaining $L_\enc=L$.

\subsection{Proof of Theorem~\ref{thm:decoder}}\label{app:general-decoder-proof}

Under \eqref{eq:affine}, $\mathcal R_\enc$ consists of all points $\boldsymbol r'=K\boldsymbol c+K\mathsf U\boldsymbol m$ for $\boldsymbol m\in\mathcal M_K$, with $\dec_0(r')=f(\boldsymbol m)$. 
Equation~\eqref{eq:general-extension-decoder} is therefore McShane's extension from Appendix~\ref{app:amplification-proof}; compactness of $\mathcal M_K$ ensures that the minimum is attained even for singular $\mathsf U$, and Proposition~\ref{prop:amplification} gives exact noiseless recovery with optimal constant $L_\enc=\Lambda(\enc;f)^{-1/2}$. 
When $\mathbb E|z|^2=\sigma_z^2$, \eqref{eq:uniform} directly yields $\mathbb E_z[|\dec_\enc(r_\enc(\boldsymbol s)+z)-f(\boldsymbol s)|^2]\le L_\enc^2\mathbb E|z|^2=\sigma_z^2/\Lambda(\enc;f)$. 
Clipping the output to a closed interval $I_f\supseteq f(\mathbb I^K)$ cannot increase its distance to $f(\boldsymbol s)\in I_f$, preserving all bounds.

\paragraph{Convex-projection decoder}
For $\tilde{\boldsymbol m}=\boldsymbol m+K^{-1}\mathsf U^{-1}\boldsymbol z$, let $\hat{\boldsymbol m}=\Pi_{\mathcal C}^G\tilde{\boldsymbol m}$. 
Since $\boldsymbol m\in\mathcal M_K\subseteq\mathcal C$, projection optimality gives $(\tilde{\boldsymbol m}-\hat{\boldsymbol m})^\T G(\boldsymbol m-\hat{\boldsymbol m})\le0$, so
$$
\|\hat{\boldsymbol m}-\boldsymbol m\|_G^2
\le(\hat{\boldsymbol m}-\boldsymbol m)^\T G(\tilde{\boldsymbol m}-\boldsymbol m)
\le\|\hat{\boldsymbol m}-\boldsymbol m\|_G\|\tilde{\boldsymbol m}-\boldsymbol m\|_G.
$$
Thus $\|\hat{\boldsymbol m}-\boldsymbol m\|_G\le\|\tilde{\boldsymbol m}-\boldsymbol m\|_G=\|K^{-1}\mathsf U^{-1}\boldsymbol z\|_G=|z|/K$, where the last equality uses $G=\mathsf U^\T\mathsf U$. 
The gradient bound in Appendix~\ref{app:gradient-bounds} then yields
$$
|f(\hat{\boldsymbol m})-f(\boldsymbol m)|
\le\sqrt{B_{\mathcal C}(G)}\,\|\hat{\boldsymbol m}-\boldsymbol m\|_G
\le\frac{\sqrt{B_{\mathcal C}(G)}}K|z|,
$$
which attains $L_\enc$ when $B_{\mathcal C}(G)=K^2/\Lambda(\enc;f)$.

\subsection{Proof of Proposition~\ref{prop:radius-minimax}}\label{app:radius-minimax-proof}

Choose inputs $\boldsymbol s,\boldsymbol s'$ whose noiseless aggregates are at distance at most $2r_n$, and let $y_0=[r_\enc(\boldsymbol s)+r_\enc(\boldsymbol s')]/2$. 
Both inputs can produce $y_0$ under a noise realization of magnitude at most $r_n$. 
The triangle inequality implies that for any decoder $\dec$,
$$
\max\bigl\{|\dec(y_0)-f(\boldsymbol s)|,\ |\dec(y_0)-f(\boldsymbol s')|\bigr\}
\ge\tfrac12|f(\boldsymbol s)-f(\boldsymbol s')|.
$$
Maximizing over all such input pairs establishes the lower bound $\omega_\enc(2r_n)/2$; continuity and compactness ensure that this supremum is attained.

For any possible observation
$y\in\mathcal R_\enc+\{z\in\C:|z|\le r_n\}$, the set of consistent inputs $\{\boldsymbol s\in\mathbb I^K:|y-r_\enc(\boldsymbol s)|\le r_n\}$ is nonempty and compact. 
Let $f_-(y)$ and $f_+(y)$ denote its minimum and maximum desired function values, respectively. 
For any two consistent inputs $\boldsymbol s$ and $\boldsymbol s'$, the triangle inequality bounds their aggregate separation by $|r_\enc(\boldsymbol s)-r_\enc(\boldsymbol s')|\le|r_\enc(\boldsymbol s)-y|+|y-r_\enc(\boldsymbol s')|\le2r_n$. 
Consequently, the range of consistent function values has width $f_+(y)-f_-(y)\le\omega_\enc(2r_n)$. 
The midpoint decoder outputs $\bigl[f_-(y)+f_+(y)\bigr]/2$, yielding an absolute error for any consistent input $\boldsymbol s$ of at most
$$
\left|\frac{f_-(y)+f_+(y)}2-f(\boldsymbol s)\right|
\le\frac{f_+(y)-f_-(y)}2
\le\frac{\omega_\enc(2r_n)}2,
$$
thereby achieving the optimal worst-case error bound. To check measurability, note that for each $a\in\R$, the level sets $\{y:f_+(y)\ge a\}$ and
$\{y:f_-(y)\le a\}$ within the possible-observation set are
continuous images of compact input--noise sets and hence closed.
Thus, both endpoint functions and their midpoint are Borel measurable.
Extending the decoder by zero outside the compact possible-observation
set preserves measurability. This also covers $r_n=0$.

% To check measurability, take $y_j\to y$ in the possible-observation set.
% Choose maximizing inputs along a subsequence realizing
% $\limsup_j f_+(y_j)$. Compactness gives a further convergent
% subsequence; continuity makes its limit consistent with $y$.
% Thus $\limsup_j f_+(y_j)\le f_+(y)$, and similarly
% $\liminf_j f_-(y_j)\ge f_-(y)$.
% Both endpoint functions, and hence their midpoint, are Borel measurable.
% The possible-observation set is compact, so extending the decoder
% by zero elsewhere preserves measurability. This also covers $r_n=0$.

\section{Proofs of the Gaussian Estimation Limits}\label{app:statistical-converse}

 We use a weighted van Trees argument~\cite{gill1995,tsang2020} for the converse, which requires no derivatives of the decoder $\dec$.

\subsection{Proof of Theorem~\ref{thm:identifiable-minimax}}\label{app:identifiable-minimax-proof}

\emph{Step 1: Fixed-encoder achievability.}
Fix $\enc\in\mathcal B_P$. Since $\mathcal M_K=\overline{\operatorname{int}\mathcal M_K}$, both attainable coordinate intervals $\mathbb J_1,\mathbb J_2$ have positive length.
Extend each receiver postprocessing function $\psi_j$ linearly beyond $\mathbb J_j$ using its one-sided endpoint derivatives, so that $\bar\psi_j\in C^1(\R)$ has a bounded, uniformly continuous derivative. 
This defines a noise-power-independent decoder $\dec$ even when noise moves the unconstrained moment estimate $\tilde{\boldsymbol m}$ outside $\mathbb J_1\times\mathbb J_2$:
$$
 \dec(y)=\bar\psi_1(\tilde m_1)+\bar\psi_2(\tilde m_2),
 \qquad
 \tilde{\boldsymbol m}
 =K^{-1}\mathsf U^{-1}(\boldsymbol y-K\boldsymbol c).
$$
Write $z=\sigma_z z_0$ with $z_0\sim\mathcal{CN}(0,1)$, let $\boldsymbol z_0=(\Re z_0,\Im z_0)^\T$, and set $\boldsymbol\delta=\sigma_z\mathsf U^{-1}\boldsymbol z_0/K$.
For $\bar f(\boldsymbol m):=\bar\psi_1(m_1)+\bar\psi_2(m_2)$, uniform continuity of the gradient on $\mathcal M_K$ gives the first-order expansion
\begin{align}
 \bar f(\boldsymbol m+\boldsymbol\delta)-f(\boldsymbol m)
 &=\boldsymbol w(\boldsymbol m)^\T\boldsymbol\delta
   +r(\boldsymbol m,\boldsymbol\delta),\\
 \sup_{\boldsymbol m\in\mathcal M_K}
 |r(\boldsymbol m,\boldsymbol\delta)|
 &\le\|\boldsymbol\delta\|_2\,\omega(\|\boldsymbol\delta\|_2),
\end{align}
where $\omega$ is bounded and $\omega(t)\to0$ as $t\downarrow0$.
Since $\|\boldsymbol\delta\|_2\le\sigma_z\|\mathsf U^{-1}\|_2|z_0|/K$, dominated convergence implies that the normalized remainder vanishes uniformly:
$$
 \frac{\mathbb E\sup_{\boldsymbol m\in\mathcal M_K}
 |r(\boldsymbol m,\boldsymbol\delta)|^2}{\sigma_z^2}
 \le\frac{\|\mathsf U^{-1}\|_2^2}{K^2}
 \mathbb E\!\left[|z_0|^2\omega(\|\boldsymbol\delta\|_2)^2\right]
 \longrightarrow0.
$$
Because $\boldsymbol w$ is bounded on $\mathcal M_K$, Cauchy--Schwarz shows that the cross term is $o(\sigma_z^2)$ uniformly in $\boldsymbol m$:
$$
 \sup_{\boldsymbol m\in\mathcal M_K}\left|\mathbb E[
 (\boldsymbol w(\boldsymbol m)^\T\boldsymbol\delta)
 r(\boldsymbol m,\boldsymbol\delta)]\right|
 \le O(\sigma_z)\left(\mathbb E\sup_{\boldsymbol m\in\mathcal M_K}
 |r(\boldsymbol m,\boldsymbol\delta)|^2\right)^{1/2}
 =o(\sigma_z^2).
$$
Evaluating the leading quadratic term with $\mathbb E[\boldsymbol z_0\boldsymbol z_0^\T]=I_2/2$ and $\mathsf U^{-1}\mathsf U^{-\T}=G^{-1}$ yields
\begin{equation}
 \sup_{\boldsymbol m\in\mathcal M_K}
 \left|
 \frac{\mathbb E_z
 |\dec(Kc+K(um_1+vm_2)+z)
   -f(\boldsymbol m)|^2}{\sigma_z^2}
 -\frac{\boldsymbol w(\boldsymbol m)^\T
 G^{-1}\boldsymbol w(\boldsymbol m)}{2K^2}
 \right|\longrightarrow0.
\label{eq:uniform-gaussian-expansion}
\end{equation}
Taking the supremum over $\boldsymbol m\in\mathcal M_K$ establishes the achievable upper bound in \eqref{eq:fixed-minimax-limit}.

\emph{Step 2: Fixed-encoder and sequential converse.}
Let $(\enc_n,\dec_n)$ be a sequence of encoder--decoder pairs with Gram matrices $G_n\to G_0\succ0$ as $\sigma_{z,n}^2\to0$.
For any interior moment vector $\boldsymbol m_0\in\operatorname{int}\mathcal M_K$, let $\nu_r$ be the uniform distribution on an open ball $B(\boldsymbol m_0,r)\subset\operatorname{int}\mathcal M_K$. 
Since worst-case risk is at least the average risk ($R_{\max}\ge R_{\nu_r}$), the sequential average-MSE lower bound \eqref{genstat:changing-limit} proved in Appendix~\ref{genstat:information} below gives
$\liminf_n R_{\max}(\enc_n,\dec_n;\sigma_{z,n}^2)/\sigma_{z,n}^2\ge\mathcal J_{\nu_r}(G_0)$.
Letting $r\downarrow0$, continuity of $\boldsymbol w$ gives $\mathcal J_{\nu_r}(G_0)\to\boldsymbol w(\boldsymbol m_0)^\T G_0^{-1}\boldsymbol w(\boldsymbol m_0)/(2K^2)$.
Since $\mathcal M_K=\overline{\operatorname{int}\mathcal M_K}$, taking the supremum over $\boldsymbol m_0\in\operatorname{int}\mathcal M_K$ yields
\begin{equation}
 \liminf_n\frac{R_{\max}(\enc_n,\dec_n;\sigma_{z,n}^2)}{\sigma_{z,n}^2}
 \ge\frac{1}{2K^2}\max_{\boldsymbol m\in\mathcal M_K}
 \boldsymbol w(\boldsymbol m)^\T G_0^{-1}\boldsymbol w(\boldsymbol m).
 \label{eq:localized-minimax-converse}
\end{equation}
For a fixed encoder ($G_n=G$), choosing decoders $\dec_n$ within $\sigma_{z,n}^4$ of $\inf_\dec R_{\max}(\enc,\dec;\sigma_{z,n}^2)$ along a sequence realizing the lower limit proves \eqref{eq:fixed-minimax-limit}.

\emph{Step 3: Optimization over $\mathcal B_P$ and attainment.}
Now suppose $1,\phi_{1,k},\phi_{2,k}$ are linearly independent for each $k$ and $\{\boldsymbol w(\boldsymbol m):\boldsymbol m\in\operatorname{int}\mathcal M_K\}$ spans $\R^2$.
Choose $\boldsymbol m_a,\boldsymbol m_b\in\operatorname{int}\mathcal M_K$ with linearly independent gradients $\boldsymbol w(\boldsymbol m_a),\boldsymbol w(\boldsymbol m_b)$, and let $\nu$ be a regular moment distribution supported on the union of small interior balls around $\boldsymbol m_a$ and $\boldsymbol m_b$.
If $\boldsymbol v^\T Q_\nu\boldsymbol v=\mathbb E_\nu[(\boldsymbol v^\T\boldsymbol w)^2]=0$, continuity forces $\boldsymbol v^\T\boldsymbol w(\boldsymbol m_a)=\boldsymbol v^\T\boldsymbol w(\boldsymbol m_b)=0$ and hence $\boldsymbol v=0$, so $Q_\nu\succ0$.

To establish compactness of the coefficient set, we use the fact that, for each node $k$, linear independence of $1,\phi_{1,k},\phi_{2,k}$ provides three inputs $s_{k,1},s_{k,2},s_{k,3}\in\mathbb I$ such that the $3\times 3$ matrix $Z_k$ with columns $(1,\phi_{1,k}(s_{k,r}),\phi_{2,k}(s_{k,r}))^\T$ is invertible.
For $A_k=[\boldsymbol c_k\ \mathsf U]$, the columns of $A_kZ_k$ are the transmitted symbols at these three inputs, so the peak constraint \eqref{eq:peak-model} bounds all coefficients uniformly:
$$
 \|A_k\|_{\mathrm F}
 \le\|A_kZ_k\|_{\mathrm F}\|Z_k^{-1}\|_2
 \le\sqrt{3P}\,\|Z_k^{-1}\|_2.
$$
Thus, the peak-feasible coefficient set is compact when singular $\mathsf U$ is allowed.
Moreover, for sufficiently small $\varepsilon>0$, the encoder with $\mathsf U=\varepsilon I_2$ and $c_k=0$ is peak-feasible and achieves a finite value in \eqref{eq:fixed-minimax-limit}, while any $G\succ0$ satisfies
$$
 \max_{\boldsymbol m\in\mathcal M_K}
 \boldsymbol w(\boldsymbol m)^\T G^{-1}\boldsymbol w(\boldsymbol m)
 \ge\operatorname{tr}(Q_\nu G^{-1})
 \ge\frac{\lambda_{\min}(Q_\nu)}{\lambda_{\min}(G)}.
$$
Because $\lambda_{\min}(Q_\nu)>0$, the objective diverges as $\lambda_{\min}(G)\downarrow0$, excluding singular limits.
Compactness and continuity therefore guarantee an encoder $\enc_\star\in\mathcal B_P$ with $G_\star\succ0$ that attains the right-hand side of \eqref{eq:optimized-minimax-limit}. Applying \eqref{eq:uniform-gaussian-expansion} to $\enc_\star$ proves the achievable upper bound.

Finally, for the optimized lower bound, choose pairs $(\enc_n,\dec_n)$ within $\sigma_{z,n}^4$ of the minimax risk infima along a sequence $\sigma_{z,n}^2\downarrow0$ realizing the lower limit in \eqref{eq:optimized-minimax-limit}.
Their normalized risks remain bounded by the achievable upper bound obtained with $\enc_\star$, up to a term that vanishes as $n\to\infty$.
By compactness of the coefficient set, a subsequence converges to a peak-feasible limit with $G_n\to G_0\succeq0$.
If $G_0$ were singular, the singular-limit argument in Appendix~\ref{genstat:information} applied to the auxiliary law $\nu$ (where $Q_\nu\succ0$) would force $R_{\max}/\sigma_{z,n}^2\ge R_\nu/\sigma_{z,n}^2\to\infty$, a contradiction.
Hence, $G_0\succ0$, the limiting encoder $\enc_0$ lies in $\mathcal B_P$, and \eqref{eq:localized-minimax-converse} lower-bounds the limit by the coefficient of $\enc_0$, completing the proof of \eqref{eq:optimized-minimax-limit}.

\subsection{Proof of Proposition~\ref{prop:general-small-noise}}\label{genstat:information}

\emph{Step 1: Achievability.}
Fix a regular moment distribution $\pi$ and an encoder $\enc\in\mathcal B_P$ (so $\mathsf U$ is invertible and $G\succ0$). 
Since $\psi_1',\psi_2'$ are continuous on the compact intervals $\mathbb J_1,\mathbb J_2$, $f$ is $M$-Lipschitz on $\mathcal M_K$ for some $M>0$, giving $\Lambda(\enc;f)\ge K^2\lambda_{\min}(G)/M^2>0$. 
Theorem~\ref{thm:decoder} therefore provides a globally Lipschitz decoder $\dec_\enc$ that is exact on $\mathcal R_\enc$. Any fixed $L$-Lipschitz decoder $\dec$ that is exact on $\mathcal R_\enc$ achieves $\mathcal J_\pi(G)$, as follows.

Let $y_{\boldsymbol m}:=Kc+K(um_1+vm_2)$ and write $z=\sigma_z z_0$ with $z_0\sim\mathcal{CN}(0,1)$ and $\boldsymbol z_0=(\Re z_0,\Im z_0)^\T$. 
In real coordinates, the noisy observation is $\boldsymbol y_{\boldsymbol m}+\sigma_z\boldsymbol z_0=K\boldsymbol c+K\mathsf U\bigl(\boldsymbol m+\frac{\sigma_z}{K}\mathsf U^{-1}\boldsymbol z_0\bigr)$. 
For each interior moment vector $\boldsymbol m\in\Omega\subset\operatorname{int}\mathcal M_K$ and fixed $z_0$, taking $\sigma_z>0$ sufficiently small keeps the perturbed moment vector inside $\mathcal M_K$:
$$
 \boldsymbol m+\frac{\sigma_z}{K}\mathsf U^{-1}\boldsymbol z_0
 \in\mathcal M_K
 \implies
 \dec(y_{\boldsymbol m}+\sigma_z z_0)
 =f\!\left(\boldsymbol m+
 \frac{\sigma_z}{K}\mathsf U^{-1}\boldsymbol z_0\right),
$$
where the equality follows from exact noiseless recovery on $\mathcal R_\enc$. Differentiating $f$ at $\boldsymbol m$ as $\sigma_z\downarrow0$ gives
$$
 \frac{\dec(y_{\boldsymbol m}+\sigma_z z_0)-f(\boldsymbol m)}
 {\sigma_z}
 \longrightarrow
 \frac{1}{K}\boldsymbol w(\boldsymbol m)^\T
 \mathsf U^{-1}\boldsymbol z_0.
$$
Because $\dec(y_{\boldsymbol m})=f(\boldsymbol m)$ and $\dec$ is $L$-Lipschitz, the squared quotient is dominated by $|\dec(y_{\boldsymbol m}+\sigma_z z_0)-\dec(y_{\boldsymbol m})|^2/\sigma_z^2\le L^2|z_0|^2$, with $\mathbb E|z_0|^2=1$. 
Since $\pi(\Omega)=1$, $\mathbb E[\boldsymbol z_0\boldsymbol z_0^\T]=I_2/2$, and $\mathsf U^{-1}\mathsf U^{-\T}=G^{-1}$, dominated convergence yields
$$
 \lim_{\sigma_z^2\downarrow0}
 \frac{R_\pi(\enc,\dec;\sigma_z^2)}{\sigma_z^2}
 =\frac{1}{2K^2}\mathbb E_\pi\!\left[\boldsymbol w(\boldsymbol m)^\T G^{-1}\boldsymbol w(\boldsymbol m)\right]
 =\frac{\operatorname{tr}(Q_\pi G^{-1})}{2K^2}
 =\mathcal J_\pi(G).
$$
In particular, $\dec_\enc$ achieves $\mathcal J_\pi(G)$.

\emph{Step 2: Non-asymptotic information bound.}
Let $p$ be the continuously differentiable density of $\pi$ on $\Omega\subset\operatorname{int}\mathcal M_K$.
A test function $\boldsymbol\eta\in C_c^1(\Omega;\R^2)$ selects weighted moment directions, and its compact support inside $\Omega$ eliminates boundary terms when integrating by parts. Define
$$
 A(\boldsymbol\eta)=\mathbb E_\pi[\boldsymbol\eta^\T\boldsymbol w],\quad
 B_G(\boldsymbol\eta)=\mathbb E_\pi[\boldsymbol\eta^\T G\boldsymbol\eta],\quad
 B_p(\boldsymbol\eta)=\int_\Omega
 \frac{[\operatorname{div}(p\boldsymbol\eta)]^2}{p}\,d\boldsymbol m,
$$
where $\operatorname{div}(p\boldsymbol\eta)=\partial(p\eta_1)/\partial m_1+\partial(p\eta_2)/\partial m_2$.
Here, $A(\boldsymbol\eta)$ measures average function sensitivity, $K^2B_G(\boldsymbol\eta)$ measures average squared received-signal sensitivity, and $B_p(\boldsymbol\eta)$ accounts for the prior density $p$.
We show that for any Borel-measurable decoder $\dec$ and $G\succ0$,
\begin{equation}
 \frac{R_\pi(\enc,\dec;\sigma_z^2)}{\sigma_z^2}
 \ge\frac{A(\boldsymbol\eta)^2}
 {2K^2 B_G(\boldsymbol\eta)+\sigma_z^2B_p(\boldsymbol\eta)}
 \label{genstat:information-bound}
\end{equation}
whenever the denominator is positive.
To prove \eqref{genstat:information-bound} without assuming decoder regularity, first clip $\dec$ to the bounded range of $f$, which can only reduce $R_\pi$ and ensures finite second moments.
Under the Gaussian observation model, define the weighted score of the joint moment--observation density:
$$
 S_{\boldsymbol\eta}=
 \frac{\operatorname{div}(p\boldsymbol\eta)}p
 +\frac{2K}{\sigma_z^2}\boldsymbol\eta^\T\mathsf U^\T
 (\boldsymbol y-K\boldsymbol c-K\mathsf U\boldsymbol m).
$$
Conditional on $\boldsymbol m$, the noise term in $S_{\boldsymbol\eta}$ has zero mean and variance $2K^2\boldsymbol\eta^\T G\boldsymbol\eta/\sigma_z^2$.
Because $p$ is bounded away from zero and $p,\nabla p$ are bounded on $\operatorname{supp}(\boldsymbol\eta)$, integration by parts with respect to $\boldsymbol m$ yields
$$
 \mathbb E[S_{\boldsymbol\eta}^2]
 =B_p(\boldsymbol\eta)+\frac{2K^2}{\sigma_z^2}B_G(\boldsymbol\eta),\qquad
 \mathbb E[(\dec(y)-f(\boldsymbol m))S_{\boldsymbol\eta}]
 =\mathbb E_\pi[\boldsymbol\eta^\T\boldsymbol w]
 =A(\boldsymbol\eta).
$$
Applying Cauchy--Schwarz gives $A(\boldsymbol\eta)^2\le R_\pi\mathbb E[S_{\boldsymbol\eta}^2]$, which rearranges into \eqref{genstat:information-bound}.

\emph{Step 3: Variational supremum and asymptotic converse.}
For any $G\succ0$, we next show that
\begin{equation}
 \sup_{\boldsymbol\eta\in C_c^1(\Omega;\R^2),\,\boldsymbol\eta\ne0}
 \frac{A(\boldsymbol\eta)^2}{B_G(\boldsymbol\eta)}
 =\operatorname{tr}(Q_\pi G^{-1}).
 \label{genstat:variational}
\end{equation}
Indeed, Cauchy--Schwarz applied to $\boldsymbol\eta^\T\boldsymbol w=(G^{1/2}\boldsymbol\eta)^\T(G^{-1/2}\boldsymbol w)$ gives
\begin{align*}
 A(\boldsymbol\eta)^2
 &\le B_G(\boldsymbol\eta)\mathbb E_\pi[\boldsymbol w^\T G^{-1}\boldsymbol w]\\
 &=B_G(\boldsymbol\eta)\operatorname{tr}(Q_\pi G^{-1}).
\end{align*}
Conversely, if $Q_\pi\ne0$, set $\boldsymbol\eta_*=G^{-1}\boldsymbol w$.
Since $\boldsymbol w$ is bounded and continuous, multiplying $\boldsymbol\eta_*$ by compactly supported interior cutoffs and smoothing via mollifiers produces a sequence $\boldsymbol\eta_j\in C_c^1(\Omega;\R^2)$ satisfying $B_G(\boldsymbol\eta_j-\boldsymbol\eta_*)\to0$.
Writing $T:=\mathbb E_\pi[\boldsymbol w^\T G^{-1}\boldsymbol w]>0$, we have $A(\boldsymbol\eta_j)\to T$ and $B_G(\boldsymbol\eta_j)\to T$, so $A(\boldsymbol\eta_j)^2/B_G(\boldsymbol\eta_j)\to T$, establishing \eqref{genstat:variational} without requiring differentiability of $\boldsymbol w$ or finite global prior Fisher information (when $Q_\pi=0$, both sides are zero). Now consider any sequence of encoder--decoder pairs $(\enc_n,\dec_n)$ with Gram matrices $G_n\to G_0\succeq0$ as $\sigma_{z,n}^2\to0$:
\begin{itemize}
\item \emph{Nonsingular limit ($G_0\succ0$):} For each fixed nonzero $\boldsymbol\eta\in C_c^1(\Omega;\R^2)$, we have $B_{G_n}(\boldsymbol\eta)\to B_{G_0}(\boldsymbol\eta)>0$ and $\sigma_{z,n}^2B_p(\boldsymbol\eta)\to0$. Taking $\liminf_n$ in \eqref{genstat:information-bound} and then taking the supremum over $\boldsymbol\eta$ via \eqref{genstat:variational} gives
\begin{equation}
 \liminf_n\frac{R_\pi(\enc_n,\dec_n;\sigma_{z,n}^2)}{\sigma_{z,n}^2}
 \ge\mathcal J_\pi(G_0).
 \label{genstat:changing-limit}
\end{equation}
\item \emph{Singular limit ($\det G_0=0$ with $Q_\pi\succ0$):} Take a nonzero vector $\boldsymbol v\in\ker G_0$. Since $\mathbb E_\pi[(\boldsymbol v^\T\boldsymbol w)^2]=\boldsymbol v^\T Q_\pi\boldsymbol v>0$, continuity of $\boldsymbol w$ ensures that $\boldsymbol v^\T\boldsymbol w$ has a constant nonzero sign on some open ball inside $\Omega$. Choosing a nonnegative, nonzero smooth function $\chi$ supported on that ball and setting $\boldsymbol\eta=\chi\boldsymbol v$ gives $A(\boldsymbol\eta)=\mathbb E_\pi[\chi\boldsymbol v^\T\boldsymbol w]\ne0$, whereas $B_{G_n}(\boldsymbol\eta)=\mathbb E_\pi[\chi^2]\boldsymbol v^\T G_n\boldsymbol v\to0$ and $\sigma_{z,n}^2B_p(\boldsymbol\eta)\to0$. By \eqref{genstat:information-bound}, $R_\pi(\enc_n,\dec_n;\sigma_{z,n}^2)/\sigma_{z,n}^2\to\infty$.
\end{itemize}
Finally, for a fixed encoder $\enc\in\mathcal B_P$ ($G_n=G\succ0$), choosing $\sigma_{z,n}^2\downarrow0$ attaining the lower limit in \eqref{eq:fixed-bayes-limit} and decoders $\dec_n$ within $\sigma_{z,n}^4$ of $\inf_\dec R_\pi(\enc,\dec;\sigma_{z,n}^2)$ gives $\liminf_{\sigma_z^2\downarrow0}\inf_\dec R_\pi(\enc,\dec;\sigma_z^2)/\sigma_z^2\ge\mathcal J_\pi(G)$ by \eqref{genstat:changing-limit}. Combining this with Step~1 completes the proof of Proposition~\ref{prop:general-small-noise}.

\subsection{Local Sensitivity and Global Robustness}\label{genstat:counterexamples}

Take $K=4$, $\phi_1(s)=s$, $\phi_2(s)=s^2$, and the shared encoder $\enc(s)=s+10is^2$. With $P=101$, this encoder is peak-feasible and $G=\operatorname{diag}(1,100)$.
Appendix~\ref{app:moment-region-proof} describes the attainable moment set $\mathcal M_4$, which is the closure of its interior. For $0\le m_1\le1/2$, the upper boundary is $m_{2,\max}(m_1)=1-2m_1+4m_1^2$.

Using the clipping operation defined after \eqref{eq:polynomial-decoder}, define the continuous functions
\begin{align}
h(x)&=\operatorname{clip}_{[0,1]}\bigl(40(x-1/10)\bigr)
      -\operatorname{clip}_{[0,1]}\bigl(40(x-3/8)\bigr),\\
r(x)&=\operatorname{clip}_{[0,1]}\bigl(100(x-21/25)\bigr).
\end{align}
The function $h$ is a unit-height trapezoid; $r$ is a ramp clipped at one.
Choose the receiver postprocessing functions $\psi_1(m_1)=\int_0^{m_1}h(t)\,dt$ and $\psi_2(m_2)=-10\int_0^{m_2}r(t)\,dt$.

Both postprocessing functions are $C^1$, with
$\boldsymbol w(\boldsymbol m)=(h(m_1),-10r(m_2))^\T$.
For attainable moment vectors,
$h(m_1)\ne0\Rightarrow 1/10<m_1<2/5
\Rightarrow m_2\le m_{2,\max}(m_1)<21/25
\Rightarrow r(m_2)=0$.
Since $0\le h,r\le1$ and $r(1)=1$ at the attainable moment vector $(0,1)$,
$$\sup_{\mathcal M_4}\boldsymbol w^\T G^{-1}\boldsymbol w
 =\sup_{\mathcal M_4}(h(m_1)^2+r(m_2)^2)=1.$$

Both $\boldsymbol m_L=(0,1)^\T$ and $\boldsymbol m_R=(1/2,9/10)^\T$ are attainable, using inputs $(-1,-1,1,1)$ and $(1,1,\sqrt{4/5},-\sqrt{4/5})$, respectively. The difference in their desired function values and their squared metric distance are
$$
 \begin{aligned}
 f(\boldsymbol m_R)-f(\boldsymbol m_L)
 &=\int_0^{1/2}h(t)\,dt+10\int_{9/10}^1r(t)\,dt
 =\frac{11}{40}+1=\frac{51}{40},\\
 (\boldsymbol m_R-\boldsymbol m_L)^\T G
 (\boldsymbol m_R-\boldsymbol m_L)&=\left(\frac12\right)^2
 +100\left(-\frac1{10}\right)^2=\frac54.
 \end{aligned}
$$
Consequently,
$$
 \sup_{\boldsymbol m\ne\boldsymbol m'}
 \frac{[f(\boldsymbol m)-f(\boldsymbol m')]^2}
 {(\boldsymbol m-\boldsymbol m')^\T G(\boldsymbol m-\boldsymbol m')}
 \ge\frac{2601}{2000}>1.
$$
Equation~\eqref{eq:fixed-minimax-limit} gives the leading worst-case MSE coefficient $1/(2K^2)$, whereas
$\frac{1}{2\Lambda(\enc;f)}\ge\frac{2601}{4000K^2}.$
Thus, this coefficient in Theorem~\ref{thm:identifiable-minimax} can be strictly smaller than $1/[2\Lambda(\enc;f)]$ under the same noise normalization.

The strict gap also occurs within the class of receiver postprocessing functions specified in Theorem~\ref{thm:rigidity}. By polynomial approximation, choose polynomials $h_n,r_n$ converging uniformly to $h,r$ on $[-1,1]$ and $[0,1]$, respectively. Prescribe $\psi_{1,n}(m_1)=\int_0^{m_1} h_n(t)\,dt$ and $\psi_{2,n}(m_2)=-10\int_0^{m_2} r_n(t)\,dt$. Both $h$ and $r$ take the values $0$ and $1$. Approximations $h_n,r_n$ within $1/4$ are therefore nonconstant. Their integrals $\psi_{1,n},\psi_{2,n}$ are nonaffine polynomials on every open subinterval. The desired functions $f_n(\boldsymbol m)=\psi_{1,n}(m_1)+\psi_{2,n}(m_2)$ converge uniformly to $f$, and
$$
 \sup_{\mathcal M_4}\bigl(h_n(m_1)^2+r_n(m_2)^2\bigr)\longrightarrow1,
 \qquad
 \frac{[f_n(\boldsymbol m_R)-f_n(\boldsymbol m_L)]^2}{5/4}
 \longrightarrow\frac{2601}{2000}.
$$
For all sufficiently large $n$, the coefficient in
\eqref{eq:fixed-minimax-limit} remains strictly smaller than
$1/[2\Lambda(\enc;f_n)]$, although $\psi_{1,n},\psi_{2,n}$
satisfy the assumptions of Theorem~\ref{thm:rigidity}.

\section{Proofs for Section~\ref{sec:examples}}\label{app:example-proofs}

\subsection{Proof of Lemma~\ref{prop:score}}\label{app:prof:score}

The moment gradient is $\boldsymbol w(\boldsymbol m)=(-\psi'(m_1),1)^\T$.
Monotonicity of $\psi'$ and convexity of the quadratic form below in $\psi'(m_1)$ give
$$
\boldsymbol w(\boldsymbol m)^\T G^{-1}\boldsymbol w(\boldsymbol m)
\le B:=\max_{j\in\{-,+\}}\boldsymbol w_j^\T G^{-1}\boldsymbol w_j.
$$
Since $f(\boldsymbol m)=m_2-\psi(m_1)$ is defined on
$\mathcal C=\mathbb I\times\R$, the segment-integration argument in
Appendix~\ref{app:gradient-bounds} gives
$|f(\boldsymbol m)-f(\boldsymbol m')|
\le\sqrt B\|\boldsymbol m-\boldsymbol m'\|_G$.
Hence, $\Lambda(\enc;f_{\phi,\psi})\ge K^2/B$.

For sharpness, fix $m_1\in(-1,1)$. Choose distinct nearby $s_1,s_2$ averaging to $m_1$ and set the remaining inputs to $m_1$.
Strict monotonicity of $\phi'$ gives $\phi'(s_1)\ne\phi'(s_2)$, so for any moment direction $\boldsymbol d$ we can solve
$$
\frac1K\begin{pmatrix}1&1\\\phi'(s_1)&\phi'(s_2)\end{pmatrix}
 \begin{pmatrix}\delta_1\\\delta_2\end{pmatrix}=\boldsymbol d.
$$

Perturb these inputs to $s_1+h\delta_1,s_2+h\delta_2$.
For sufficiently small $h$, they remain admissible and give
$\boldsymbol m(h)=\boldsymbol m+h\boldsymbol d+o(h)$.
Choose $\boldsymbol d=G^{-1}\boldsymbol w(\boldsymbol m)$.
Then $f(\boldsymbol m(h))-f(\boldsymbol m)
=h\boldsymbol w(\boldsymbol m)^\T G^{-1}\boldsymbol w(\boldsymbol m)+o(h)$, and
$$
\frac{K^2[\boldsymbol m(h)-\boldsymbol m]^\T G[\boldsymbol m(h)-\boldsymbol m]}
{|f(\boldsymbol m(h))-f(\boldsymbol m)|^2}
\longrightarrow
\frac{K^2}{\boldsymbol w(\boldsymbol m)^\T G^{-1}\boldsymbol w(\boldsymbol m)}.
$$

Let $m_1$ approach the maximizing endpoint in \eqref{eq:exactscore}.
Choose admissible perturbations with received distance below $1/n$ and quotient within $1/n$ of its local limit.
Continuity of $\psi'$ proves equality and the vanishing-distance claim.

For $G\succ0$ and $\lambda\ge0$, the inequality
$G\succeq\lambda\boldsymbol w\boldsymbol w^\T$
is equivalent to
$\lambda\boldsymbol w^\T G^{-1}\boldsymbol w\le1$.
Thus, $\Lambda(\enc;f_{\phi,\psi})\ge K^2\lambda$ is equivalent to
$G\succeq\lambda\boldsymbol w_j\boldsymbol w_j^\T$, $j\in\{-,+\}$.
If a singular $G\succeq0$ satisfies the pairwise constraints
in \eqref{eq:general-secant-program}, then
$G+\varepsilon I_2$ also satisfies them for every
$\varepsilon>0$.
The positive-definite case followed by $\varepsilon\downarrow0$ gives the same matrix inequalities.
Conversely, for $\lambda>0$, these inequalities give
$0=\boldsymbol z^\T G\boldsymbol z\ge\lambda(\boldsymbol w_j^\T\boldsymbol z)^2$ for any $\boldsymbol z\in\ker G$.
Strict monotonicity of $\psi'$ makes $\boldsymbol w_-,\boldsymbol w_+$ linearly independent, forcing $\boldsymbol z=0$ and hence $G\succ0$.
At $\lambda=0$, both sets of constraints hold for every $G\succeq0$.

\subsection{Proof of Proposition~\ref{thm:parabola}}\label{app:quartic-proof}

Theorem~\ref{thm:rigidity} and Section~\ref{sec:examples} reduce both cases to a shared encoder $\enc$. Write
$G=\left(\begin{smallmatrix}A&C\\C&B\end{smallmatrix}\right)$,
where $A=|u|^2$, $B=|v|^2$, and $C=\Re(\overline u v)$.

\subsubsection{Quadratic Encoders}\label{app:quadratic-proof}

Encoders that violate \eqref{eq:exactness} have zero score. For identifiable encoders $\enc$,
evenness of $\psi$ and Lemma~\ref{prop:score} give
$$
 \frac{{\Lambda(\enc;f_{\chi_2,\psi})}}{K^2}
 =\frac{AB-C^2}{A+M^2B+2M|C|}.
$$
Normalize $P=1$. The parameter $b\in(1,2)$ in \eqref{eq:parabola} satisfies $M^2b(b-1)=(2-b)^2$.
Averaging \eqref{eq:peak-model} at $\pm1$ gives $A+|c+v|^2\le1$; at zero, $|c|^2\le1$.
Set $S=bA+(b-1)B$ and $\Delta_M=A+M^2B$. Then,  
\begin{align*}
 { S}+|bc+v|^2
 =b\bigl(A+|c+v|^2\bigr)+b(b-1)|c|^2\le b^2,
\end{align*}
which implies $S\le b^2$. The parameter equation then yields the certificate
\begin{align*}
 b(b-1)\Delta_M { S}-bAB
 &=(b-1)\bigl[bA-(2-b)B\bigr]^2\ge0.
\end{align*}
Combining the certificate with $S\le b^2$ gives
$$
\frac{\Lambda(\enc;f_{\chi_2,\psi})}{K^2}
=\frac{AB-C^2}{\Delta_M+2M|C|}
\le\frac{AB}{\Delta_M}
\le(b-1)S\le b^2(b-1).
$$

The proposed encoder has $A=b(2-b)$, $B=b^2$, and $C=0$, attaining equality. It satisfies
$|\enc_{\chi_2}^\star(s)|^2=1-b^2s^2(1-s^2)\le1$ for $s\in\mathbb I$.
Scaling by $\sqrt P$ and solving the parameter equation for $b\in(1,2)$ give Proposition~\ref{thm:parabola}.

\subsubsection{Quartic Encoders}\label{app:quartic-case-proof}

Equation~\eqref{eq:quarticroot} has a unique root $\rho\in(0,1/3)$: its left side increases from zero, while its right side decreases to zero.

For the quartic encoder $\enc(s)=c+us+vs^4$, averaging the peak constraints at $s$ and $-s$ gives $As^2+|c+vs^4|^2\le P$.
Set $b=\sqrt B$ and $c_0=\Re(\overline v c)/b$. The orthogonal encoder $\enc_\perp(s)=\sqrt A\,s+i(bs^4+c_0)$ satisfies
$$
|\enc_\perp(s)|^2=As^2+(bs^4+c_0)^2
\le As^2+|c+vs^4|^2\le P.
$$
Its Gram matrix retains $A,B$ and sets $C=0$, so Lemma~\ref{prop:score} gives $\Lambda(\enc_\perp;f_{\chi_4,\psi})\ge\Lambda(\enc;f_{\chi_4,\psi})$.
Normalize $P=1$ and put $\xi=s^2$. We therefore minimize the convex objective $F(A,b,c_0):=M^2/A+1/b^2$ subject to
$g_\xi(A,b,c_0):=A\xi+(b\xi^2+c_0)^2-1\le0$ for $0\le\xi\le1$, with $A,b>0$.
The proposed coefficients $(A,b,c_0)=(a_\rho^2,b_\rho,c_\rho)$, with $B=b_\rho^2$, obey the exact identity
\begin{equation}
 1-\left[A\xi+(b_\rho\xi^2+c_\rho)^2\right]
 =B(1-\xi)(\xi-\rho)^2(\xi+1+2\rho).
 \label{eq:quarticpeak}
\end{equation}
This verifies all peak constraints. Equality holds at $\xi=\rho$ and $\xi=1$.

For global optimality, retain just these two active convex constraints and set
$$
 \eta=\frac{2}{B^2(1+\rho)^3(1-\rho)},\qquad
 \zeta=\eta\frac{1+\rho}{1-3\rho}.
$$
Both multipliers are positive. Let $h_\rho=b_\rho\rho^2+c_\rho=-b_\rho(1+\rho)^2/2$ and $h_1=b_\rho+c_\rho=b_\rho(1+\rho)(1-3\rho)/2$. The three stationarity identities are
\begin{subequations}
\begin{align}
 -M^2/A^2+\eta\rho+\zeta&=0,\\
 -2/b_\rho^3+2\eta h_\rho\rho^2+2\zeta h_1&=0,\\
 2\eta h_\rho+2\zeta h_1&=0.
\end{align}
\label{eq:kkt}
\end{subequations}
The last two identities in \eqref{eq:kkt} follow by substitution. Using $A=2B\rho(1+\rho)^2$, the first reduces to \eqref{eq:quarticroot}.
Define the convex Lagrangian $\mathcal L=F+\eta g_\rho+\zeta g_1$.
Equation~\eqref{eq:kkt} gives $\nabla\mathcal L(\boldsymbol x_\star)=0$ at $\boldsymbol x_\star=(a_\rho^2,b_\rho,c_\rho)$.
For every feasible $\boldsymbol x=(A,b,c_0)$,
$$
F(\boldsymbol x)\ge\mathcal L(\boldsymbol x)
\ge\mathcal L(\boldsymbol x_\star)=F(\boldsymbol x_\star),
$$
where the last equality uses $g_\rho(\boldsymbol x_\star)=g_1(\boldsymbol x_\star)=0$.
Together with \eqref{eq:quarticpeak}, this proves global optimality. Scaling restores $P$ and gives \eqref{eq:quarticscore}.

% \subsubsection{Baseline Feasibility}\label{app:baseline-feasibility}
% The quadratic baseline in Figure~\ref{fig:curves} satisfies \eqref{eq:peak-model} by the quadratic
% peak identity with $b=1$. For the quartic baseline, setting
% $x=s^2\in[0,1]$ gives
% $$
% \frac{|\enc_{\rm base}(s)|^2}{P}
% =\frac45\left[x+\left(x^2-\frac12\right)^2\right]\le1,
% $$
% because $x\le1$ and $(x^2-1/2)^2\le1/4$.
% For $\psi(s)=s^2$ and $\psi(s)=s^4$, respectively,
% \eqref{eq:exactscore} gives the baseline scores $K^2P/5$ and
% $4K^2P/85$ reported in Table~\ref{tab:values}.

\subsection{Certified Design}\label{sec:certified}

\begin{theorem}[Certified discretization]
\label{thm:certifiedgrid}
Let $K\ge1$, $P>0$, and $\phi_{1,k},\phi_{2,k}\in C^2(\mathbb I)$
for every node $k$.
Assume that $1,\phi_{1,k},\phi_{2,k}$ are linearly independent for each $k$,
and that $f$ is continuous and nonconstant on $\mathcal M_K$.
Let $\lambda_\star$ be the optimum of the exact secant program \eqref{eq:general-secant-program}.
For each node, choose finite grids $\mathcal T_{k,h}$ containing both endpoints and three
fixed inputs whose vectors $(1,\phi_{1,k}(s),\phi_{2,k}(s))^\T$ span $\mathbb R^3$;
each grid has maximum adjacent spacing at most $h$.
Replace only the peak constraints in
\eqref{eq:general-secant-program} by those at $\mathcal T_{k,h}$, retain
every pairwise received-separation constraint, and denote an optimizer by
$(H_{1,h},\ldots,H_{K,h},{ \lambda_h})$.
Write $G_h=(H_{1,h})_{2:3,2:3}$ for its common block.
Such an optimizer exists. Define
$$
\begin{aligned}
\kappa_{0,k}&=\max_{s\in\mathbb I}
\sqrt{1+\phi_{1,k}(s)^2+\phi_{2,k}(s)^2},\\
\kappa_{j,k}&=\max_{s\in\mathbb I}
\sqrt{\phi_{1,k}^{(j)}(s)^2+\phi_{2,k}^{(j)}(s)^2}
\quad(j=1,2).
\end{aligned}
$$
Set
$$
M_h=\max_{1\le k\le K}2\|H_{k,h}\|_2
(\kappa_{1,k}^2+\kappa_{0,k}\kappa_{2,k}),\qquad
\rho_h=\frac{P}{P+M_hh^2/8}.
$$
Then $(\rho_hH_{1,h},\ldots,\rho_hH_{K,h},\rho_h{ \lambda_h})$ is feasible for
\eqref{eq:general-secant-program}, and
\begin{equation}
\rho_h{ \lambda_h}\le\lambda_\star\le { \lambda_h},
\qquad { \lambda_h}-\rho_h{ \lambda_h}=O(h^2)\quad(h\downarrow0).
\label{eq:certifiedgrid}
\end{equation}
The reconstruction in Section~\ref{sec:synthesis} yields peak-feasible
encoders $\enc$ with robustness score $\Lambda(\enc;f)\ge K^2\rho_h\lambda_h$.
Under the assumptions of Theorem~\ref{thm:rigidity}, the bounds
also certify the optimum over all continuous peak-feasible encoders.

\end{theorem}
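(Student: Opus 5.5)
The plan is to compare the discretized program with the exact secant program \eqref{eq:general-secant-program} in both directions, and to control the gap through a second-order interpolation bound on each quadratic power function
\[
q_k(s)=\boldsymbol q_k(s)^\T H_{k,h}\boldsymbol q_k(s),\qquad \boldsymbol q_k(s)=(1,\phi_{1,k}(s),\phi_{2,k}(s))^\T .
\]

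\textbf{Upper bound and existence.} The discretized program only drops peak constraints, so every point feasible for \eqref{eq:general-secant-program} is feasible for it, and $\lambda_\star\le\lambda_h$. For existence I would mimic Step~3 of Appendix~\ref{app:general-robust}, working directly with the Gram matrices and without rank reconstruction.

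Each grid $\mathcal T_{k,h}$ contains three inputs whose vectors $\boldsymbol q_k$ span $\R^3$; collect them as the columns of an invertible $Z_k$. The grid peak constraints give $\operatorname{diag}(Z_k^\T H_kZ_k)\le P$. Since $Z_k^\T H_kZ_k\succeq0$, its trace is at most $3P$, and therefore $\|H_k\|_2\le 3P\|Z_k^{-1}\|_2^2$. The bound on $\lambda$ comes from a fixed pair $\boldsymbol m,\boldsymbol m'$ with $f(\boldsymbol m)\ne f(\boldsymbol m')$, exactly as before, using $\boldsymbol d^\T G\boldsymbol d\le\|G\|_2\|\boldsymbol d\|^2$ with $\|G\|_2$ already bounded. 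The feasible set is closed, being an intersection of closed constraints, including the infinitely many secant constraints. It is also bounded and nonempty (take $H=0$, $\lambda=0$). So the maximum is attained.

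\textbf{Lower bound via scaling.} Fix $k$ and $s\in\mathbb I$, and let $s$ lie between adjacent grid points $t_0<t_1$ with $t_1-t_0\le h$. Here $q_k\in C^2$, and
\[
q_k''=2\boldsymbol q_k'^\T H\boldsymbol q_k'+2\boldsymbol q_k^\T H\boldsymbol q_k'' .
\]
Since $\boldsymbol q_k'=(0,\phi_{1,k}',\phi_{2,k}')^\T$ and $\boldsymbol q_k''=(0,\phi_{1,k}'',\phi_{2,k}'')^\T$, this gives
\[
|q_k''|\le 2\|H_{k,h}\|_2\bigl(\kappa_{1,k}^2+\kappa_{0,k}\kappa_{2,k}\bigr)\le M_h .
\]
The linear-interpolation error estimate then yields
\[
q_k(s)\le\max\{q_k(t_0),q_k(t_1)\}+M_hh^2/8\le P+M_hh^2/8 .
\]
Multiplying every $H_{k,h}$ by $\rho_h$ restores all peak constraints. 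It also preserves $H\succeq0$ and the common block, now $\rho_hG_h$. The secant constraints scale linearly, so they hold with $\rho_h\lambda_h$. This gives feasibility and hence $\rho_h\lambda_h\le\lambda_\star$.

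\textbf{Rate and consequences.} By the uniform bound on $\|H_{k,h}\|_2$ from the existence step, $M_h$ is bounded independently of $h$. Then $\lambda_h-\rho_h\lambda_h=\lambda_h(1-\rho_h)\le\lambda_h M_hh^2/(8P)$, and $\lambda_h$ is uniformly bounded, so the difference is $O(h^2)$.

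The encoder claim follows from Step~2 of Appendix~\ref{app:general-robust}. That reconstruction applied to the scaled feasible point gives an encoder in $\mathcal B_P^{\mathrm{all}}$ with $\Lambda\ge K^2\rho_h\lambda_h$. Under Theorem~\ref{thm:rigidity}, Theorem~\ref{thm:sdp} identifies $K^2\lambda_\star$ with $\Lambda^\star(P;f)$, so the bounds certify the optimum over $\mathcal E_P$.

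\textbf{Main obstacle.} I expect the delicate point to be the uniform, $h$-independent bound on $\|H_{k,h}\|_2$. It needs the fixed spanning inputs to belong to every grid; without them, $M_h$ could blow up as $h\downarrow0$ and the $O(h^2)$ rate would fail.
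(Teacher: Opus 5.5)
Your proposal is correct and follows the paper's proof essentially step for step. The common steps are the $h$-independent bound $\|H_{k,h}\|_2\le 3P\|Z_k^{-1}\|_2^2$ from the three fixed spanning grid inputs, the resulting bound on $\lambda$ with closedness and boundedness for existence, the interpolation estimate $q_{H_{k,h}}\le P+M_hh^2/8$ followed by $\rho_h$-scaling, and the rank-two reconstruction of Appendix~\ref{app:general-robust} for the encoder claim. The only cosmetic difference is that you bound $1-\rho_h$ by $M_hh^2/(8P)$ instead of using the exact value $M_hh^2/(8P+M_hh^2)$.
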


\begin{proof}

Let $Z_k$ contain the three fixed spanning vectors as columns.
The grid constraints give $Z_k^\T H_kZ_k\succeq0$ and
$\operatorname{tr}(Z_k^\T H_kZ_k)\le3P$. Hence,
$$
\|H_k\|_2\le\|Z_k^{-1}\|_2^2\|Z_k^\T H_kZ_k\|_2
\le\|Z_k^{-1}\|_2^2\operatorname{tr}(Z_k^\T H_kZ_k)
\le3P\|Z_k^{-1}\|_2^2.
$$
Let $C_H$ be the largest of these bounds.
Since $f$ is nonconstant, choose attainable $\boldsymbol m_0,\boldsymbol m_0'$ with
$a_0=|f(\boldsymbol m_0)-f(\boldsymbol m_0')|^2>0$.
For $\boldsymbol d_0=\boldsymbol m_0-\boldsymbol m_0'$, the separation constraint gives
$0\le\lambda\le\boldsymbol d_0^\T G\boldsymbol d_0/a_0
\le C_H\|\boldsymbol d_0\|_2^2/a_0$.
These bounds are independent of $h$. Both feasible sets contain zero
and are closed and bounded, so their maxima are attained.

With $\boldsymbol q_k(s)=(1,\phi_{1,k}(s),\phi_{2,k}(s))^\T$,
write $q_{H_k}(s)=\boldsymbol q_k(s)^\T H_k\boldsymbol q_k(s)$.
Differentiating twice gives
$$
\begin{aligned}
q_{H_k}''(s)
&=2\boldsymbol q_k'(s)^\T H_k\boldsymbol q_k'(s)
+2\boldsymbol q_k(s)^\T H_k\boldsymbol q_k''(s),\\
|q_{H_{k,h}}''(s)|
&\le2\|H_{k,h}\|_2(\kappa_{1,k}^2+\kappa_{0,k}\kappa_{2,k})
\le M_h.
\end{aligned}
$$
For adjacent grid inputs $t<t'$ and $s\in[t,t']$, the linear-interpolation error bound gives
\begin{align*}
q_{H_{k,h}}(s)
\le\frac{t'-s}{t'-t}q_{H_{k,h}}(t)
 +\frac{s-t}{t'-t}q_{H_{k,h}}(t')
 +\frac{M_h}{2}(s-t)(t'-s)
\le P+M_hh^2/8.
\end{align*}
Scaling $H_{k,h}$ and $\lambda_h$ by $\rho_h$ gives $q_{\rho_hH_{k,h}}(s)=\rho_hq_{H_{k,h}}(s)\le P$
and preserves separation, so $\rho_h\lambda_h\le\lambda_\star$.
Relaxing the peak constraints gives $\lambda_\star\le\lambda_h$.
The uniform bounds give
$\lambda_h(1-\rho_h)=\lambda_h M_hh^2/(8P+M_hh^2)=O(h^2)$. Finally, we note that the reconstruction in Appendix~\ref{app:general-robust} preserves
$\rho_hG_h$ without increasing power, proving the encoder claim.

\end{proof}

Peak rescaling by $\rho_h$ does not certify omitted received-separation constraints in \eqref{eq:general-secant-program}.
For identical local preprocessing functions and a common grid, replacing
all $H_{k,h}$ by $H_{1,h}$ preserves feasibility and the objective.

\subsection{Verified Primal and Dual Bounds}\label{app:certificate-witness-bounds}

Let $\lambda\ge0$ and let $(H_1,\ldots,H_K,\lambda)$ satisfy the positive-semidefinite,
common-block, and pairwise received-separation constraints in \eqref{eq:general-secant-program}. Suppose $b_h$ bounds
all grid values of $q_{H_k}$, and $M$ bounds $|q_{H_k}''|$ on
$\mathbb I$ for every node. For $b_h+Mh^2/8>0$, scaling the
candidate gives the certified lower bound
$$
\lambda_{\mathrm{lo}}=\frac{P\lambda}{b_h+Mh^2/8}\le\lambda_\star.
$$
For a finite dual certificate, choose pairs of attainable moment vectors
$(\boldsymbol m_r,\boldsymbol m_r^{\prime})$, write
$\boldsymbol d_r=\boldsymbol m_r-\boldsymbol m_r^{\prime}$ and
$a_r=(f(\boldsymbol m_r)-f(\boldsymbol m_r^{\prime}))^2$, and choose
nonnegative weights $\alpha_{k,i},\beta_r$ and symmetric $2\times2$ matrices
$W_k$ satisfying
\begin{align}
\nonumber
Q_k=\sum_i\alpha_{k,i}
\begin{pmatrix}1\\\phi_{1,k}(t_{k,i})\\\phi_{2,k}(t_{k,i})\end{pmatrix}
\begin{pmatrix}1&\phi_{1,k}(t_{k,i})&\phi_{2,k}(t_{k,i})\end{pmatrix}-\operatorname{diag}(0,W_k)&\succeq0
\quad(k=1,\ldots,K),\\
\sum_{k=1}^K W_k=\sum_r\beta_r\boldsymbol d_r\boldsymbol d_r^\T,
\qquad \sum_r\beta_r a_r\ge1,
\qquad t_{k,i}\in\mathbb I.&
\label{eq:general-dualwitness}
\end{align}
Then ${ \lambda_{\mathrm{up}}}=P\sum_{k,i}\alpha_{k,i}$ is a certified upper bound.
The absolute gap is at most $\lambda_{\mathrm{up}}-\lambda_{\mathrm{lo}}$.
If $\lambda_{\mathrm{lo}}>0$, the rescaled encoders' relative loss from optimal robustness is at most
$(\lambda_{\mathrm{up}}-\lambda_{\mathrm{lo}})/\lambda_{\mathrm{up}}$.
The sufficient conditions \eqref{eq:general-dualwitness} guarantee neither dual attainment by a finite certificate nor arbitrarily small gaps from sampling pairs of moment vectors.

\begin{proof}[Proof of the certificate bounds]

The same interpolation argument gives $q_{H_k}(s)\le b_h+Mh^2/8$.
Scaling all $H_k$ and $\lambda$ by $P/(b_h+Mh^2/8)$ therefore
proves $\lambda_{\mathrm{lo}}\le\lambda_\star$
(Appendix~\ref{app:general-robust} reconstructs the corresponding encoders).  For any feasible $(H_1,\ldots,H_K,\lambda)$, \eqref{eq:general-dualwitness} gives
$\operatorname{tr}(Q_kH_k)
=\sum_i\alpha_{k,i}q_{H_k}(t_{k,i})-\operatorname{tr}(W_kG)\ge0$,
since $Q_k,H_k\succeq0$.
The peak and separation constraints, common block, and normalization
$\sum_r\beta_ra_r\ge1$ now give
\begin{align}
P\sum_{k,i}\alpha_{k,i}
\ge\sum_{k,i}\alpha_{k,i}q_{H_k}(t_{k,i})\ge\sum_k\operatorname{tr}(W_kG)
 =\sum_r\beta_r\boldsymbol d_r^\T G\boldsymbol d_r\ge\lambda\sum_r\beta_r a_r\ge\lambda.
\end{align}
Taking the optimum gives $\lambda_\star\le\lambda_{\mathrm{up}}$.
For $\lambda_{\mathrm{lo}}>0$,
$1-\lambda_{\mathrm{lo}}/\lambda_\star
\le1-\lambda_{\mathrm{lo}}/\lambda_{\mathrm{up}}$,
proving the relative-loss bound.

\end{proof}

The endpoint reduction in Lemma~\ref{prop:score} gives a finite
dual certificate for the examples. Choose finitely many inputs
$t_i\in\mathbb I$, weights $\alpha_i\ge0$, and matrices
$W_-,W_+\in\mathbb S_+^2$ satisfying
\begin{equation}
\begin{aligned}
Q&=\sum_i\alpha_i(1,t_i,\phi(t_i))^\T(1,t_i,\phi(t_i))
-\operatorname{diag}(0,W_-+W_+)\succeq0,\\
&\boldsymbol w_-^\T W_-\boldsymbol w_-
+\boldsymbol w_+^\T W_+\boldsymbol w_+\ge1.
\end{aligned}
\label{eq:dualwitness}
\end{equation}
Then $\lambda_{\mathrm{up}}=P\sum_i\alpha_i$ is an upper bound on $\lambda_\star$.
The gap bounds based on $\lambda_{\mathrm{lo}}$ and $\lambda_{\mathrm{up}}$ in Appendix~\ref{app:certificate-witness-bounds} apply.
Theorem~\ref{thm:certifiedgrid} establishes convergence for
exact grid solutions; numerical certification also requires
the primal and dual verification errors to decrease. 
After the shared-encoder reduction in
Section~\ref{sec:examples}, we apply the
preceding trace argument to the endpoint constraints
$G\succeq\lambda\boldsymbol w_j\boldsymbol w_j^\T$,
$j\in\{-,+\}$, from Lemma~\ref{prop:score}.
Together with \eqref{eq:dualwitness}, this gives
$\lambda_\star\le P\sum_i\alpha_i$.

\subsection{Projection Decoder in the Examples}\label{app:strip-decoder-proof}

Under the assumptions of Lemma~\ref{prop:score}, we specialize
the projection decoder in Section~\ref{sec:decoding} to
$\mathcal C=[-1,1]\times\R$. For the orthogonal polynomial
encoders, $G\succ0$ and $G_{12}=0$, so the projection objective
separates as
$$
\|\boldsymbol m-\tilde{\boldsymbol m}\|_G^2
=G_{11}(m_1-\tilde m_1)^2
+G_{22}(m_2-\tilde m_2)^2.
$$
We minimize this objective over $\mathcal C$ by clipping
$\tilde m_1$ to $[-1,1]$ and leaving $\tilde m_2$ unchanged,
which gives \eqref{eq:polynomial-decoder}. Next, we verify that this decoder satisfies the uniform error
guarantee \eqref{eq:uniform} with the optimal
noise-amplification constant $L_\enc$.
Lemma~\ref{prop:score} gives
$B_{\mathcal C}(G)=K^2/\Lambda(\enc;f)$.
Applying the projection bound in
Appendix~\ref{app:general-decoder-proof}, we obtain
$$
\left|
\dec_{\mathrm{strip}}(r_\enc(\boldsymbol s)+z)
-f(\boldsymbol s)
\right|
\le
\frac{|z|}{\sqrt{\Lambda(\enc;f)}}
$$
for every $\boldsymbol s\in\mathbb I^K$ and $z\in\C$.
Thus, the decoder attains
$L_\enc=\Lambda(\enc;f)^{-1/2}$, which
Proposition~\ref{prop:amplification} establishes as the
smallest achievable constant for the fixed encoder collection.

\subsection{Moment Set and Output Range}\label{app:moment-region-proof}

To justify output clipping for the polynomial examples, consider
$\phi(s)=\psi(s)=s^p$, with $p\in\{2,4\}$. Applying Jensen's
inequality and using $|s_k|\le1$, we obtain
$m_1^p\le m_2\le1$, hence $0\le m_2-m_1^p\le1$.
Clipping either estimated function value to $[0,1]$
therefore cannot increase its absolute error. We next characterize the attainable moment set for the variance
example ($p=2$) with $K=4$, as shown in
Figure~\ref{fig:decoder-geometry}. For a fixed $m_1\in[-1,1]$,
equal inputs $s_k=m_1$ attain the lower boundary $m_2=m_1^2$.
To determine the upper boundary, we maximize the convex function
$\sum_k s_k^2/4$ over
$\{\boldsymbol s\in\mathbb I^4:\sum_k s_k=4m_1\}$.
A maximum occurs at a vertex of this set.

At most one coordinate of a vertex lies in $(-1,1)$.
Otherwise, perturbing two interior coordinates in opposite
directions gives distinct feasible vectors
$\boldsymbol s\pm\varepsilon(\boldsymbol e_i-\boldsymbol e_j)$
for sufficiently small $\varepsilon>0$. Their midpoint is
$\boldsymbol s$, contradicting the vertex property.
Here, $\boldsymbol e_i$ denotes the $i$th coordinate vector.
The remaining three coordinates belong to $\{-1,1\}$ and
sum to $j\in\{-3,-1,1,3\}$, giving the upper boundary
$$
m_{2,\max}(m_1)=\frac{3+(4m_1-j)^2}{4},
\qquad |4m_1-j|\le1.
$$
The line segment joining the minimizing and maximizing input
vectors remains feasible and preserves $m_1$. By continuity
of $m_2$ along this segment, every value between the two
boundaries is attainable.

\bibliographystyle{IEEEtran}
\bibliography{bibliography}

\end{document}